\newtheorem{theorem}{Theorem}[section]
\newtheorem{lemma}[theorem]{Lemma}
\newtheorem{proposition}[theorem]{Proposition}
\newtheorem{corollary}[theorem]{Corollary}
\theoremstyle{definition}
\newtheorem{definition}[theorem]{Definition}
\newtheorem*{remark}{Remark}
\newcommand{\cH}{\mathcal{H}}
\newcommand{\cD}{\mathcal{D}}
\newcommand{\cP}{\mathcal{P}}
\newcommand{\cB}{\mathcal{B}}
\newcommand{\cO}{\mathcal{O}}
\newcommand{\cU}{\mathcal{U}}
\newcommand{\cC}{\mathcal{C}}
\newcommand{\cE}{\mathcal{E}}
\newcommand{\cV}{\mathcal{V}}
\newcommand{\cX}{\mathcal{X}}
\newcommand{\sk}{\mathfrak{s}}
\newcommand{\kE}{\mathfrak{E}}
\newcommand{\kD}{\mathfrak{D}}
\newcommand{\cHt}{\widetilde{\cH}}
\newcommand{\cDs}{\cD_\leq}
\newcommand{\uD}{\underline{D}}
\newcommand{\trho}{\tilde{\rho}}
\newcommand{\tk}{\tilde{k}}
\newcommand{\tq}{\widetilde{Q}}
\newcommand{\brho}{\bar{\rho}}
\newcommand{\hrho}{\hat{\rho}}
\newcommand{\hsigma}{\hat{\sigma}}
\newcommand{\homega}{\hat{\omega}}
\newcommand{\hI}{\hat{I}}
\newcommand{\hLambda}{\hat{\Lambda}}
\newcommand{\hpsi}{\hat{\psi}}
\newcommand{\one}{\mathds{1}}
\newcommand{\eps}{\varepsilon}
\newcommand{\X}{\rangle\langle}
\newcommand{\PP}{\mathbb{P}}
\DeclareMathOperator{\tr}{Tr}
\DeclareMathOperator{\id}{id}
\DeclareMathOperator{\im}{Im}
\DeclareMathOperator{\dom}{Dom}
\DeclareMathOperator{\Be}{\mathcal{B}^\varepsilon}
\newcommand{\Dmine}{D^\varepsilon_\text{\normalfont min}}
\newcommand{\Dmax}{D_\text{\normalfont max}}
\newcommand{\Dmaxe}{D^\varepsilon_\text{\normalfont max}}
\newcommand{\uds}{\underline{D}_s^\eps}
\newcommand{\ods}{\overline{D}_s^\eps}
\newcommand{\dsc}{D_s^\eps}
\newcommand{\Du}{\underline{D}}
\newcommand{\oD}{\overline{D}}
\newcommand{\uH}{\underline{H}}
\newcommand{\oH}{\overline{H}}
\newcommand{\uI}{\underline{I}}
\newcommand{\oI}{\overline{I}}
\newcommand{\oQ}{{\overline{Q}}}
\newcommand{\uhs}{\underline{H}_s^\eps}
\newcommand{\ohs}{\overline{H}_s^\eps}
\newcommand{\uis}{\underline{I}_s^\eps}
\newcommand{\ois}{\overline{I}_s^\eps}
\newcommand{\sumi}{\sum\nolimits}
\DeclareMathOperator{\supp}{supp}
\newcommand{\oW}{{\overline{W}}}
\newcommand{\invP}[1]{\Phi^{-1}\left(#1\right)}
\newcommand{\uS}{\underline{S}}
\newcommand{\oS}{\overline{S}}
\newcommand{\hcH}{\widehat{\cH}}
\newcommand{\nin}{n\in\mathbb{N}}
\newcommand{\cF}{\mathcal{F}}
\newcommand{\ba}[1]{\begin{array}{#1}}
\newcommand{\ea}{\end{array}}
\newcommand{\be}{\begin{equation}}
\newcommand{\ee}{\end{equation}}
\newcommand{\bea}{\begin{eqnarray}}
\newcommand{\eea}{\end{eqnarray}}
\newcommand{\beann}{\begin{eqnarray*}}
\newcommand{\eeann}{\end{eqnarray*}}
\def\reff#1{(\ref{#1})}
\DeclareMathOperator*{\argmin}{arg\,min}
\begin{document}
\title{\textbf{Second-order asymptotics for source coding, dense coding and pure-state entanglement conversions}}
\author{Nilanjana Datta and Felix Leditzky\\[0.25cm]
\textit{\small Statistical Laboratory, Centre for Mathematical Sciences, University of Cambridge,}\\
\textit{\small Wilberforce Road, Cambridge CB3 0WA, United Kingdom} }
\maketitle

\begin{abstract}
We introduce two variants of the information spectrum relative entropy defined by Tomamichel and Hayashi~\cite{TH13} which have the particular advantage of satisfying the data-processing inequality, i.e.~monotonicity under quantum operations. 
This property allows us to obtain one-shot bounds for various information-processing tasksin terms of these quantities. 
Moreover, these relative entropies have a second order asymptotic expansion, which in turn yields tight second order asymptotics for optimal rates of these tasks in the i.i.d.~setting.
The tasks studied in this paper are fixed-length quantum source coding, noisy dense coding, entanglement concentration, pure-state entanglement dilution, and transmission of information through a classical-quantum channel. 
In the latter case, we retrieve the second order asymptotics obtained by Tomamichel and Tan \cite{TT13}. 
Our results also yield the known second order asymptotics of fixed-length classical source coding derived by Hayashi \cite{Hay08}. 
The second order asymptotics of entanglement concentration and dilution provide a refinement of the {\em{inefficiency}} of these protocols - a quantity which, in the case of entanglement dilution, was studied by Harrow and Lo~\cite{HL04}. 
We prove how the discrepancy between the optimal rates of these two processes in the second order implies the irreversibility of entanglement concentration established by Kumagai and Hayashi \cite{KH13}.
In addition, the spectral divergence rates of the Information Spectrum Approach (ISA) can be retrieved from our relative entropies in the asymptotic limit.
This enables us to directly obtain the more general results of the ISA from our one-shot bounds. 
\end{abstract}

\section{Introduction}
Optimal rates of information-processing tasks such as storage and transmission of information, and manipulation of 
entanglement, are of fundamental importance in Information Theory. These rates were originally evaluated in the
so-called asymptotic, i.i.d.\footnote{Here, i.i.d.~is the standard acronym for ``independent and identically distributed''.}~setting, in which it is assumed that the underlying resources 
(sources, channels or entangled states) employed in the tasks are available for asymptotically many uses,
and that there are no correlations between their successive uses. The rates in this scenario are given in terms
of entropic quantities obtainable from the relative entropy. It is, however, unrealistic to assume the 
availability of infinitely many copies of the required resources. In practice, we have finite resources and hence
a fundamental problem of both theoretical and practical interest is to determine how quickly the behaviour of a finite system 
approaches that of its asymptotic limit. 

The first step in this direction, from the standpoint of Information Theory, is to determine the second order asymptotics
of optimal rates.\footnote{The precise meaning of the phrase ``second order asymptotics'' is elucidated in the following
paragraph.} Interest in this was initiated in the classical realm by Strassen~\cite{Str62}, who evaluated the second order asymptotics 
for hypothesis testing and channel coding.  In the last decade there has been a renewal of interest in the evaluation of second order asymptotics for other classical information theoretic tasks (see e.g.~\cite{Hay08,Hay09,KH13a} and references therein) and, more recently, even in third-order asymptotics~\cite{KV13a}. Moreover, the recent papers by Tomamichel and Hayashi~\cite{TH13} and Li~\cite{Li14} have introduced the study of second order asymptotics in Quantum Information Theory as well. The achievability parts of the second order asymptotics for the tasks studied in \cite{TH13,Li14} were later also obtained by Beigi and Gohari \cite{BG13} via the collision relative entropy.

Let us explain what exactly is meant by the phrase ``second order asymptotics''. Consider the familiar task of
fixed-length quantum source coding. Schumacher~\cite{Sch95} proved that for a memoryless, quantum information source characterized 
by a state $\rho$, the optimal rate of reliable data compression is given by its von Neumann entropy, $S(\rho)$. The criterion
of reliability is that the error
incurred in the compression-decompression scheme vanishes in the asymptotic limit, $n \to \infty$, where $n$ denotes the 
number of copies (or uses) of the source. Let us now consider instead a finite number ($n$) of copies of the source and require
that the error incurred in compressing its state $\rho^{\otimes n}$ is at most $\eps$ (for some $0<\eps <1$). Suppose $m^{n,\eps}(\rho)$
denotes the compression length, i.e.~the minimum of the logarithm of the dimension of the compressed Hilbert space in this case. This quantity can be expanded as follows:
\be\label{2nd}
m^{n,\eps}(\rho) = a n + b \sqrt{n} + \cO(\log n)
\ee
Here, the coefficient $a$ of the leading term constitutes the first order asymptotics of the compression length. As expected, it turns out to be the  the optimal rate in the asymptotic limit, i.e.~the von Neumann entropy of the source. The second order asymptotics is given by the coefficient $b$, and is a function of both the error threshold $\eps$ and the state $\rho$. Determining the second order asymptotics comprises the evaluation of the coefficient $b$.

\Cref{thm:sc-second-order}(i) of Section~\ref{sec:information-tasks} gives an explicit expression for the coefficient $b$ in the case of fixed-length (visible) quantum source coding. In Figure~\ref{fig:rate} we plot $(an + b\sqrt{n})/n$ against $n$ for a memoryless quantum information source which emits the (mutually non-orthogonal) pure states $|0\rangle$ and $|+\rangle$ with equal probability, for three different values of the error threshold $\eps$. This exhibits how the rate of data compression converges to its asymptotically optimal value.

\begin{figure}[ht]
\centering
\includegraphics[width=0.9\textwidth]{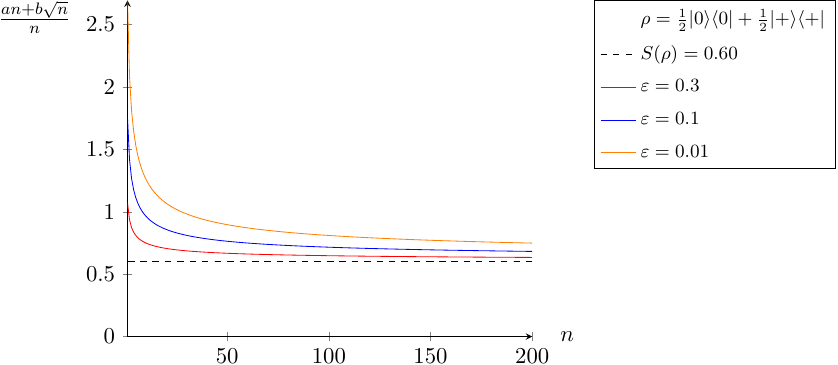}
\caption{Plot of $\frac{an+b\sqrt{n}}{n}$ for the source $\rho=\frac{1}{2}|0\rangle\langle 0|+\frac{1}{2}|+\rangle\langle +|$.}
\label{fig:rate}
\end{figure}

In this paper we study the second order asymptotics of various information-processing tasks: fixed-length quantum source coding, 
noisy dense coding, entanglement concentration, pure-state entanglement dilution, and transmission of information through a 
classical-quantum channel. For each task we consider the $n$-copy case, where $n$ denotes the number of copies
of the relevant resource (source, entangled state or channel) employed in the protocol, $\eps$ denotes the error threshold, and 
$m^{n,\eps}(\rho)$ denotes the characteristic quantity of the protocol (e.g.~the minimum compression length in the case of source coding). We arrive at an expression of the form \reff{2nd} for $m^{n,\eps}(\rho)$  -- which in this case is a quantity from which the optimal asymptotic rate $R$ 
of the protocol is obtained through the relation 
\begin{align*}
R = \lim_{\eps \to 0} \lim_{n \to \infty} \frac{1}{n} m^{n,\eps}(\rho).
\end{align*}
For each of the tasks studied, the coefficient 
$a$ turns out to be the entropic quantity characterizing $R$. Moreover, the coefficient $b$ is proportional to $\Phi^{-1}(\eps)$,
where $\Phi^{-1}$ denotes the inverse of the cumulative distribution function of the standard normal distribution,
and the constant of proportionality depends on the underlying resource (the latter is often called the {\em{dispersion}}
or {\em{information variance}}). This form of $b$ is a feature of all results on second order asymptotics and stems
from the Berry-Esseen theorem~\cite{Fel71} -- a refinement of the central limit theorem which takes into account
the rate of convergence of a distribution to a standard normal distribution. 

Two mathematical quantities play key roles in the derivation of our results. These are variants of the information spectrum
relative entropy defined by Tomamichel and Hayashi~\cite{TH13}, but have the particular advantage of satisfying the
data-processing inequality. For a state $\rho$ and a positive semi-definite operator $\sigma$, we denote them as
$\uD_s^\eps(\rho||\sigma)$ and $\oD_s^\eps(\rho||\sigma)$ (where $0<\eps<1$) and refer to them simply as {\em{information spectrum
relative entropies}}.\footnote{ To avoid confusion, we denote the information spectrum
relative entropy defined in~\cite{TH13} by $D_s^\eps(\rho||\sigma)$.} These notations and nomenclatures stem from the fact that for any arbitrary sequence of states
${\hrho}\coloneqq \{ \rho_n\}_{n\in\mathbb{N}}$ and positive semi-definite operators ${\hsigma}\coloneqq \{ \sigma_n\}_{n\in\mathbb{N}}$
such that $\rho_n$ and $\sigma_n$ are defined on the Hilbert space $\cH^{\otimes n}$ for each $n\in\mathbb{N}$, the 
following relations hold:
\be\label{relinf} \lim\limits_{\eps\rightarrow 0}\liminf\limits_{n\rightarrow\infty}\dfrac{1}{n}\uds(\rho_n\|\omega_n) = \Du(\hrho\|\homega)\quad\text{and}\quad
		\lim\limits_{\eps\rightarrow 0}\limsup\limits_{n\rightarrow\infty}\dfrac{1}{n}\ods(\rho_n\|\omega_n) = \oD(\hrho\|\hsigma),
\ee
where $ \Du(\hrho\|\homega)$ and $\oD(\hrho\|\hsigma)$ are the inf- and sup- spectral divergence rates of the so-called
quantum Information Spectrum Approach (ISA) (see \Cref{def:spectral-divergence-rates}). The ISA provides a unifying mathematical framework for obtaining asymptotic rate formulae for various different tasks in information theory. The power of the approach
lies in the fact that it does not rely on any particular structure or property of the resources used in the tasks.
It was introduced by Han and Verdu~\cite{HV93} in classical Information Theory, and generalized to the quantum setting
by Hayashi, Nagaoka and Ogawa~\cite{Hay06,HN03,NH07,ON00}.

The information spectrum relative entropies, $\uD_s^\eps(\rho||\sigma)$ and $\oD_s^\eps(\rho||\sigma)$, can also be related to other relative entropies which arise in one-shot information theory (see \cite{Ren05,KRS09,Dat09,Tom12,DH11,MW12} and references therein), e.g.~the hypothesis testing relative entropy $D_H^\eps(\rho||\sigma)$~\cite{WR12}, and the smooth max-relative entropy $D_{\max}^\eps(\rho||\sigma)$~\cite{Dat09}. In fact, one can prove that all these relative entropies are equivalent in the sense that upper and lower bounds to any one of them can be obtained in terms of any one of the others, modulo terms which depend only on the (smoothing) parameter $\eps$. These equivalences prove very useful. In particular, the bounds on the information spectrum relative entropy in terms of $D_H^\eps(\rho||\sigma)$ directly yield second order asymptotic expansions for $\uD_s^\eps(\rho||\sigma)$ and $\oD_s^\eps(\rho||\sigma)$ via the second order asymptotic expansion for $D_H^\eps(\rho||\sigma)$, which has been derived by Tomamichel and Hayashi in~\cite{TH13}.

In addition, as in the case of the usual relative entropy, one can derive other entropic quantities, namely, 
entropy, conditional entropy and mutual information, from the information spectrum
relative entropies. For each of the information-processing tasks considered in this paper, we obtain one-shot bounds in terms of 
quantities derived from the information spectrum relative entropies. The second order expansions
of these quantities then directly yield tight second order asymptotics for optimal rates of the corresponding tasks 
in the i.i.d.~setting.

Finally, the relations \reff{relinf} enable us to directly obtain the more general results of the ISA, for each of the tasks
considered, from our one-shot bounds. For example, our bounds for one-shot fixed-length source coding yields the optimal
data compression limit for a general (i.e.~not necessarily memoryless) quantum information source.

\section{Overview of results}
Here we summarize our main contributions and give pointers to the relevant theorems.
\begin{itemize}
\item{We define the information spectrum relative entropies in Definition \ref{def:information-spectrum-rel-ent} as a variant of a previously introduced relative entropy \cite{TH13}, and furthermore define quantities derived from them in Definition \ref{def:children-entropies}. These quantities all depend on a parameter $0<\eps <1$.}
\item{We prove the data-processing inequalities and other properties of the information spectrum relative entropies in \Cref{prop:D_s-properties}. We also show in \Cref{thm:D_s-bounds} that the information spectrum relative entropies are equivalent to the hypothesis testing relative entropy and the smooth max-relative entropy, in the sense that these quantities are bounded by each other.}
\item{Using their equivalences with the hypothesis testing relative entropy, and the second order asymptotic expansion for the latter~\cite{TH13}, we obtain second order asymptotic expansions for the information spectrum relative entropies in \Cref{thm:D_s-asymptotics}.}
\item{In \Cref{prop:limit-inf-spectrum} we prove that the information spectrum relative entropies reduce to the spectral divergence rates of the ISA in the asymptotic limit, when the parameter $\eps$ is taken to zero.}
\item{We obtain one-shot bounds for the following information-processing tasks in terms of quantities derived from the
information spectrum relative entropies. In each case the parameter $\eps$ plays the role of the error threshold allowed in the protocol.
\begin{enumerate}
\item Quantum fixed-length source coding [\Cref{thm:source-coding-one-shot}].
\item Noisy dense coding [\Cref{thm:dense-coding-one-shot}].
\item Entanglement concentration [\Cref{thm_ec}].
\item Pure-state entanglement dilution [\Cref{thm_entdil}].
\item Classical-quantum channel coding [\Cref{thm:cq-one-shot}].
\end{enumerate}}
\item We obtain second order asymptotic expansions for the optimal rates of the
above tasks in the i.i.d.~setting in Theorems~\ref{thm:sc-second-order}, \ref{thm:dc-second-order-expansion}, \ref{thm:ec-second-order}, \ref{thm:ed-second-order} and \Cref{cq-second-order-asymptotics} respectively.

In particular, \Cref{thm:sc-second-order}(i) gives the second order expansion for fixed-length visible quantum source coding. We also obtain asymptotic upper and lower bounds on the minimal compression length in the blind setting, stated in \Cref{thm:sc-second-order}(ii). The distinction between visible and the blind source coding is elaborated in Section \ref{sec:blind-vs-visible}.
\item{Even though the leading order terms
for the optimal rates for entanglement concentration and dilution are identical (and given by the entropy of entanglement), there is a difference in their second order terms. Explicit evaluation of these terms lead to a refinement of the {\em{inefficiency}} of these protocols. In the case of entanglement dilution, the latter quantity (studied by Harrow and Lo~\cite{HL04}) was introduced as a measure of the amount of entanglement wasted (or lost) in the dilution process. More precisely, in~\cite{HL04} it was proved that the number of ebits needed to create $n$ copies of a desired bipartite pure state $\psi_{AB}$ with entropy of entanglement $E$ was of the form $E n + \Omega(\sqrt{n})$. We prove that the number of ebits can, in fact, be expressed in the form $En + b\sqrt{n} + \cO(\log n)$, and we evaluate the coefficient $b$ explicitly. 

We also show how the irreversibility of entanglement concentration, established by Kumagai and Hayashi \cite{KH13}, can be proved using the discrepancy between the asymptotic expansions for distillable entanglement and entanglement
cost in the second order ($\sqrt{n}$).}
\item Finally, in \Cref{prop:ISA-recovery}, we recover the known expressions for optimal rates in the case of arbitrary resources as obtained by the Information Spectrum Approach.
\end{itemize}

The paper is organized as follows. In the next section, we introduce necessary notation and
definitions. The
rest of the paper proceeds in the order of the results mentioned above. We end with a
conclusion that summarizes our results and points to open questions for future research.

\section{Mathematical preliminaries}
For a Hilbert space $\cH$, let $\cB(\cH)$ denote the algebra of linear operators acting on $\cH$, and let  $\cP(\cH)$ denote
the set of positive semi-definite operators on $\cH$. Further, let $\cD(\cH)\coloneqq\lbrace\rho\in\cP(\cH)\mid \tr\rho=1\rbrace$ and $\cDs(\cH)\coloneqq \lbrace\rho\in\cP(\cH)\mid \tr\rho\leq 1\rbrace$ denote the set of 
states (density matrices) and subnormalized states on $\cH$ respectively. For a state $\rho\in\cD(\cH)$, the von Neumann entropy $S(\rho)$ is defined as $S(\rho)= - \tr\rho\log\rho$. Here and henceforth, all logarithms are taken to base $2$. Unless otherwise stated, we
assume all Hilbert spaces to be finite-dimensional. Let $\one\in\cP(\cH)$ denote the identity operator on $\cH$, and $\id\colon\cB(\cH)\rightarrow\cB(\cH)$ the identity map on operators on $\cH$. For a pure state $|\psi\rangle$, we denote the corresponding projector by $\psi\equiv|\psi\X\psi|$. For a completely positive, trace-preserving (CPTP) map $\Lambda\colon\cD(\cH_A)\rightarrow\cD(\cH_{A'})$, we also use the shorthand notation $\Lambda^{A\rightarrow A'}$.

For self-adjoint operators $A, B \in \cB(\cH)$, let $\lbrace A \geq B\rbrace$ denote the spectral projection of the difference operator $A-B$ corresponding to the interval $[0, +\infty)$; the spectral projections $\lbrace A > B\rbrace$, $\lbrace A \leq B\rbrace$ and $\lbrace A < B\rbrace$ are defined in a similar way. We also define $A_+= PAP$ where $P\coloneqq \lbrace A\geq 0\rbrace$ for any self-adjoint operator $A\in\cB(\cH)$. The following lemmas are used in our proofs:

\begin{lemma}\label{lem:tr-projector}
	{\normalfont \cite{ON00}} Let $A,B\geq 0$ and $0\leq P \leq\one$ be an arbitrary operator, then
	\begin{align*}
	\tr[\lbrace A\geq B\rbrace(A-B)]\geq\tr [P(A-B)],
	\end{align*}
	 and the same assertion holds for $\lbrace A>B\rbrace$.
\end{lemma}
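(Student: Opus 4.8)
The plan is to prove $\tr[\{A \geq B\}(A-B)] \geq \tr[P(A-B)]$ for any operator $P$ with $0 \leq P \leq \one$, by exploiting the fact that $\Pi := \{A \geq B\}$ is precisely the projector onto the span of eigenvectors of $C := A - B$ with nonnegative eigenvalue. First I would note that $\tr[\Pi(A-B)] = \tr[\Pi C] = \tr[\Pi C \Pi] = C_+$ in the notation of the paper (the sum of the positive part of the eigenvalues of $C$), since $\Pi$ commutes with $C$. Next I would write the difference
\[
\tr[\{A\geq B\}(A-B)] - \tr[P(A-B)] = \tr[(\Pi - P)C].
\]
The goal is then to show this is nonnegative for every $0 \leq P \leq \one$.

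The key step is to split $C$ via its spectral decomposition into $C = C_+ - C_-$ where $C_+ = \Pi C \Pi \geq 0$ collects the nonnegative eigenvalues and $C_- = (\one - \Pi) C (\one - \Pi) \leq 0$... more precisely $C_- = -(\one-\Pi)C(\one-\Pi) \geq 0$ is the negative part, and $\Pi$ annihilates $C_-$ while $\one - \Pi$ annihilates $C_+$. Then
\[
\tr[(\Pi - P)C] = \tr[(\Pi - P)C_+] - \tr[(\Pi-P)C_-] = \tr[(\one - P)C_+] + \tr[P C_-],
\]
where I used $\Pi C_+ = C_+$ and $\Pi C_- = 0$. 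Now both terms are nonnegative: $\one - P \geq 0$ and $C_+ \geq 0$ gives $\tr[(\one-P)C_+] \geq 0$ (the trace of a product of two positive semi-definite operators is nonnegative), and likewise $P \geq 0$ and $C_- \geq 0$ gives $\tr[P C_-] \geq 0$. This establishes the inequality.

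For the variant with $\{A > B\}$: the only difference is that the projector now omits the kernel of $C$, i.e.\ $\{A > B\} = \Pi - \Pi_0$ where $\Pi_0$ projects onto $\ker C$. But $\Pi_0 C = 0$, so $\tr[\{A>B\}(A-B)] = \tr[\Pi C] = \tr[\{A \geq B\}(A-B)]$, and the same chain of inequalities applies verbatim. I do not anticipate a real obstacle here; the only point requiring minor care is justifying $\tr[XY] \geq 0$ for $X, Y \geq 0$ (write $\tr[XY] = \tr[X^{1/2} Y X^{1/2}]$, which is the trace of a positive semi-definite operator), and being careful that $A, B \geq 0$ is not actually needed beyond ensuring $C$ is self-adjoint so that the spectral projections are well defined — the argument works for any self-adjoint $C$.
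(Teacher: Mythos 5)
Your argument is correct: decomposing $C=A-B$ into its positive and negative parts via the spectral projection and writing $\tr[(\Pi-P)C]=\tr[(\one-P)C_+]+\tr[PC_-]\geq 0$ is the standard proof of this lemma, which the paper itself does not prove but simply quotes from \cite{ON00}. The only nitpick is notational ($\tr[\Pi C\Pi]$ equals $\tr C_+$, not $C_+$), and your closing observations --- that positivity of $A$ and $B$ is not needed and that the kernel projection contributes nothing in the strict-inequality variant --- are both accurate.
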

\begin{lemma}\label{lem:dpi-tr-quantity}
	{\normalfont \cite{BD06}} Let $A,B\geq 0$ and $\Lambda$ be a CPTP map. Then 
	\begin{align*}
	\tr[(\Lambda(A)-\Lambda(B))\lbrace \Lambda(A)>\Lambda(B)\rbrace]\leq\tr[(A-B)\lbrace A> B\rbrace].
	\end{align*}
\end{lemma}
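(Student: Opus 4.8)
The plan is to recast the inequality as the statement that the ``positive-part trace'' $X\mapsto\tr[X_+]$ is monotone non-increasing under CPTP maps. First I would note that for a self-adjoint operator $X\in\cB(\cH)$ one has $\tr[X\lbrace X>0\rbrace]=\tr[X\lbrace X\geq 0\rbrace]=\tr[X_+]$, since the kernel of $X$ contributes nothing; in particular the choice of strict versus non-strict inequality in $\lbrace\cdot>\cdot\rbrace$ is immaterial. Applying this to $X=A-B$ and, using linearity of $\Lambda$, to $X=\Lambda(A)-\Lambda(B)=\Lambda(A-B)$, the lemma becomes equivalent to $\tr[(\Lambda(A-B))_+]\leq\tr[(A-B)_+]$.

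Next I would invoke the Jordan decomposition $A-B=D_+-D_-$, where $D_+=(A-B)_+$ and $D_-=(B-A)_+$ are positive semi-definite with orthogonal supports (so $\tr D_+=\tr[(A-B)\lbrace A>B\rbrace]$). Setting $Q\coloneqq\lbrace\Lambda(A)>\Lambda(B)\rbrace$, so that $0\leq Q\leq\one$, the desired bound follows from the chain
\begin{align*}
\tr[(\Lambda(A)-\Lambda(B))\lbrace\Lambda(A)>\Lambda(B)\rbrace]
  &=\tr[Q\,\Lambda(D_+)]-\tr[Q\,\Lambda(D_-)]\\
  &\leq\tr[Q\,\Lambda(D_+)]\\
  &\leq\tr[\Lambda(D_+)]=\tr[D_+]=\tr[(A-B)\lbrace A>B\rbrace].
\end{align*}
Here the first inequality uses that $\Lambda$ is positive (so $\Lambda(D_-)\geq 0$) together with $Q\geq 0$; the second uses $\one-Q\geq 0$ and $\Lambda(D_+)\geq 0$; and the penultimate equality is trace preservation of $\Lambda$.

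An alternative I would keep in mind, which sidesteps the Jordan decomposition, is to use the variational formula $\tr[(A-B)\lbrace A>B\rbrace]=\max_{0\leq P\leq\one}\tr[P(A-B)]$ --- which is exactly Lemma~\ref{lem:tr-projector} --- combined with the observation that the Hilbert--Schmidt adjoint $\Lambda^\dagger$ is positive (since $\Lambda$ is) and unital (since $\Lambda$ is trace-preserving). Then for any $0\leq Q\leq\one$ one has $\tr[Q(\Lambda(A)-\Lambda(B))]=\tr[\Lambda^\dagger(Q)(A-B)]$ with $0\leq\Lambda^\dagger(Q)\leq\one$, so maximising over $Q$ and enlarging the feasible set of operators gives the claim.

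There is no genuine analytic obstacle here; the argument is short. The only points demanding a little care are the spectral-projection bookkeeping (that $\lbrace\cdot>\cdot\rbrace$ and $\lbrace\cdot\geq\cdot\rbrace$ define the same trace functional), and noting that complete positivity is not actually used --- positivity and trace preservation of $\Lambda$ suffice --- with the hypothesis $A,B\geq 0$ serving only to make $A-B$ self-adjoint so that its spectral projections are well defined (and, for the second route, to permit a direct appeal to Lemma~\ref{lem:tr-projector}).
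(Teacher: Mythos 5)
The paper does not actually prove this lemma; it is imported verbatim from \cite{BD06} with no argument reproduced, so there is nothing internal to compare your proof against. That said, your proposal is correct, and both routes you sketch are sound. The reduction to $\tr[(\Lambda(A-B))_+]\leq\tr[(A-B)_+]$ via linearity, the Jordan decomposition $A-B=D_+-D_-$, and the two elementary trace inequalities (non-negativity of $\tr[Q\,\Lambda(D_-)]$ and of $\tr[(\one-Q)\Lambda(D_+)]$) all check out, as does the final appeal to trace preservation. Your second route --- writing $\tr[Q(\Lambda(A)-\Lambda(B))]=\tr[\Lambda^\dagger(Q)(A-B)]$ and using that $\Lambda^\dagger$ is positive and unital so that $0\leq\Lambda^\dagger(Q)\leq\one$, then invoking the variational characterization in Lemma~\ref{lem:tr-projector} --- is the standard duality argument for this monotonicity statement and is the cleaner of the two, since it makes transparent exactly which structure is used. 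Your closing remarks are also accurate: the strict versus non-strict spectral projections give the same trace functional because the kernel of $A-B$ contributes nothing, and complete positivity is never needed --- positivity and trace preservation suffice, with $A,B\geq 0$ serving only to guarantee self-adjointness of the difference.
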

\begin{lemma}\label{lem3}
	{\normalfont \cite{BD06}} Let $\rho \in \cD_{\le}(\cH)$ and $\sigma \in \cP(\cH)$. Then for any $\gamma \in \mathbb{R}$,
\begin{align*}
\tr[\lbrace \rho \ge 2^{-\gamma}\sigma\rbrace \sigma]\leq 2^\gamma.
\end{align*}
\end{lemma}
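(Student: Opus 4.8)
The plan is to reduce the statement to a single operator inequality that comes straight out of the definition of the spectral projection. Set $P \coloneqq \{\rho \ge 2^{-\gamma}\sigma\}$, i.e.\ the spectral projection of the self-adjoint operator $A \coloneqq \rho - 2^{-\gamma}\sigma$ associated with the interval $[0,+\infty)$. By the very definition of this projection, $PAP = A_+ \ge 0$ in the notation introduced above, that is,
\[ P\rho P \;\ge\; 2^{-\gamma}\,P\sigma P . \]

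Next I would take traces on both sides. Since $P$ is a projection and the trace is cyclic, $\tr[P\sigma P] = \tr[P\sigma]$ and $\tr[P\rho P] = \tr[P\rho]$, so the displayed inequality gives $2^{-\gamma}\tr[P\sigma] \le \tr[P\rho]$. Finally, because $0 \le P \le \one$ and $\rho \ge 0$ we have $\tr[(\one-P)\rho] = \tr[\rho^{1/2}(\one-P)\rho^{1/2}] \ge 0$, hence $\tr[P\rho] \le \tr\rho \le 1$ using that $\rho \in \cDs(\cH)$ is subnormalized. Combining the two bounds yields $2^{-\gamma}\tr[P\sigma] \le 1$, and multiplying through by $2^\gamma$ gives the claim.

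I do not expect any real obstacle here: the whole argument is the observation that conjugating $\rho - 2^{-\gamma}\sigma$ by its nonnegative spectral projection produces a positive semi-definite operator, followed by monotonicity of the trace. The only point worth a moment's care is that $\sigma$ is only assumed positive semi-definite rather than a state, so it cannot be renormalized — but this is irrelevant, since $\sigma \ge 0$ enters only to know $P\sigma P \ge 0$, and even that is already subsumed by the displayed operator inequality. One could alternatively extract the estimate from Lemma~\ref{lem:tr-projector} with $A = \rho$, $B = 2^{-\gamma}\sigma$ and the trivial choice $P = 0$ (which gives $\tr[\{\rho\ge2^{-\gamma}\sigma\}(\rho-2^{-\gamma}\sigma)]\ge 0$), but the direct route above is shorter and self-contained.
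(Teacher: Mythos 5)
Your proof is correct. The paper itself does not prove this lemma --- it is imported from \cite{BD06} --- so there is no in-paper argument to compare against; your derivation (that $P(\rho-2^{-\gamma}\sigma)P = (\rho-2^{-\gamma}\sigma)_+ \ge 0$ because the spectral projection commutes with the operator, hence $2^{-\gamma}\tr[P\sigma] \le \tr[P\rho] \le \tr\rho \le 1$) is the standard one and every step is sound, including the observation that only $\tr\rho\le 1$ and positivity are needed, not normalization of $\sigma$.
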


For the convenience of the reader, we recall the definitions of the most important distance measures and state their relations \cite{TCR10}:
\begin{definition}\label{def:distance-measures}
	Let $\rho,\sigma\in\cD_\leq(\cH)$ be subnormalized states, then:
	\begin{enumerate}[{\normalfont (i)}]
		\item The \emph{generalized fidelity} $F(\rho,\sigma)$ of $\rho$ and $\sigma$ is defined by
		\begin{align*}
		F(\rho,\sigma) \coloneqq \|\sqrt{\rho}\sqrt{\sigma}\|_1 + \sqrt{(1-\tr\rho)(1-\tr\sigma)}.
		\end{align*}
		\item The \emph{purified distance} $P(\rho,\sigma)$ is defined by 
		\begin{align*}
		P(\rho,\sigma) \coloneqq \sqrt{1-F(\rho,\sigma)^2}
		\end{align*} 
		and constitutes a metric on $\cDs(\cH)$, i.e.~it satisfies the triangle inequality.
		\item The \emph{generalized trace distance} $d(\rho,\sigma)$ is defined by
		\begin{align*}
		d(\rho,\sigma)\coloneqq \frac{1}{2}\|\rho-\sigma\|_1+\frac{1}{2}|\tr\rho-\tr\sigma|
		\end{align*}
		and constitutes a metric on $\cDs(\cH)$.
	\end{enumerate}	
	If at least one of the subnormalized states $\rho$ and $\sigma$ is normalized, the generalized fidelity reduces to the usual fidelity, i.e. $F(\rho,\sigma)=\|\sqrt{\rho}\sqrt{\sigma}\|_1 = \tr(\sqrt{\rho}\,\sigma\sqrt{\rho})^{\frac{1}{2}}.$
	If both $\rho$ and $\sigma$ are normalized, the generalized trace distance also reduces to the usual trace distance, $d(\rho,\sigma) = \frac{1}{2}\|\rho-\sigma\|_1.$
\end{definition}
\begin{lemma}\label{lem:trace-purified-bound}
{\normalfont \cite{Tom12}} For $\rho,\sigma\in\cDs(\cH)$ we have the following bounds:
\begin{align*}
d(\rho,\sigma)\leq P(\rho,\sigma)\leq \sqrt{2d(\rho,\sigma)}
\end{align*}
\end{lemma}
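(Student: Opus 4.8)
The plan is to reduce the statement about subnormalized states to the corresponding statement for normalized states, and then invoke the Fuchs--van de Graaf inequalities. Given $\rho,\sigma\in\cDs(\cH)$, embed them into genuine density matrices on the one-dimensional extension $\cH'\coloneqq\cH\oplus\mathbb{C}$ by setting $\trho\coloneqq\rho\oplus(1-\tr\rho)$ and $\tsigma\coloneqq\sigma\oplus(1-\tr\sigma)$, so that $\trho,\tsigma\in\cD(\cH')$. The first step is to check that this embedding leaves all three quantities in the statement unchanged. Since $\trho$ and $\tsigma$ are block diagonal, $\sqrt{\trho}\sqrt{\tsigma}=\sqrt{\rho}\sqrt{\sigma}\oplus\sqrt{(1-\tr\rho)(1-\tr\sigma)}$, so that
$$F(\trho,\tsigma)=\|\sqrt{\trho}\sqrt{\tsigma}\|_1=\|\sqrt{\rho}\sqrt{\sigma}\|_1+\sqrt{(1-\tr\rho)(1-\tr\sigma)}=F(\rho,\sigma),$$
hence also $P(\trho,\tsigma)=P(\rho,\sigma)$; and $d(\trho,\tsigma)=\tfrac12\|\trho-\tsigma\|_1=\tfrac12\|\rho-\sigma\|_1+\tfrac12|\tr\rho-\tr\sigma|=d(\rho,\sigma)$. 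Thus it suffices to prove $d(\trho,\tsigma)\le P(\trho,\tsigma)\le\sqrt{2\,d(\trho,\tsigma)}$ for the normalized pair.

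For normalized states the Fuchs--van de Graaf inequalities give $1-F(\trho,\tsigma)\le\tfrac12\|\trho-\tsigma\|_1\le\sqrt{1-F(\trho,\tsigma)^2}$. The right-hand inequality is precisely $d(\trho,\tsigma)\le P(\trho,\tsigma)$, which is the lower bound. For the upper bound, write $F\coloneqq F(\trho,\tsigma)\in[0,1]$ and combine the left-hand inequality with $1+F\le 2$:
$$P(\trho,\tsigma)^2=1-F^2=(1-F)(1+F)\le 2(1-F)\le 2\cdot\tfrac12\|\trho-\tsigma\|_1=2\,d(\trho,\tsigma),$$
so $P(\trho,\tsigma)\le\sqrt{2\,d(\trho,\tsigma)}$. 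Together with the first step this proves the lemma.

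The only substantive ingredient is the pair of Fuchs--van de Graaf inequalities for normalized states, which are classical. The lower bound follows by passing to a measurement that is optimal for the fidelity, reducing to the elementary estimate $1-\sum_x\sqrt{p_xq_x}=\tfrac12\sum_x(\sqrt{p_x}-\sqrt{q_x})^2\le\tfrac12\sum_x|p_x-q_x|$ for probability distributions, together with the fact that the trace distance dominates the total variation distance of any measured outcome distributions. The upper bound follows from Uhlmann's theorem by choosing purifications whose overlap equals $F$, using monotonicity of the trace distance under partial trace and the identity that two pure states with overlap $F$ are at trace distance $\sqrt{1-F^2}$. The main point requiring (minor) care is the bookkeeping in the embedding step — in particular that when one of $\rho,\sigma$ is already normalized the correction term $\sqrt{(1-\tr\rho)(1-\tr\sigma)}$ vanishes, consistently with the observations recorded after Definition~\ref{def:distance-measures}; there is no serious obstacle beyond this.
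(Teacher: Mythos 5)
Your proof is correct and follows essentially the same route as the cited source [Tom12] (the paper itself gives no proof, only the citation): the generalized fidelity and generalized trace distance are defined precisely so that they coincide with the ordinary fidelity and trace distance of the normalized extensions $\rho\oplus(1-\tr\rho)$ and $\sigma\oplus(1-\tr\sigma)$, after which both bounds follow from the Fuchs--van de Graaf inequalities exactly as you compute. The block-diagonal bookkeeping and the estimate $1-F^2=(1-F)(1+F)\le 2(1-F)$ are all in order, so there is nothing to add.
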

The following entropic quantities play a key role in second order asymptotic expansions:
\begin{definition}\label{def:quantum-relative-entropy}
			For $\rho\in\cD(\cH)$ and $\sigma\in\cP(\cH)$, the \emph{quantum relative entropy} $D(\rho\|\sigma)$ is defined as 
			\begin{align*}
			D(\rho\|\sigma) \coloneqq  \begin{cases}\tr[\rho(\log\rho-\log\sigma)] & \supp\rho\subseteq\supp\sigma\\ \infty & \text{else.}\end{cases}
			\end{align*}
			The \emph{quantum information variance} $V(\rho\|\sigma)$ is defined as
			\begin{align*}
			V(\rho\|\sigma) \coloneqq  \tr\left[\rho(\log\rho-\log\sigma)^2\right]-D(\rho\|\sigma)^2,
			\end{align*}
			and we set 
			\begin{align}\label{eq:frak-s}
			\sk(\rho\|\sigma)\coloneqq \sqrt{V(\rho\|\sigma)}.
			\end{align}
\end{definition}

The inverse of the cumulative distribution function (cdf) of a standard normal random variable is defined by
\begin{align*}
\Phi^{-1}(\eps) \coloneqq \sup\lbrace z\in\mathbb{R}\mid\Phi(z)\leq \eps\rbrace,
\end{align*}
where $\Phi(z) = \frac{1}{\sqrt{2\pi}}\int_{-\infty}^z e^{-t^2/2}dt$. We frequently make use of the following lemma:
\begin{lemma}\label{lem:phi-trick}
	Let $\eps>0$, then \begin{align*}
	\sqrt{n}\,\Phi^{-1}\left(\eps\pm\frac{1}{\sqrt{n}}\right) = \sqrt{n}\,\Phi^{-1}(\eps)\pm \left(\Phi^{-1}\right)'(\xi)
	\end{align*} 
	for some $\xi$ with $|\xi-\eps|\leq\frac{1}{\sqrt{n}}$.
\end{lemma}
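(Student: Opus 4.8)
The plan is to reduce the statement to a first-order Taylor expansion of the function $\Phi^{-1}$ around the point $\eps$, with the increment $h=\pm 1/\sqrt{n}$. First I would note that $\Phi$ is smooth and strictly increasing on $\R$, with everywhere-positive derivative $\Phi'(z)=\frac{1}{\sqrt{2\pi}}e^{-z^2/2}$, so by the inverse function theorem $\Phi^{-1}$ is well-defined and continuously differentiable on $(0,1)$; in particular the ``$\sup$'' in the definition of $\Phi^{-1}(\eps)$ is attained and coincides with the usual functional inverse whenever $0<\eps<1$. (One should implicitly assume $\eps$ and $\eps\pm 1/\sqrt n$ lie in $(0,1)$, which holds for all sufficiently large $n$ once $0<\eps<1$; for the boundary behaviour as $\eps\to 0,1$ the statement is vacuous or needs no proof since those values are not used.)

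The main step is simply the mean value theorem applied to $\Phi^{-1}$ on the interval with endpoints $\eps$ and $\eps\pm 1/\sqrt n$: there exists a point $\xi$ strictly between these two values — hence with $|\xi-\eps|\le 1/\sqrt n$ — such that
$$\Phi^{-1}\!\left(\eps\pm\tfrac{1}{\sqrt n}\right) - \Phi^{-1}(\eps) = \pm\tfrac{1}{\sqrt n}\,\bigl(\Phi^{-1}\bigr)'(\xi).$$
Multiplying both sides by $\sqrt n$ gives exactly the claimed identity
$$\sqrt n\,\Phi^{-1}\!\left(\eps\pm\tfrac{1}{\sqrt n}\right) = \sqrt n\,\Phi^{-1}(\eps) \pm \bigl(\Phi^{-1}\bigr)'(\xi).$$
If one wants an explicit form of the correction term, the inverse function theorem gives $\bigl(\Phi^{-1}\bigr)'(\xi) = 1/\Phi'\!\bigl(\Phi^{-1}(\xi)\bigr) = \sqrt{2\pi}\,\exp\!\bigl(\Phi^{-1}(\xi)^2/2\bigr)$, though this explicit expression is not needed for the statement as written.

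There is essentially no obstacle here: the only subtlety worth a sentence is making sure the argument handles both signs uniformly (the interval of integration/MVT is $[\eps-1/\sqrt n,\eps]$ or $[\eps,\eps+1/\sqrt n]$ depending on the sign, but in either case $\xi$ lands in the $1/\sqrt n$-neighbourhood of $\eps$) and that differentiability of $\Phi^{-1}$ is invoked on a neighbourhood of $\eps$ contained in $(0,1)$, which is legitimate for large $n$. The lemma will be applied later with $\eps$ replaced by the error threshold and the $1/\sqrt n$ shift arising from Berry--Esseen corrections, so it is convenient to record it in this sign-symmetric form.
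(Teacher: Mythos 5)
Your proposal is correct and is essentially identical to the paper's own proof: both apply the first-order Taylor/mean value theorem to $\Phi^{-1}$ on the interval between $\eps$ and $\eps\pm\frac{1}{\sqrt{n}}$ and then multiply by $\sqrt{n}$. Your additional remarks on the differentiability of $\Phi^{-1}$ via the inverse function theorem and on the range restriction to $(0,1)$ are sensible housekeeping that the paper leaves implicit.
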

\begin{proof}
	We make the following general observation: Let $f\colon\mathbb{R}\rightarrow\mathbb{R}$ be a continuously differentiable function. Then by Taylor's theorem we can write 
	\begin{align}\label{eq:taylor}
		f\left(x\pm\frac{1}{\sqrt{n}}\right) = f(x) \pm \frac{1}{\sqrt{n}}f'(\xi)
	\end{align}
	for some $\xi\in\left[x-\frac{1}{\sqrt{n}},x\right]$ in the case `$-$' and $\xi\in\left[ x,x+\frac{1}{\sqrt{n}}\right]$ in the case `$+$'. Applying \eqref{eq:taylor} to $\sqrt{n}\,\Phi^{-1}\left(\eps\pm\frac{1}{\sqrt{n}}\right)$ yields the claim.
\end{proof}

\section{Information spectrum relative entropies}
\subsection{Definitions and mathematical properties}
The central quantities of this paper are the information spectrum relative entropies $\uds(\rho\|\sigma)$ and $\ods(\rho\|\sigma)$, which we now define.
\begin{definition}\label{def:information-spectrum-rel-ent}
	For $\rho\in\cD(\cH)$, $\sigma\in\cP(\cH)$ and $\eps\in (0,1)$, the \emph{information spectrum relative entropies} are defined as
	\begin{align*}
		\uds(\rho\|\sigma) &\coloneqq  \sup\lbrace \gamma\mid \tr\left[(\rho-2^\gamma\sigma)\lbrace \rho>2^\gamma\sigma\rbrace\right] \geq 1-\eps\rbrace\\
		\ods(\rho\|\sigma) &\coloneqq  \inf\lbrace \gamma\mid \tr[(\rho-2^\gamma\sigma)\lbrace \rho>2^\gamma\sigma\rbrace] \leq \eps\rbrace.
	\end{align*}
\end{definition}
These relative entropies are one-shot generalizations of the spectral divergences used in the ISA to quantum information theory. In Section \ref{sec:information-spectrum-method} these generalizations are discussed in detail. 

Note furthermore, that the information spectrum relative entropies as defined in \Cref{def:information-spectrum-rel-ent} are variants of the definition introduced by Tomamichel and Hayashi \cite{TH13}: 
\begin{align}\label{eq:tom-hay-definition}
\dsc(\rho\|\sigma)\coloneqq  \sup\lbrace \gamma\mid \tr[\rho\lbrace\rho\leq 2^\gamma\sigma\rbrace] \leq\eps\rbrace.
\end{align} 
Our definitions have the advantage of satisfying the data processing inequality, as shown in \Cref{prop:D_s-properties}(iii). First, we prove the following lemma which implies that the supremum and infimum in \Cref{def:information-spectrum-rel-ent} are attained.
\begin{lemma}~\label{lem:trace-quantity}
\begin{enumerate}[{\normalfont (i)}]
\item\label{item:one-parameter-continuous} For $\gamma\in\mathbb{R}$ let $A(\gamma)\in\cB(\cH)$ be a one-parameter family of Hermitian operators such that $\gamma\mapsto \|A(\gamma)\|$ is continuous.
Then the function $\gamma\mapsto \tr(A(\gamma))_+$ is continuous.
\item\label{item:trace-functional-continuous} For $\rho,\sigma\in\cP(\cH)$, the function $\gamma\mapsto \tr(\rho-2^\gamma\sigma)_+$ is continuous and monotonically decreasing, and strictly so if $\supp\rho\subseteq\supp\sigma$.

\item\label{item:sup-inf-achieved} The supremum and infimum in \Cref{def:information-spectrum-rel-ent} are attained, and unique if $\supp\rho\subseteq\supp\sigma$.
\end{enumerate}
\end{lemma}
\begin{proof}
(i) First, we observe that the trace of the positive part of a Hermitian operator is the sum of its non-negative eigenvalues, the eigenvalues being the roots of the characteristic polynomial. 
Since the determinant is a continuous function with respect to the operator norm on $\cB(\cH)$, the coefficients of the characteristic polynomial of $A(\gamma)$ continuously depend on $\gamma$, and by factorizing the polynomial into linear factors we see that this carries over to the roots. 
Composing the sum over the roots with the continuous maximum function $\max(.,0)$, we obtain that the function $\gamma\mapsto \tr(A(\gamma))_+$ is a composition of continuous functions and thus continuous itself.

(ii) Since $\gamma\mapsto \rho-2^\gamma\sigma$ is continuous with respect to the operator norm, the function $\gamma\mapsto\tr(\rho-2^\gamma\sigma)_+$ is continuous by (i).
To prove that $\gamma\mapsto \tr\left(\rho-2^\gamma\sigma\right)_+$ is decreasing, let $\gamma\leq \gamma'$ and observe that
\begin{align*}
\rho - 2^{\gamma} \sigma = \rho- 2^{\gamma'}\sigma + (2^{\gamma'} - 2^{\gamma})\sigma,
\end{align*}
where $(2^{\gamma'} - 2^{\gamma})\sigma\geq 0$.
Hence, Weyl's Monotonicity Theorem (see e.g.~Section III in \cite{Bha13}) implies that
\begin{align}\label{eq:weyl-monotonicity}
\lambda_j \left(\rho - 2^{\gamma} \sigma\right) \geq \lambda_j(\rho- 2^{\gamma'}\sigma) \quad\text{for all $j$,}
\end{align}
where for a Hermitian operator $A$ we write $\lambda_j(A)$ to denote the $j$-th largest eigenvalue of $A$.
It then follows from \eqref{eq:weyl-monotonicity} that $\tr\left(\rho-2^\gamma\sigma\right)_+ \geq \tr(\rho-2^{\gamma'}\sigma)_+$.

To prove strict monotonicity, let $\gamma<\gamma'$.
Without loss of generality, we restrict $\cH$ to the support of $\sigma$ (which is possible due to the assumption $\supp\rho\subseteq\supp\sigma$), such that $\sigma$ has strictly positive eigenvalues.
Consequently, $(2^{\gamma'} - 2^{\gamma})\sigma>0$, and the inequality in \eqref{eq:weyl-monotonicity} is a strict one for all $j$, from which we obtain that $\tr\left(\rho-2^\gamma\sigma\right)_+ > \tr(\rho-2^{\gamma'}\sigma)_+$.

(iii) Since $\lim_{\gamma\rightarrow -\infty}\tr(\rho-2^\gamma\sigma)_+ = \tr(\rho)=1$ and $\lim_{\gamma\rightarrow\infty}\tr(\rho-2^\gamma\sigma)_+=0$, we infer by (ii) and the Intermediate Value Theorem that the supremum in the definition of $\uds(\rho\|\sigma)$ as well as the infimum in the definition of $\ods(\rho\|\sigma)$ are attained. 
If $\supp\rho\subseteq\supp\sigma$, then they are moreover unique by the strict monotonicity of $\gamma\mapsto \tr\left(\rho-2^{\gamma}\sigma\right)_+$.
\end{proof}

We are now ready to record a series of properties of the information spectrum relative entropies:
\begin{proposition}\label{prop:D_s-properties}
	Let $\eps\in (0,1)$, $\rho\in\cD(\cH)$ and $\sigma\in\cP(\cH)$. Then the following properties hold:
	\begin{enumerate}[{\normalfont (i)}]
		\item $\uds(\rho\|\sigma)\leq\dsc(\rho\|\sigma)$
		\item $\uds(\rho\|\sigma) = \oD_s^{1-\eps}(\rho\|\sigma)$
		\item Data processing inequality: For any CPTP map $\Lambda$, we have
		\begin{align*}
			\uds(\rho\|\sigma)&\geq\uds(\Lambda(\rho)\|\Lambda(\sigma))\\
			\ods(\rho\|\sigma)&\geq\ods(\Lambda(\rho)\|\Lambda(\sigma)).
		\end{align*}
		\item Monotonicity in $\eps$: Let $\eps'\geq\eps$, then
		\begin{align*}
			\uds(\rho\|\sigma) &\leq \underline{D}_s^{\eps'}(\rho\|\sigma)\\
			\oD_s^{\eps'}(\rho\|\sigma) &\leq \ods(\rho\|\sigma).
		\end{align*}
		\item Let $\sigma'\geq 0$ with $\sigma\leq\sigma'$, then $\uds(\rho\|\sigma)\geq\uds(\rho\|\sigma').$
		\item Let $c>0$, then $\uds(\rho\|c\sigma) = \uds(\rho\|\sigma) - \log c$.
		\item Let $\delta > 0$ and $\rho'\in\cD(\cH)$ with $d(\rho,\rho')\leq \delta$, then $\uds(\rho'\|\sigma)\leq \uD_s^{\eps+\delta}(\rho\|\sigma).$
	\end{enumerate}
\end{proposition}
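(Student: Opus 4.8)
The plan is to prove the seven items essentially in the order listed, since several of them depend on earlier ones. For (i), I would unfold the definitions: $\dsc(\rho\|\sigma)$ is the sup of $\gamma$ with $\tr[\rho\{\rho\le 2^\gamma\sigma\}]\le\eps$, equivalently $\tr[\rho\{\rho>2^\gamma\sigma\}]\ge 1-\eps$; since $\tr[(\rho-2^\gamma\sigma)\{\rho>2^\gamma\sigma\}]\le\tr[\rho\{\rho>2^\gamma\sigma\}]$ (we subtract the nonnegative quantity $\tr[2^\gamma\sigma\{\rho>2^\gamma\sigma\}]$), the condition defining $\uds$ is stronger than the one defining $\dsc$, so the sup is over a smaller set and (i) follows. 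For (ii), I would show the two defining sets of thresholds are complementary: using Lemma~\ref{lem:trace-quantity}(iii), the continuous strictly decreasing function $g(\gamma)\coloneqq\tr[(\rho-2^\gamma\sigma)\{\rho>2^\gamma\sigma\}]=\tr(\rho-2^\gamma\sigma)_+$ satisfies $g(\gamma)\ge 1-\eps \iff \gamma\le\uds$ and $g(\gamma)\le (1-\eps) \iff \gamma\ge\oD_s^{1-\eps}$ by the ``attained'' part of Lemma~\ref{lem:trace-quantity}(iv); strict monotonicity forces the two boundary points to coincide.

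For the data processing inequality (iii), I would note $g(\gamma)=\tr[(\rho-2^\gamma\sigma)\{\rho>2^\gamma\sigma\}]$ and, for a CPTP map $\Lambda$, apply Lemma~\ref{lem:dpi-tr-quantity} with $A=\rho$, $B=2^\gamma\sigma$ (both $\ge 0$) to get $\tr[(\Lambda(\rho)-2^\gamma\Lambda(\sigma))\{\Lambda(\rho)>2^\gamma\Lambda(\sigma)\}]\le g(\gamma)$ for every $\gamma$; hence any $\gamma$ admissible for $\uds(\Lambda(\rho)\|\Lambda(\sigma))$ (i.e.\ with the pushed-forward quantity $\ge 1-\eps$) is also admissible for $\uds(\rho\|\sigma)$, giving $\uds(\rho\|\sigma)\ge\uds(\Lambda(\rho)\|\Lambda(\sigma))$, and symmetrically any $\gamma$ with the pushed-forward quantity $\le\eps$ is admissible for $\ods(\rho\|\sigma)$, giving the second inequality. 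Item (iv) is immediate from the definitions: enlarging $\eps$ to $\eps'$ weakens the constraint $g(\gamma)\ge 1-\eps$ (since $1-\eps'\le 1-\eps$), enlarging the sup; and it strengthens the constraint $g(\gamma)\le\eps$ into $g(\gamma)\le\eps'$, also enlarging the admissible set and hence decreasing the inf, which inverts to $\oD_s^{\eps'}\le\ods$ once one checks signs---alternatively derive the $\oD$ statement from the $\uD$ statement via (ii).

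For (v), with $\sigma\le\sigma'$ I would write $\rho-2^\gamma\sigma\ge\rho-2^\gamma\sigma'$ and apply Lemma~\ref{lem:trace-quantity}(ii) to conclude $(\rho-2^\gamma\sigma)_+\ge(\rho-2^\gamma\sigma')_+$, hence $g_\sigma(\gamma)\ge g_{\sigma'}(\gamma)$; so any $\gamma$ admissible for $\uds(\rho\|\sigma')$ is admissible for $\uds(\rho\|\sigma)$, giving the inequality. Item (vi) is a scaling identity: $\{\rho>2^\gamma(c\sigma)\}=\{\rho>2^{\gamma+\log c}\sigma\}$ and $\rho-2^\gamma c\sigma=\rho-2^{\gamma+\log c}\sigma$, so $g_{c\sigma}(\gamma)=g_\sigma(\gamma+\log c)$, whence $\uds(\rho\|c\sigma)=\uds(\rho\|\sigma)-\log c$ by substitution in the sup. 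Finally (vii): given $d(\rho,\rho')\le\delta$ with $\rho'\in\cD(\cH)$, for any $\gamma$ I would bound $g_{\rho'}(\gamma)=\tr[(\rho'-2^\gamma\sigma)\{\rho'>2^\gamma\sigma\}]\le \tr[(\rho-2^\gamma\sigma)\{\rho'>2^\gamma\sigma\}]+\tr[(\rho'-\rho)\{\rho'>2^\gamma\sigma\}]\le g_\rho(\gamma)+\|\rho'-\rho\|_1$ --- the first term using Lemma~\ref{lem:tr-projector} (the optimizing projector for $\rho-2^\gamma\sigma$ is $\{\rho\ge 2^\gamma\sigma\}$, so $\tr[(\rho-2^\gamma\sigma)\{\rho'>2^\gamma\sigma\}]\le\tr(\rho-2^\gamma\sigma)_+=g_\rho(\gamma)$) and the second using $\tr[(\rho'-\rho)P]\le\frac12\|\rho'-\rho\|_1+\frac12|\tr\rho'-\tr\rho|=d(\rho,\rho')\le\delta$ for $0\le P\le\one$. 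Thus if $\gamma$ is admissible for $\uds(\rho'\|\sigma)$, i.e.\ $g_{\rho'}(\gamma)\ge 1-\eps$, then $g_\rho(\gamma)\ge 1-\eps-\delta=1-(\eps+\delta)$, so $\gamma$ is admissible for $\uD_s^{\eps+\delta}(\rho\|\sigma)$, proving the claim. The main obstacle throughout is getting the trace-quantity inequalities pointing the right way --- in particular the careful use of Lemma~\ref{lem:tr-projector} in (vii) to replace the ``wrong'' projector $\{\rho'>2^\gamma\sigma\}$ by the optimal one --- and being consistent about whether one works with $>$ or $\ge$ spectral projections, which Lemma~\ref{lem:trace-quantity} reconciles by identifying both with $\tr(\cdot)_+$.
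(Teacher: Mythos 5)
Your proposal is correct and follows essentially the same route as the paper: every item is handled by the same feasibility argument on the quantity $\tr(\rho-2^\gamma\sigma)_+$, with Lemma~\ref{lem:dpi-tr-quantity} for (iii), Lemma~\ref{lem:tr-projector} and $\tr(\rho-\rho')_+=d(\rho,\rho')$ for (vii), and the attainment/strict-monotonicity facts from Lemma~\ref{lem:trace-quantity} for (ii). The only (harmless) deviation is in (v), where you invoke the operator monotonicity $A\leq B\Rightarrow A_+\leq B_+$ instead of the paper's direct expansion of $\tr[Q(\rho-2^\gamma\sigma')]$ followed by Lemma~\ref{lem:tr-projector}; both are one-line arguments using lemmas already established in the paper.
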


\begin{proof}
	(i) Let $\gamma=\uds(\rho\|\sigma)$. Then by the definition of $\uds(\rho\|\sigma)$ we have
	\begin{align*}
		1-\eps &=\tr[(\rho-2^\gamma\sigma)\lbrace\rho > 2^\gamma\sigma\rbrace]\\
		&\leq \tr[\rho\lbrace\rho>2^\gamma\sigma\rbrace].
	\end{align*}
	Hence,
	\begin{align*}
		\tr[\rho\lbrace\rho\leq 2^\gamma\sigma\rbrace] &= \tr\rho-\tr[\rho\lbrace\rho>2^\gamma\sigma\rbrace]\\
		&\leq \tr\rho-(1-\eps)\\
		&= \eps,
	\end{align*}
	since $\tr\rho= 1$ by assumption. Therefore, $\gamma$ is feasible for $\dsc(\rho\|\sigma)$ and consequently, from \eqref{eq:tom-hay-definition} it follows that
	\begin{align*}
	\dsc(\rho\|\sigma)\geq \gamma = \uds(\rho\|\sigma),
	\end{align*}
	which yields the claim.

	(ii) Let $\gamma=\oD_s^{1-\eps}(\rho\|\sigma)$. Then by the definition of $\oD_s^{1-\eps}(\rho\|\sigma)$, we have
	$$\tr(\rho-2^\gamma\sigma)_+ = 1-\eps.$$
	Hence, $\gamma$ is feasible for $\uD_s^{\eps}(\rho\|\sigma)$, and we obtain 
	$$\uD_s^{\eps}(\rho\|\sigma)\geq \gamma = \oD_s^{1-\eps}(\rho\|\sigma).$$
	
	Assume now that $$\gamma=\oD_s^{1-\eps}(\rho\|\sigma)= \uds(\rho\|\sigma)-\delta$$ for some $\delta>0$, i.e.~$\oD_s^{1-\eps}(\rho\|\sigma)< \uds(\rho\|\sigma)$. By the monotonicity of $\alpha\mapsto\tr(\rho-2^\alpha\sigma)_+$, we have 
	$$\tr(\rho-2^\gamma\sigma)_+> 1-\eps.$$ On the other hand, $\tr(\rho-2^\gamma\sigma)_+ = 1-\eps$ by definition of $\oD_s^{1-\eps}(\rho\|\sigma)$. This leads to a contradiction, yielding $\uds(\rho\|\sigma)=\oD_s^{1-\eps}(\rho\|\sigma)$.

	(iii) Let $\gamma=\uds(\Lambda(\rho)\|\Lambda(\sigma))$. Then \Cref{lem:dpi-tr-quantity} implies that 
	\begin{align*}
		\tr(\rho-2^\gamma\sigma)_+ \geq \tr(\Lambda(\rho)-2^\gamma\Lambda(\sigma))_+ = 1-\eps.
	\end{align*}
	Hence, $\gamma$ is feasible for $\uds(\rho\|\sigma)$, and we obtain
	$$\uds(\rho\|\sigma)\geq \gamma = \uds(\Lambda(\rho)\|\Lambda(\sigma)).$$
	
	Similarly, let $\gamma=\ods(\rho\|\sigma)$. Then
	\begin{align*}
		\eps = \tr(\rho-2^\gamma\sigma)_+ \geq \tr(\Lambda(\rho)-2^\gamma\Lambda(\sigma))_+
	\end{align*}
	by \Cref{lem:dpi-tr-quantity}. Hence, $\gamma$ is feasible for $\ods(\Lambda(\rho)\|\Lambda(\sigma))$, and we obtain
	\begin{align*}
		\ods(\Lambda(\rho)\|\Lambda(\sigma)) \leq \gamma = \ods(\rho\|\sigma).
		\end{align*}
		
	(iv) 	Let $\gamma=\underline{D}_s^{\eps}(\rho\|\sigma)$, then
	$$\tr[\lbrace\rho > 2^\gamma\sigma\rbrace(\rho-2^\gamma\sigma)]= 1-\eps \geq 1-\eps'.$$ Hence, $\gamma$ is feasible for $\underline{D}_s^{\eps'}(\rho\|\sigma)$, and consequently, $\underline{D}_s^{\eps'}(\rho\|\sigma) \geq \gamma=\uds(\rho\|\sigma)$.
	
	Similarly, let $\gamma=\ods(\rho\|\sigma)$. Then 
	$$\tr(\rho-2^\gamma\sigma)_+ =\eps\leq \eps',$$
	and hence, $\gamma$ is feasible for $\oD_s^{\eps'}(\rho\|\sigma)$, and we obtain $\oD_s^{\eps'}(\rho\|\sigma)\leq \gamma = \ods(\rho\|\sigma)$.
	
	(v) Let $\gamma=\uds(\rho\|\sigma')$ and $Q=\lbrace \rho>2^{\gamma}\sigma'\rbrace$, then we compute:
	\begin{align*}
	1-\eps & = \tr[Q(\rho-2^{\gamma}\sigma')]\\
	&= \tr Q\rho  - 2^\gamma \tr Q\sigma'\\
	&\leq \tr Q\rho - 2^\gamma \tr Q\sigma\\
	&\leq \tr (\rho-2^\gamma\sigma)_+
	\end{align*}
	where the first inequality follows from $\sigma\leq\sigma'$ and the second inequality follows from \Cref{lem:tr-projector}. Hence, $\gamma$ is feasible for $\uds(\rho\|\sigma)$ and we obtain $\uds(\rho\|\sigma)\geq\uds(\rho\|\sigma')$.
	
	(vi) Let $\gamma=\uds(\rho\|c\sigma)$ for $c>0$, then
	\begin{align*}
	1-\eps = \tr(\rho-2^\gamma c\sigma)_+ =\tr(\rho-2^{\gamma+\log c}\sigma)_+
	\end{align*}
	and hence,
	$$\uds(\rho\|\sigma) \geq\gamma + \log c = \uds(\rho\|c\sigma)+\log c.$$
	
	Conversely, let $\gamma=\uds(\rho\|\sigma)$. Then
	$$1-\eps = \tr(\rho-2^\gamma\sigma)_+ = \tr(\rho-2^{\gamma-\log c}c\sigma)_+,$$
	and hence, 
	$$\uds(\rho\|c\sigma) \geq \gamma - \log c = \uds(\rho\|\sigma)-\log c,$$
	which proves the claim.
	
	(vii) Let $\gamma=\uds(\rho'\|\sigma)$ and $Q=\lbrace \rho'\geq 2^\gamma\sigma\rbrace$. Then
	\begin{align*}
	1-\eps &= \tr[(\rho'-2^\gamma\sigma)Q]\\
	&= \tr[(\rho'-\rho)Q] + \tr[(\rho-2^\gamma\sigma)Q]\\
	&\leq \tr(\rho'-\rho)_+ + \tr(\rho-2^\gamma\sigma)_+\\
	&\leq \delta + \tr(\rho-2^\gamma\sigma)_+,
	\end{align*}
	where the first inequality follows from \Cref{lem:tr-projector} and the second inequality follows from the fact \cite{Tom12} that 
	$$d(\rho,\rho')=\tr(\rho-\rho')_+ \leq \delta$$ by assumption. This proves the claim.
\end{proof}
\begin{remark}
\Cref{prop:D_s-properties}(ii) shows that we only need to focus on one of the information spectrum relative entropies (we choose $\uds(\rho\|\sigma)$ without loss of generality). However, given their close relationship to the quantum spectral inf- and sup divergence (Section \ref{sec:information-spectrum-method}), we note that it is useful to keep the two alternative definitions in \Cref{def:information-spectrum-rel-ent}.
\end{remark}

\subsubsection*{Children entropies of the information spectrum relative entropies}

The quantum relative entropy acts as a parent quantity for other entropic quantities:
\begin{itemize}
	\item the von Neumann entropy $S(\rho)=-D(\rho\|\one)$
	\item the quantum conditional entropy $S(A|B)_\rho = -D(\rho_{AB}\|\one_A\otimes\rho_B)$
	\item the quantum mutual information $I(A:B)_\rho = \min_{\sigma_B}D(\rho_{AB}\|\rho_A\otimes\sigma_B)$
\end{itemize}
This motivates us to define the following information spectrum entropies:
\begin{definition}\label{def:children-entropies}
	Let $\eps\in (0,1)$ and $\omega\in\cD(\cH), \sigma_B\in\cD(\cH_B), \rho_{AB}\in\cD(\cH_{AB})$ be states. Then we define:
	\begin{enumerate}[{\normalfont (i)}]
		\item the \emph{information spectrum entropies}
			\begin{align*}
				\uhs(\omega)&\coloneqq -\ods(\omega\|\one)\\
				\ohs(\omega)&\coloneqq -\uds(\omega\|\one)
			\end{align*}
		\item the \emph{information spectrum conditional entropies}
			\begin{align*}
				\uhs(A|B)_\rho &\coloneqq -\ods(\rho_{AB}\|\one_A\otimes\rho_B)\\
				\ohs(A|B)_\rho &\coloneqq  -\uds(\rho_{AB}\|\one_A\otimes\rho_B)
			\end{align*}
		\item the \emph{information spectrum mutual informations}
			\begin{align*}
				\uis(A:B)_\rho &\coloneqq  \min_{\sigma_B}\uds(\rho_{AB}\|\rho_A\otimes\sigma_B)\\
				\ois(A:B)_\rho &= \min_{\sigma_B}\ods(\rho_{AB}\|\rho_A\otimes\sigma_B)
			\end{align*}
	\end{enumerate}
\end{definition}
Note that in \Cref{def:children-entropies}, (i) and (ii) the occurrence of the minus sign is the reason for changing the upper bar to a lower bar and vice versa. In Section \ref{sec:information-tasks} these quantities arise in one-shot bounds for operational tasks. 

The information spectrum conditional entropies satisfy the following interesting property under local operations and classical communication (LOCC) taking pure states to pure states:
\begin{lemma}\label{lem:lo}
Let $\phi_{AB}\coloneqq \Lambda(\psi_{AB})$ where $\Lambda$ denotes any LOCC operation which takes pure states to pure states,
and $\psi_{AB}\in\cD(\cH_{AB})$ is a bipartite pure state.
Then the following inequality holds:
\begin{align*}
\oH_s^\eps(A|B)_\psi \le \oH_s^\eps(A|B)_\phi
\end{align*}
\end{lemma}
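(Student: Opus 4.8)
The plan is to exploit the defining relation $\oH_s^\eps(A|B)_\psi = -\uds(\psi_{AB}\,\|\,\one_A\otimes\psi_B)$ together with the data-processing inequality from \thref{prop:D_s-properties}(iii), but the obstacle is that the second argument of the information spectrum relative entropy is $\one_A\otimes\sigma_B$ rather than $\Lambda(\one_A\otimes\psi_B)$, so one cannot apply data processing directly. The standard trick for LOCC bounds of this type is to split $\Lambda$ into its constituent pieces. First I would recall that any LOCC map $\Lambda$ can be written as a composition of local operations on $A$ and on $B$ interleaved with classical communication, and that each such elementary step acts as a CPTP map on the global system. The key reduction is to prove the inequality for a single elementary step and then iterate.

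For a single elementary step $\Lambda = \Lambda_A\otimes\id_B$ (a local operation on Alice's side, possibly conditioned on previously received classical data), data processing gives $\uds(\psi_{AB}\,\|\,\one_A\otimes\psi_B)\ge \uds\big((\Lambda_A\otimes\id_B)(\psi_{AB})\,\|\,(\Lambda_A\otimes\id_B)(\one_A\otimes\psi_B)\big)$. Now $(\Lambda_A\otimes\id_B)(\one_A\otimes\psi_B) = \Lambda_A(\one_A)\otimes\psi_B$, and since $\Lambda_A$ is CPTP on density matrices but here applied to $\one_A$, we have $\Lambda_A(\one_A)\le \dim(\cH_A)\,\one_{A'}$, hence $\Lambda_A(\one_A)\otimes\psi_B \le d_A\,\one_{A'}\otimes\psi_B$ where $d_A=\dim\cH_A$. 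Wait — this introduces a constant. The cleaner route: observe that for a local operation on $B$, say $\Lambda = \id_A\otimes\Lambda_B$, we get $(\id_A\otimes\Lambda_B)(\one_A\otimes\psi_B) = \one_A\otimes\Lambda_B(\psi_B)$, which is exactly $\one_A\otimes\sigma_B$ for the new reduced state — so that step works perfectly with data processing alone. For the Alice-side steps and the classical communication steps, I would argue that Alice's reduced marginal and hence the structure $\one_A\otimes(\cdot)_B$ is what matters; more carefully, one uses \thref{prop:D_s-properties}(v): since $\Lambda_A(\one_A)\le d_A\one_{A'}$ is false in general but $\Lambda_A$ trace-preserving on $\cP$ means... let me reconsider. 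Actually the right monotonicity to invoke is (v) combined with the fact that on Alice's side we may always pre-append a step that replaces $\one_A$ with something dominating $\Lambda_A(\one_A)$; alternatively, one appeals to the well-known fact (Tomamichel) that $H(A|B)$-type quantities can only increase under operations on the conditioned system and local processing on $A$ by writing $\Lambda_A$ via a Stinespring isometry followed by a partial trace over an ancilla, then noting the isometry leaves the relative entropy invariant and the partial trace over the $A$-ancilla is again covered by data processing.

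So the key steps, in order, are: (1) decompose $\Lambda$ into elementary one-way rounds; (2) for each round consisting of a local operation on $B$ plus broadcast of classical outcomes, show $\uds(\rho_{AB}\|\one_A\otimes\rho_B)$ decreases using data processing on the full map plus the identity $(\id_A\otimes\Lambda_B)(\one_A\otimes\rho_B)=\one_A\otimes\Lambda_B(\rho_B)$; (3) for each round of local operations on $A$ plus broadcast, dilate $\Lambda_A$ to an isometry $V_A$ on $\cH_A\otimes\cH_E$ followed by tracing out $E$, use invariance of $\uds$ under the isometry $V_A\otimes\one_B$ (which follows from data processing applied to $V$ and to $V^\dagger$), and then data processing for $\tr_E$, again using that $\tr_E$ applied to $\one_{A}\otimes\one_E\otimes\rho_B$ behaves correctly; (4) iterate over all rounds to conclude $\uds(\psi_{AB}\|\one_A\otimes\psi_B)\ge \uds(\sigma_{AB}\|\one_A\otimes\sigma_B)$, which upon negation gives the claim $\oH_s^\eps(A|B)_\psi\le\oH_s^\eps(A|B)_\sigma$.

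The main obstacle I expect is handling the Alice-side operations cleanly — keeping the second argument of $\uds$ of the form $\one\otimes(\text{marginal on }B)$ after applying a non-trace-preserving-looking map to $\one_A$; the isometric-dilation argument in step (3), together with careful bookkeeping of which subsystems are being discarded, is where the real work lies, whereas the $B$-side steps and the iteration are essentially immediate from \thref{prop:D_s-properties}.
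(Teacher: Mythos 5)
There is a genuine gap, and it sits exactly where you suspected: the Alice-side steps. Your dilation argument for a local operation $\Lambda_A$ with Stinespring isometry $V:\cH_A\to\cH_{A'}\otimes\cH_E$ does not close, because $\tr_E$ applied to the second argument produces $\tr_E(\one_{A'}\otimes\one_E)=\dim(\cH_E)\,\one_{A'}$, so by \thref{prop:D_s-properties}(vi) you pick up an additive error $-\log\dim\cH_E$ that does not go away; the same constant you noticed and set aside in the $\Lambda_A(\one_A)\le d_A\one_{A'}$ attempt reappears here. This is not a bookkeeping issue that more care will fix: the underlying inequality you are trying to prove round by round --- that $\uds(\rho_{AB}\|\one_A\otimes\rho_B)$ decreases under an \emph{arbitrary} local CPTP map on $A$ --- is false in general (non-unital maps on $A$ can decrease conditional-entropy-type quantities), and your proposal never invokes the one hypothesis that saves the lemma, namely that $\psi_{AB}$ is \emph{pure}. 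A further difficulty with the round-by-round decomposition is that after the first measurement round the averaged state is no longer pure, so even a correct single-round argument for pure inputs would not iterate.

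The paper's proof avoids all of this by invoking the Lo--Popescu structure theorem: any LOCC map acting on a bipartite \emph{pure} state can be realized in the form $\Lambda(\psi_{AB})=\sum_j (U_j\otimes K_j)\psi_{AB}(U_j^\dagger\otimes K_j^\dagger)$ with the $U_j$ unitary and $\sum_j K_j^\dagger K_j=\one_B$, i.e.~all of Alice's operations can be pushed into conditional unitaries. For such a map one computes directly that $\Lambda(\one_A\otimes\psi_B)=\sum_j U_jU_j^\dagger\otimes K_j\psi_BK_j^\dagger=\one_A\otimes\sigma_B$ exactly, with no dimension factor, and a single application of the data-processing inequality \thref{prop:D_s-properties}(iii) gives $\uD_s^\eps(\sigma_{AB}\|\one_A\otimes\sigma_B)\le\uD_s^\eps(\psi_{AB}\|\one_A\otimes\psi_B)$, which is the claim. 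The missing idea in your proposal is precisely this reduction of Alice's side to unitaries via purity; your treatment of the Bob-side rounds is fine but is also the easy half.
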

\begin{proof}
By definition, we have $- \oH_s^\eps(A|B)_\psi = \uD_s^\eps(\psi_{AB}||\one_A \otimes \psi_B),$ and furthermore, 
\begin{align*}
- \oH_s^\eps(A|B)_\phi &= \uD_s^\eps(\phi_{AB}||\one_A \otimes \phi_B)\\
&= \uD_s^\eps(\Lambda(\psi_{AB})||\one_A \otimes \phi_B).
\end{align*}
However, by a result of Lo and Popescu~\cite{LP01}, the action of the
LOCC map $\Lambda$ on the pure state $\psi_{AB}$ can be expressed as follows:
\begin{align}\label{Lo-Pop}
\phi_{AB}= \Lambda(\psi_{AB}) = \sum_j (U_j \otimes K_j)\psi_{AB}(U_j^\dagger \otimes K_j^\dagger),
\end{align}
where the $U_j$ are unitary operators and $K_j$ are operators such that $\sum_j K_j^\dagger K_j = \one_B$. Consequently, using the cyclicity of the trace we obtain
\begin{align}\label{eq:red1}
\phi_B = \tr_A \phi_{AB} = \sum_j K_j\psi_{B}K_j^\dagger.
\end{align}
Further, for $\psi_B = \tr_A \psi_{AB}$, we have
\begin{align*}
\Lambda( \one_A \otimes \psi_B) =  \one_A \otimes \phi_B,
\end{align*}
which can be seen as follows:
\begin{align*}
\Lambda(\one_A \otimes \psi_B) &= \sum_j U_j U_j^\dagger \otimes K_j\psi_{B}K_j^\dagger\nonumber\\
&= \one_A \otimes \sum_j  K_j\psi_{B}K_j^\dagger\nonumber\\
&= \one_A \otimes \phi_B,
\end{align*}
where the second identity follows from the unitarity of the operators $U_j$, and the last identity follows from \eqref{eq:red1}. Hence,
\begin{align*}
- \oH_s^\eps(A|B)_\phi &=  \uD_s^\eps(\Lambda(\psi_{AB})||\one_A \otimes \phi_B)\nonumber\\
&=  \uD_s^\eps(\Lambda(\psi_{AB})||\Lambda(\one_A \otimes \psi_B))\nonumber\\
& \le  \uD_s^\eps(\psi_{AB}||\one_A \otimes \psi_B)\nonumber\\
&= -  \oH_s^\eps(A|B)_\psi,
\end{align*}
where the inequality follows from \Cref{prop:D_s-properties}, (iii).
\end{proof}

\subsection{Relation to other relative entropies}
\label{sec:relation-to-other-entropies}
In this section we prove that the information spectrum relative entropies are equivalent to the hypothesis testing relative entropies and the smooth max-relative entropy, which arise in one-shot information theory. This equivalence is in the sense that upper and lower bounds to any one of them can be obtained in terms of any one of the others, modulo terms which depend only on the (smoothing) parameter $\eps$. Let us first recall the definitions of the hypothesis testing relative entropies and the smooth max-relative entropy:
\begin{definition}~\label{def:rel-entropies}
	\begin{enumerate}[{\normalfont (i)}]
		\item For $0<\eps<1$ and $\rho,\sigma\in\cD(\cH)$, the \emph{hypothesis testing relative entropy} $D_H^\eps(\rho\|\sigma)$ is defined as
			$$D_H^\eps(\rho\|\sigma)\coloneqq  -\log\inf_{0\leq Q\leq\one}\lbrace \tr Q\sigma\mid\tr Q\rho\geq 1-\eps\rbrace.$$

		\item For $\eps>0$, $\rho\in\cDs(\cH)$ and $\sigma\geq 0$, the \emph{smooth max-relative entropy} $\Dmaxe(\rho\|\sigma)$ is defined as
			$$\Dmaxe(\rho\|\sigma) \coloneqq  \min_{\brho\in\Be(\rho)}\inf\lbrace\gamma\mid\brho\leq 2^\gamma\sigma\rbrace,$$
			where $\Be(\rho)\coloneqq \lbrace \brho\in\cDs(\cH)\mid F(\brho,\rho)^2\geq 1-\eps^2\rbrace$ is the $\eps$-ball with respect to the purified distance $P(\rho,\sigma)$.
	\end{enumerate}
\end{definition}
We prove the following relations between these relative entropies:
\begin{proposition}\label{thm:D_s-bounds}
	Let $0<\eps<1$, $\delta,\eta>0$, and $\rho,\sigma>0$ with $\tr\rho\leq 1$. Then we obtain the following bounds: 
	\begin{enumerate}[{\normalfont (i)}]
		\item $D_H^{\eps-\delta}(\rho\|\sigma)+\log\delta\leq\uds(\rho\|\sigma)\leq D_H^\eps(\rho\|\sigma)$
		\item $D_H^{\eps-\eta}(\rho\|\sigma) +\log\eta\leq \oD_s^{1-\eps}(\rho\|\sigma)\leq D_H^\eps(\rho\|\sigma)+\delta$
		\item $\Dmax^{\sqrt{8\eps}}(\rho\|\sigma) \leq \ods(\rho\|\sigma)\leq \Dmaxe(\rho\|\sigma)$
	\end{enumerate}
\end{proposition}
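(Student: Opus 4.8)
The plan is to prove the three double-bounds separately, relying on \thref{lem:tr-projector}, \thref{lem3}, and \thref{prop:D_s-properties}. The unifying idea is that the quantity $\tr[(\rho - 2^\gamma\sigma)\{\rho > 2^\gamma\sigma\}]$ appearing in the definition of $\uds(\rho\|\sigma)$ is sandwiched between $\tr[\rho\{\rho > 2^\gamma\sigma\}] - 2^\gamma\tr[\sigma\{\rho>2^\gamma\sigma\}]$ and $\tr[\rho\{\rho>2^\gamma\sigma\}]$; the first expression connects to the hypothesis testing quantity via the test $Q = \{\rho > 2^\gamma\sigma\}$ and the estimate $\tr[Q\sigma]\le 2^{-\gamma}$ from \thref{lem3}, while the second is exactly a threshold on the ``information spectrum'' of $\rho$ relative to $\sigma$.

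For part (i), for the upper bound $\uds(\rho\|\sigma)\le D_H^\eps(\rho\|\sigma)$: take $\gamma = \uds(\rho\|\sigma)$ and $Q = \{\rho > 2^\gamma\sigma\}$, so that $\tr[Q\rho] \ge \tr[(\rho-2^\gamma\sigma)Q] = 1-\eps$, hence $Q$ is feasible for the hypothesis testing optimization, giving $\tr[Q\sigma]\ge 2^{-D_H^\eps(\rho\|\sigma)}$; on the other hand \thref{lem3} gives $\tr[Q\sigma]\le 2^{-\gamma}$, and combining yields $\gamma \le D_H^\eps(\rho\|\sigma)$. For the lower bound $D_H^{\eps-\delta}(\rho\|\sigma) + \log\delta \le \uds(\rho\|\sigma)$: let $Q^*$ be the optimal test for $D_H^{\eps-\delta}(\rho\|\sigma)$, so $\tr[Q^*\rho]\ge 1-\eps+\delta$ and $\tr[Q^*\sigma] = 2^{-D_H^{\eps-\delta}(\rho\|\sigma)}$. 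Set $\gamma = D_H^{\eps-\delta}(\rho\|\sigma) + \log\delta$, so $2^\gamma\tr[Q^*\sigma] = \delta$. Then
\begin{align*}
\tr[(\rho-2^\gamma\sigma)\{\rho>2^\gamma\sigma\}] &\ge \tr[(\rho-2^\gamma\sigma)Q^*]\\
&= \tr[Q^*\rho] - 2^\gamma\tr[Q^*\sigma]\\
&\ge (1-\eps+\delta) - \delta = 1-\eps,
\end{align*}
where the first inequality is \thref{lem:tr-projector}. Hence $\gamma$ is feasible for $\uds(\rho\|\sigma)$, proving the bound.

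Part (ii) follows almost immediately from part (i) together with \thref{prop:D_s-properties}(ii), which gives $\oD_s^{1-\eps}(\rho\|\sigma) = \uds(\rho\|\sigma)$; the asserted bounds are then just (i) with the roles of $\eta$ and $\delta$ as the two independent slack parameters (the extra $+\delta$ on the right is harmless slack, coming from replacing $\uds$ by $\uds + \delta$, or can be absorbed directly). For part (iii), recall $\ods(\rho\|\sigma) = \inf\{\gamma\mid\tr(\rho-2^\gamma\sigma)_+\le\eps\}$. The upper bound $\ods(\rho\|\sigma)\le\Dmaxe(\rho\|\sigma)$: pick $\bar\rho\in\Be(\rho)$ and $\gamma$ with $\bar\rho\le 2^\gamma\sigma$ achieving $\Dmaxe(\rho\|\sigma)$; then $\rho - 2^\gamma\sigma \le \rho - \bar\rho$, so by \thref{lem:trace-quantity}(ii) $\tr(\rho-2^\gamma\sigma)_+\le\tr(\rho-\bar\rho)_+ = d(\rho,\bar\rho)\le P(\rho,\bar\rho)\le\eps$ (using \thref{lem:trace-purified-bound} and $P(\rho,\bar\rho)^2 = 1 - F(\bar\rho,\rho)^2 \le \eps^2$), so $\gamma$ is feasible for $\ods$. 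For the lower bound $\Dmax^{\sqrt{8\eps}}(\rho\|\sigma)\le\ods(\rho\|\sigma)$: let $\gamma = \ods(\rho\|\sigma)$, so $\tr(\rho-2^\gamma\sigma)_+\le\eps$; define $\bar\rho = \rho - (\rho-2^\gamma\sigma)_+ = \rho - P(\rho-2^\gamma\sigma)P$ with $P = \{\rho>2^\gamma\sigma\}$ (i.e.\ ``chop off'' the part where $\rho$ exceeds $2^\gamma\sigma$), which satisfies $0\le\bar\rho\le 2^\gamma\sigma$ and $\tr[\rho]-\tr[\bar\rho] = \tr(\rho-2^\gamma\sigma)_+\le\eps$; then one shows $\bar\rho\in\cB^{\sqrt{8\eps}}(\rho)$ by bounding the purified distance $P(\rho,\bar\rho)$ in terms of $\tr[\rho-\bar\rho]$ (via a gentle-measurement-type estimate, or via \thref{lem:trace-purified-bound}: $P(\rho,\bar\rho)\le\sqrt{2d(\rho,\bar\rho)}\le\sqrt{2\eps}$, which is even stronger — so $\sqrt{8\eps}$ is comfortable slack), and hence $\gamma\ge\Dmax^{\sqrt{8\eps}}(\rho\|\sigma)$.

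The main obstacle I expect is part (iii), specifically getting the purified-distance ball radius right in the lower bound: the natural candidate $\bar\rho$ obtained by truncating $\rho$ is subnormalized with $\tr\bar\rho\ge 1-\eps$, and one must carefully verify $F(\bar\rho,\rho)^2\ge 1-(\sqrt{8\eps})^2 = 1-8\eps$ using the definition of the \emph{generalized} fidelity for subnormalized states (the $\sqrt{(1-\tr\bar\rho)(1-\tr\rho)}$ term, with $\tr\rho\le 1$ possibly strict). The cleanest route is probably to relate $F(\bar\rho,\rho)$ to $\|\sqrt{\bar\rho}\sqrt{\rho}\|_1$ and use that $\bar\rho\le\rho$ (so $\sqrt{\bar\rho}\sqrt{\rho}$ is well-behaved and $\|\sqrt{\bar\rho}\sqrt{\rho}\|_1\ge\tr\bar\rho$), which quickly gives $F(\bar\rho,\rho)^2 \ge (\tr\bar\rho)^2 \ge (1-\eps)^2 \ge 1-2\eps \ge 1 - 8\eps$; the constant $8$ is then just a safe over-estimate matching the statement. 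The hypothesis testing bounds in (i) and (ii) are comparatively routine given Lemmas~\ref{lem:tr-projector} and~\ref{lem3}.
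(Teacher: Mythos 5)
Parts (i) and (ii) of your argument are correct and essentially the paper's proof: the paper also certifies the upper bound in (i) by feeding the test $Q=\lbrace\rho>2^\gamma\sigma\rbrace$ into $D_H^\eps$ and bounding $\tr Q\sigma\leq 2^{-\gamma}$, and the lower bound by evaluating the optimal hypothesis test on $(\rho-2^\gamma\sigma)$ via \thref{lem:tr-projector} with the shift $\log\delta$. Your derivation of (ii) as a corollary of (i) and the identity $\oD_s^{1-\eps}(\rho\|\sigma)=\uds(\rho\|\sigma)$ from \thref{prop:D_s-properties}(ii) is a legitimate shortcut (that identity is established before this proposition, so there is no circularity); the paper instead reproves (ii) directly, but your route is cleaner. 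The upper bound in (iii) also matches the paper's argument (splitting $\rho-2^\gamma\sigma=(\rho-\trho)+(\trho-2^\gamma\sigma)$ and using $d\leq P$).

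The genuine gap is in your lower bound for (iii). You define $\brho=\rho-(\rho-2^\gamma\sigma)_+$ and claim $0\leq\brho\leq2^\gamma\sigma$. The second inequality does hold (since $P=\lbrace\rho>2^\gamma\sigma\rbrace$ commutes with $A=\rho-2^\gamma\sigma$, one gets $\brho=2^\gamma\sigma+\bar PA\bar P\leq2^\gamma\sigma$), but positivity of $\brho$ fails in general: in the block decomposition with respect to $P$, $\brho$ has the blocks of $2^\gamma\sigma$ everywhere except in the $\bar P\cdot\bar P$ corner, where $2^\gamma\sigma$ is replaced by the smaller block $\bar P\rho\bar P$, and shrinking one diagonal block of a positive matrix while keeping its off-diagonal blocks can destroy positivity. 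A concrete $2\times2$ instance: take $2^\gamma\sigma=\bigl(\begin{smallmatrix}0.5&0.499\\0.499&0.5\end{smallmatrix}\bigr)$ and $\rho=\bigl(\begin{smallmatrix}0.51&0.499\\0.499&0.49\end{smallmatrix}\bigr)$; then $A=\mathrm{diag}(0.01,-0.01)$, $P=|0\rangle\langle0|$, and $\brho=\bigl(\begin{smallmatrix}0.5&0.499\\0.499&0.49\end{smallmatrix}\bigr)$ has negative determinant. So $\brho$ is not a subnormalized state and cannot witness membership in the smoothing ball; the subsequent fidelity estimate $F(\brho,\rho)^2\geq(\tr\brho)^2$ (and the claim that $\sqrt{8\eps}$ is ``comfortable slack'') therefore rests on an invalid construction. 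This is exactly the point where the quantum case is harder than the classical one, where truncation does work. The paper sidesteps the construction entirely by invoking Lemma~14 of~\cite{Dat09}, whose proof uses a conjugation-type smoothing $\brho=G\rho G^\dagger$ (rather than subtraction) and is where the constant $\sqrt{8\eps}$ actually comes from; to complete your proof you would either cite that lemma or reproduce such a construction.
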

\begin{proof}
	(i) To prove the upper bound, let $Q\coloneqq\lbrace \rho > 2^\gamma\sigma\rbrace$ where $\gamma=\uds(\rho\|\sigma)-\xi$ for some $\xi>0$. Since
	$$\tr\rho Q \geq \tr[(\rho-2^\gamma\sigma)Q] \geq 1-\eps,$$
	we see that $Q$ is feasible for $D_H^\eps(\rho\|\sigma)$. Hence,
	\begin{align*}
		\tr\sigma Q &= \tr[\sigma\lbrace\rho > 2^\gamma\sigma\rbrace]\\
		&\leq 2^{-\gamma} \tr[\rho \lbrace\rho > 2^\gamma\sigma\rbrace]\\
		&\leq 2^{-\gamma}
	\end{align*}
	since $\tr\rho\leq 1$. By the definition of $D_H^\eps(\rho\|\sigma)$, we obtain
	$$D_H^\eps(\rho\|\sigma) \geq \gamma = \uds(\rho\|\sigma)-\xi,$$
	and hence the upper bound.

	Conversely, let $0\leq Q\leq\one$ be optimal for $D_H^\eps(\rho\|\sigma)$ and set $\mu=\log\delta + D_H^\eps(\rho\|\sigma)$. Then
	\begin{align*}
		\tr(\rho-2^\mu\sigma)_+ &\geq \tr[ Q(\rho-2^\mu\sigma)]\\
		&= \tr Q\rho - 2^\mu\tr Q\sigma\\
		&\geq 1-(\eps+\delta),
	\end{align*}
	where the first inequality follows from \Cref{lem:tr-projector}. Hence, $\mu$ is feasible for $\uD_s^{\eps+\delta}(\rho\|\sigma)$ and we obtain 
	$$\uD_s^{\eps+\delta}(\rho\|\sigma)\geq D_H^\eps(\rho\|\sigma)+\log\delta.$$

	(ii) We start with the upper bound. Let $\gamma=\oD_s^{1-\eps}(\rho\|\sigma)-\delta$ for some arbitrary $\delta>0$ and set $Q\coloneqq \lbrace \rho\geq 2^\gamma\sigma\rbrace$. We have $\tr [Q(\rho-2^\gamma\sigma)]> 1-\eps$ by definition of $\oD_s^{1-\eps}(\rho\|\sigma)$, and hence, 
	$$\tr Q\rho \geq \tr [Q(\rho-2^\gamma\sigma)] > 1-\eps.$$
	Thus, $Q$ is feasible for $D_H^\eps(\rho\|\sigma)$. Furthermore, 
	$$\tr Q\sigma = \tr[\lbrace \rho\geq 2^\gamma \sigma\rbrace\sigma]\leq 2^{-\gamma}\tr\rho \leq 2^{-\gamma},$$
	implying that $$D_H^\eps(\rho\|\sigma)\geq -\log\tr Q\sigma \geq \gamma = \oD_s^{1-\eps}(\rho\|\sigma)-\delta.$$
	
	Conversely, assume that $D_H^\eps(\rho\|\sigma)<\infty$ and let $0\leq Q\leq\one$ be optimal for $D_H^\eps(\rho\|\sigma)$, such that 
	\begin{align}\label{eq:optimal-Q}
		\tr Q\sigma = 2^{-D_H^\eps(\rho\|\sigma)}.
	\end{align}
	For arbitrary $\kappa$ we have
	\begin{align}
		1-\eps&\leq \tr Q\rho\notag\\
		&= \tr [Q(\rho-2^\kappa\sigma)] + 2^\kappa\tr Q\sigma\notag\\
		&\leq \tr[\lbrace\rho\geq 2^\kappa\sigma\rbrace(\rho-2^\kappa\sigma)] + 2^{\kappa-D_H^\eps(\rho\|\sigma)},\label{eq:kappa}
	\end{align}
	where we used \Cref{lem:tr-projector} and \eqref{eq:optimal-Q} for the first and second terms in the last inequality. Choosing $\kappa = \oD_s^{1-(\eps+\eta)}(\rho\|\sigma) + \delta$ for an arbitrary $\delta>0$ implies that
	\begin{align}
	\tr(\rho-2^\kappa\sigma)_+\leq 1-(\eps+\eta).\label{eq:kappa2}
	\end{align}
	From \eqref{eq:kappa} and \eqref{eq:kappa2} we have $1-\eps\leq 1-(\eps+\eta) + 2^{\kappa-D_H^\eps(\rho\|\sigma)}$ and hence
	$$\kappa = \oD_s^{1-(\eps+\eta)}(\rho\|\sigma) + \delta \geq D_H^{\eps}(\rho\|\sigma)+\log\eta,$$ which results in the lower bound, since $\delta$ is arbitrary.

		(iii) To prove the lower bound, we set $\gamma=\ods(\rho\|\sigma)+\delta$ for some arbitrary $\delta>0$. Then we obtain the bounds 
		\begin{align*}
			(\rho-2^\gamma\sigma)_+ \geq \rho-2^\gamma\sigma\quad\text{and}\quad\sqrt{8\tr(\rho-2^\gamma\sigma)_+}\leq\sqrt{8\eps},
		\end{align*}
		where the second bound follows from the definition of $\gamma$. Hence, we can apply Lemma 14 in \cite{Dat09}, which yields
		$$\Dmax^{\sqrt{8\eps}}(\rho\|\sigma)\leq\gamma = \ods(\rho\|\sigma)+\delta$$
		and therefore implies the result.
		
		Conversely, let $\trho$ achieve the maximum in the definition of $\Dmaxe(\rho\|\sigma)$, i.e.~$P(\rho,\trho)\leq\eps$ and $\trho\leq 2^\gamma\sigma$ with $\gamma=\Dmaxe(\rho\|\sigma)+\delta$ for some arbitrary $\delta>0$. Note that this implies
		\begin{align}\label{eq:Dmax-zero}
			\tr(\trho-2^\gamma\sigma)_+=0.
		\end{align}
		We proceed by employing the trace distance $d(\rho,\sigma)$, making use of the fact \cite{Tom12} that the trace distance of two subnormalized states $\rho$ and $\sigma$ can be expressed as 
		\begin{align}\label{eq:trace-distance-as-proj}
			d(\rho,\sigma)=\tr(\rho-\sigma)_+,
		\end{align}
		which is exactly of the form of the trace quantity in the definition of the information spectrum relative entropies. 
		We compute:
		\begin{align*}
			\tr(\rho-2^\gamma\sigma)_+ &\leq \tr[\lbrace\rho>2^\gamma\sigma\rbrace(\rho-2^\gamma\sigma)]\\
			&= \tr[\lbrace\rho > 2^\gamma\sigma\rbrace (\rho-\trho + \trho - 2^\gamma\sigma)]\\
			&\leq \tr[\lbrace\rho>\trho\rbrace(\rho-\trho)] + \tr[\lbrace\trho> 2^\gamma\sigma\rbrace(\trho-2^\gamma\sigma)]\\
			&= d(\rho,\trho)\\
			&\leq \eps,
		\end{align*}
		where we used \Cref{lem:tr-projector} in the second inequality and equations \eqref{eq:Dmax-zero}, \eqref{eq:trace-distance-as-proj} and \Cref{lem:trace-purified-bound} in the following lines. We infer that $\gamma$ is feasible for $\ods(\rho\|\sigma)$, and hence, we obtain
		$$\ods(\rho\|\sigma)\leq \gamma = \Dmaxe(\rho\|\sigma) + \delta,$$ which yields the claim.
\end{proof}

\subsection{Second order asymptotics of the information spectrum relative entropies}
\label{sec:2nd-order-asymptotics-D_s}
In the last section we established bounds for the information spectrum relative entropies $\uds(\rho\|\sigma)$ and $\ods(\rho\|\sigma)$ in terms of the hypothesis relative testing entropy $D_H^\eps(\rho\|\sigma)$. This implies that the second order asymptotic expansion of $D_H^\eps(\rho\|\sigma)$ readily carries over to $\uds(\rho\|\sigma)$ and $\ods(\rho\|\sigma)$. Before we prove this result, for the sake of completeness we briefly outline the procedure used by Tomamichel and Hayashi \cite{TH13} to obtain the second order asymptotic expansion of $D_H^\eps(\rho\|\sigma)$:

For a given state $\rho$ and a positive operator $\sigma$ with spectral decompositions $$\rho=\sumi_xr_x |v_x\X v_x|\quad\text{and}\quad\sigma = \sumi_y s_y |u_y\X u_y|,$$ consider first the Nussbaum-Szko\l a probability distributions defined by $$P_{\rho,\sigma}(x,y) \coloneqq r_x|\langle v_x|u_y\rangle|^2\quad\text{and}\quad Q_{\rho,\sigma}(x,y)\coloneqq s_y |\langle v_x|u_y\rangle |^2.$$ The usefulness of these distributions lies in the following fact: the first two moments of the log likelihood ratio of $P_{\rho,\sigma}$ and $Q_{\rho,\sigma}$, i.e.~the random variable $Z=\log P_{\rho,\sigma}(X,Y)-\log Q_{\rho,\sigma}(X,Y)$ distributed according to the distribution $P_{\rho,\sigma}(x,y)$, agree with the quantum relative entropy and the quantum information variance (given in \Cref{def:quantum-relative-entropy}), that is,
\begin{align}\label{eq:nussbaum-szkola}
D(P_{\rho,\sigma}\|Q_{\rho,\sigma}) = D(\rho\|\sigma)\quad\text{and}\quad V(P_{\rho,\sigma}\|Q_{\rho,\sigma})=V(\rho\|\sigma).
\end{align}
Furthermore, for i.i.d.~states $\rho^n\equiv\rho^{\otimes n}$ and $\sigma^n\equiv\sigma^{\otimes n}$ the Nussbaum-Szko\l a distributions take on the product form $P_{\rho^{n},\sigma^{n}} = P_{\rho,\sigma}^n$ and $Q_{\rho^{n},\sigma^{n}} = Q_{\rho,\sigma}^n$.

Consider the classical entropic quantity $$D_s^\eps(P||Q) \coloneqq \sup\lbrace R\mid \PP(\log P - \log Q\leq R)\leq \eps\rbrace.$$
It can be recognized to be the inverse of the cumulative distribution function of the log likelihood ratio $Z$. The Berry-Esseen theorem \cite{Fel71} states that the random variable $Y=\sqrt{n}\frac{Z-\mu}{s}$ converges to the standard normal distribution and provides a bound on the rate of this convergence. Here, $\mu$ and $s$ are the mean and standard deviation of $Z$ respectively. Carrying over the Berry-Esseen bound to the inverse cumulative distribution functions of $Y$ and $Z$, we obtain an asymptotic expansion for the quantity $D_s^\eps(P^n\|Q^n)$ in the form of
\begin{align}\label{eq:classical-entropy}
D_s^\eps(P^n\|Q^n) = n\mu + \sqrt{n}\,s\,\invP{\eps} + \cO(1),
\end{align}
where $\invP{\eps}$ is the inverse of the cdf of the standard normal distribution. Choosing $P=P_{\rho,\sigma}$ and $Q=Q_{\rho,\sigma}$ in \eqref{eq:classical-entropy} now connects this asymptotic expansion to $\rho$ and $\sigma$, since we have $\mu=D(\rho\|\sigma)$ and $s^2=V(\rho\|\sigma)$ for $Z=\log P_{\rho,\sigma}-\log Q_{\rho,\sigma}$ according to \eqref{eq:nussbaum-szkola}.

The last step consists in finding upper and lower bounds for the relative entropy in question in terms of $D_s^\eps(P_{\rho,\sigma}^n\|Q_{\rho,\sigma}^n)$, which then yields the asymptotic expansion of the former, stated precisely in the following proposition proved in \cite{TH13}.
\begin{proposition}\label{thm:D_H-asymptotics}
	 Given the i.i.d.~states $\rho^{\otimes n}$ and $\sigma^{\otimes n}$, the hypothesis testing relative entropy $D_H^\eps(\rho\|\sigma)$ has the following second order asymptotic expansion:
	$$D_H^\eps(\rho^{\otimes n}\|\sigma^{\otimes n}) = nD(\rho\|\sigma) + \sqrt{n}\,\mathfrak{s}(\rho\|\sigma)\Phi^{-1}(\eps)+\cO(\log n),$$
	where $\mathfrak{s}(\rho\|\sigma)$ is defined as in \eqref{eq:frak-s} and $\Phi^{-1}(\eps) \coloneqq  \sup\lbrace z\in\mathbb{R}\mid \Phi(z)\leq \eps\rbrace$ is the inverse of the cumulative normal distribution $\Phi$.
\end{proposition}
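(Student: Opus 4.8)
The plan is to carry out the programme of Tomamichel and Hayashi~\cite{TH13} outlined above: to relate $D_H^\eps(\rho^{\otimes n}\|\sigma^{\otimes n})$ to the classical information-spectrum quantity $D_s^\eps(P^n\|Q^n)$ of the Nussbaum--Szko\l a distributions $P\coloneqq P_{\rho,\sigma}$ and $Q\coloneqq Q_{\rho,\sigma}$, and then to apply the Berry--Esseen theorem to the latter. We may assume $\supp\rho\subseteq\supp\sigma$, so that $D(\rho\|\sigma)$ and $V(\rho\|\sigma)$ are finite; otherwise $D(\rho\|\sigma)=\infty$, and the projector onto the orthogonal complement of $\supp\sigma^{\otimes n}=(\supp\sigma)^{\otimes n}$ is, for $n$ large, a feasible test (its $\rho^{\otimes n}$-weight $1-(\tr[\Pi_{\supp\sigma}\rho])^n$ tends to $1$) of zero $\sigma^{\otimes n}$-weight, so $D_H^\eps(\rho^{\otimes n}\|\sigma^{\otimes n})=\infty$, matching the right-hand side.

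\textbf{Step 1: the classical expansion.} Under $P$ the log-likelihood ratio $Z=\log P(X,Y)-\log Q(X,Y)$ has mean $D(\rho\|\sigma)$ and variance $V(\rho\|\sigma)$ by~\eqref{eq:nussbaum-szkola}, and a finite third absolute moment since the index set is finite. As $P^n$ and $Q^n$ are the $n$-fold products of $P$ and $Q$, the variable $\log P^n-\log Q^n$ is a sum of $n$ i.i.d.\ copies of $Z$; transferring the Berry--Esseen theorem~\cite{Fel71} to inverse cumulative distribution functions as in the outline above yields~\eqref{eq:classical-entropy}, i.e.
\[
D_s^\eps(P^n\|Q^n)=nD(\rho\|\sigma)+\sqrt n\,\sk(\rho\|\sigma)\,\invP{\eps}+\cO(1),
\]
with the $\cO(1)$ term equal to $0$ in the degenerate case $V(\rho\|\sigma)=0$.

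\textbf{Step 2: the sandwich (the main obstacle).} I would then prove, for a slack parameter $\nu\in(0,\eps)$, a bound of the schematic form
\[
D_s^{\eps-\nu}(P^n\|Q^n)+\log\nu\ \le\ D_H^\eps(\rho^{\otimes n}\|\sigma^{\otimes n})\ \le\ D_s^{\eps+\nu}(P^n\|Q^n)-\log\nu,
\]
in the spirit of the bounds already established for $\uds$ in \thref{thm:D_s-bounds}(i),(ii). For the lower bound one exhibits the Neyman--Pearson test $\{\rho^{\otimes n}>2^{\gamma}\sigma^{\otimes n}\}$ with $\gamma$ just below the relevant classical quantile: \thref{lem3} bounds its $\sigma^{\otimes n}$-weight by $2^{-\gamma}$, while its $\rho^{\otimes n}$-weight is shown to exceed $1-\eps$. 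For the upper bound one shows instead that \emph{every} feasible test has $\sigma^{\otimes n}$-weight at least $\nu\,2^{-\kappa}$ by using \thref{lem:tr-projector} to pass from the test to a spectral projector of $\rho^{\otimes n}-2^{\kappa}\sigma^{\otimes n}$, with $\kappa$ chosen so that $\tr(\rho^{\otimes n}-2^{\kappa}\sigma^{\otimes n})_+\le 1-\eps-\nu$. In both directions the delicate ingredient is the comparison between the quantum trace quantities $\tr[\{\rho^{\otimes n}>2^{\gamma}\sigma^{\otimes n}\}\rho^{\otimes n}]$ and $\tr(\rho^{\otimes n}-2^{\gamma}\sigma^{\otimes n})_+$ on one hand, and the classical probabilities $\PP_{P^n}[\log P^n-\log Q^n>\gamma]$, $\PP_{Q^n}[\log P^n-\log Q^n>\gamma]$ on the other: since $\rho$ and $\sigma$ need not commute, the spectral projector is not simultaneously diagonal with $P^n$ and $Q^n$, and the crux of \cite{TH13} is that this mismatch costs only an additive $\cO(\log\tfrac1\nu)$, rather than a loss growing with $\dim\cH^{\otimes n}$ (which would be fatal). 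I expect this comparison to be the main work.

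\textbf{Step 3: conclusion.} Setting $\nu=1/\sqrt n$ in Step 2 and inserting the expansion of Step 1 gives
\[
D_H^\eps(\rho^{\otimes n}\|\sigma^{\otimes n})=nD(\rho\|\sigma)+\sqrt n\,\sk(\rho\|\sigma)\,\invP{\eps\pm 1/\sqrt n}+\cO(\log n),
\]
and \thref{lem:phi-trick} rewrites $\sqrt n\,\invP{\eps\pm 1/\sqrt n}$ as $\sqrt n\,\invP{\eps}+\cO(1)$ (the derivative $(\Phi^{-1})'$ at the intermediate point is $\cO(1)$, since for fixed $\eps\in(0,1)$ that point is bounded away from $0$ and $1$). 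Absorbing this $\cO(1)$ together with the $\pm\tfrac12\log n$ contributed by the slack into the $\cO(\log n)$ remainder yields the stated expansion.
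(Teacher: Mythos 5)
The paper does not actually prove this proposition: it is imported verbatim from Tomamichel and Hayashi \cite{TH13}, and the text preceding it is only an outline of their method. Your proposal reproduces that same programme (Nussbaum--Szko\l a distributions, Berry--Esseen for the classical $D_s^\eps$, a sandwich with slack $\nu=1/\sqrt n$, then \thref{lem:phi-trick}), so at the level of strategy you and the paper coincide. Steps~1 and~3 are fine, including the support caveat and the degenerate case $V(\rho\|\sigma)=0$ (which must anyway be handled separately, since Berry--Esseen requires $\sk>0$).

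The genuine gap is Step~2, and you have in effect labelled it yourself: the assertion that the mismatch between the quantum trace quantities $\tr[\{\rho^{\otimes n}>2^{\gamma}\sigma^{\otimes n}\}\rho^{\otimes n}]$, $\tr(\rho^{\otimes n}-2^{\gamma}\sigma^{\otimes n})_+$ and the classical tail probabilities of $\log P^n-\log Q^n$ costs only an additive $\cO(\log\tfrac1\nu)$ is precisely the content of the theorem, and nothing in your sketch establishes it. \thref{lem3} and \thref{lem:tr-projector} only relate $D_H^\eps$ to the \emph{quantum} information-spectrum quantity (this is \thref{thm:D_s-bounds}(i)--(ii) of the paper); they say nothing about the classical pair $(P^n,Q^n)$, because the Neyman--Pearson projector is not diagonal in a basis in which $P^n$ and $Q^n$ both live. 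In \cite{TH13} the sandwich is therefore done in two separate stages --- quantum $D_H^\eps$ versus quantum $D_s^\eps$, then quantum $D_s^\eps$ versus classical $D_s^\eps(P^n\|Q^n)$ --- and the second stage rests on the operator inequalities underlying the quantum Chernoff bound (Audenaert et al.\ and Nussbaum--Szko\l a), which bound $\tr[\rho\{\rho\le 2^{\gamma}\sigma\}]+2^{\gamma}\tr[\sigma\{\rho>2^{\gamma}\sigma\}]$ above and below by the corresponding classical expressions up to constant factors; those constants are what become the $\eps\pm\nu$ and $\pm\log\nu$ shifts. Without invoking (or reproving) that inequality, your sandwich is a statement of what needs to be shown rather than a proof of it. Since the paper itself treats the proposition as a citation, the cleanest fix is either to do the same, or to supply the Chernoff-bound lemma explicitly as the missing ingredient of Step~2.
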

This yields the second order asymptotics of the information spectrum relative entropies:
\begin{proposition}\label{thm:D_s-asymptotics}
	Given i.i.d.~states $\rho^{\otimes n}$ and $\sigma^{\otimes n}$, the second order asymptotic expansions of $\uds(\rho\|\sigma)$ and $\ods(\rho\|\sigma)$ are given by:
	\begin{enumerate}[{\normalfont (i)}]
		\item $\uds(\rho^{\otimes n}\|\sigma^{\otimes n})= nD(\rho\|\sigma) + \sqrt{n}\,\sk(\rho\|\sigma)\Phi^{-1}(\eps) + \cO(\log n)$
		\item $\ods(\rho^{\otimes n}\|\sigma^{\otimes n}) = nD(\rho\|\sigma) - \sqrt{n}\,\sk(\rho\|\sigma)\Phi^{-1}(\eps) + \cO(\log n)$
	\end{enumerate}
	where $\mathfrak{s}(\rho\|\sigma)$ is defined as in \eqref{eq:frak-s} and $\Phi^{-1}(\eps) \coloneqq  \sup\lbrace z\in\mathbb{R}\mid \Phi(z)\leq \eps\rbrace$ is the inverse of the cdf of the standard normal distribution.
\end{proposition}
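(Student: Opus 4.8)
The plan is to transport the second order expansion of the hypothesis testing relative entropy (\thref{thm:D_H-asymptotics}) to the information spectrum relative entropies: I would squeeze $\uds$ between two copies of $D_H^\bullet$ via \thref{thm:D_s-bounds}(i), and then obtain the statement for $\ods$ essentially for free from the duality $\uds(\rho\|\sigma)=\oD_s^{1-\eps}(\rho\|\sigma)$ of \thref{prop:D_s-properties}(ii). Throughout one may assume $\supp\rho\subseteq\supp\sigma$ (otherwise $D(\rho\|\sigma)=\infty$ and, since $\tr(\rho^{\otimes n}-2^\gamma\sigma^{\otimes n})_+\to 1$ for every fixed $\gamma$, both quantities are eventually $+\infty$, so the expansion holds trivially), and one reduces to the full-rank case by restricting to $\supp\sigma$, so that \thref{thm:D_s-bounds} applies.

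\emph{Part (i).} First I would apply \thref{thm:D_s-bounds}(i) to the i.i.d.\ states $\rho^{\otimes n}$ and $\sigma^{\otimes n}$ with the choice $\delta=1/\sqrt{n}$, obtaining
$$D_H^{\eps-1/\sqrt{n}}\!\left(\rho^{\otimes n}\|\sigma^{\otimes n}\right)-\tfrac{1}{2}\log n\;\le\;\uds\!\left(\rho^{\otimes n}\|\sigma^{\otimes n}\right)\;\le\;D_H^{\eps}\!\left(\rho^{\otimes n}\|\sigma^{\otimes n}\right).$$
The upper bound is immediately $nD(\rho\|\sigma)+\sqrt{n}\,\sk(\rho\|\sigma)\Phi^{-1}(\eps)+\cO(\log n)$ by \thref{thm:D_H-asymptotics}. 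For the lower bound I would invoke the expansion of \thref{thm:D_H-asymptotics} at the shifted confidence level $\eps-1/\sqrt{n}$; this is legitimate because the $\cO(\log n)$ remainder there descends from the Berry--Esseen constant of the Nussbaum--Szko\l a log-likelihood ratio (depending only on its first three moments) together with the smoothing corrections, none of which involve $\eps$, so that remainder is uniform over confidence levels in a fixed compact subinterval of $(0,1)$. Then \thref{lem:phi-trick} rewrites $\sqrt{n}\,\Phi^{-1}(\eps-1/\sqrt{n})=\sqrt{n}\,\Phi^{-1}(\eps)-\left(\Phi^{-1}\right)'(\xi)$ with $|\xi-\eps|\le 1/\sqrt{n}$; since $\eps\in(0,1)$ is fixed, for large $n$ the point $\xi$ stays in a compact subinterval of $(0,1)$ on which $\left(\Phi^{-1}\right)'$ is bounded, so this correction is $\cO(1)$. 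Together with $-\tfrac{1}{2}\log n=\cO(\log n)$, the lower bound matches the upper bound and (i) follows.

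\emph{Part (ii).} By \thref{prop:D_s-properties}(ii), replacing $\eps$ by $1-\eps$, we have $\ods(\rho\|\sigma)=\underline{D}_s^{1-\eps}(\rho\|\sigma)$, so applying part (i) at confidence level $1-\eps$ gives $\ods(\rho^{\otimes n}\|\sigma^{\otimes n})=nD(\rho\|\sigma)+\sqrt{n}\,\sk(\rho\|\sigma)\Phi^{-1}(1-\eps)+\cO(\log n)$. The reflection symmetry $\Phi(-z)=1-\Phi(z)$ of the standard normal cdf yields $\Phi^{-1}(1-\eps)=-\Phi^{-1}(\eps)$, which turns this into the claimed expression.

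The step I expect to be the main obstacle is the lower bound in part (i): feeding a vanishing smoothing parameter into \thref{thm:D_s-bounds}(i) forces evaluation of the $D_H$-expansion at $\eps-1/\sqrt{n}$ rather than at $\eps$, so one must be careful both that the $\cO(\log n)$ term in \thref{thm:D_H-asymptotics} is uniform in the confidence level over compacta, and that the induced perturbation $\Phi^{-1}(\eps-1/\sqrt{n})-\Phi^{-1}(\eps)$ is genuinely only $\cO(1/\sqrt{n})$ — the latter being exactly what \thref{lem:phi-trick} supplies, and the reason the choice $\delta=1/\sqrt{n}$ is natural (any $\delta_n$ with $\sqrt{n}\,\delta_n$ bounded and $\log\delta_n=\cO(\log n)$ would serve equally well).
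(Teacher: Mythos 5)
Your proposal is correct and part (i) is essentially identical to the paper's proof: sandwich $\uds$ between $D_H^{\eps-\delta}+\log\delta$ and $D_H^{\eps}$ via \thref{thm:D_s-bounds}(i), choose $\delta=1/\sqrt{n}$, expand via \thref{thm:D_H-asymptotics} and absorb the shift in the confidence level with \thref{lem:phi-trick} (your remark that the $\cO(\log n)$ remainder must be uniform in the confidence level over compacta is a point the paper passes over silently, and is worth making). For part (ii) the paper instead re-runs the sandwich argument using \thref{thm:D_s-bounds}(ii) before invoking $\Phi^{-1}(1-\eps)=-\Phi^{-1}(\eps)$, whereas you obtain (ii) directly from (i) via the exact duality $\ods(\rho\|\sigma)=\underline{D}_s^{1-\eps}(\rho\|\sigma)$ of \thref{prop:D_s-properties}(ii); this is a mild simplification that reaches the same endpoint.
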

\begin{proof}
	We abbreviate $\rho^n\equiv\rho^{\otimes n}$ and $\sigma^n\equiv\sigma^{\otimes n}$. 
	
	(i) By \Cref{thm:D_s-bounds}, (i) we have the following bounds on $\uds(\rho^{n}\|\sigma^{n})$ for any $\delta>0$:
	\begin{align}\label{eq:D_H-iid-bounds}
		D_H^{\eps-\delta}(\rho^{n}\|\sigma^{n})+\log\delta\leq\uds(\rho^n\|\sigma^n)\leq D_H^\eps(\rho^n\|\sigma^n)
	\end{align}
	Setting $\delta=\frac{1}{\sqrt{n}}$ and applying \Cref{thm:D_H-asymptotics} to the lower bound yields
	\begin{align}
		D_H^{\eps-\delta}(\rho^n\|\sigma^n)+\log\delta &= D_H^{\eps-\frac{1}{\sqrt{n}}}(\rho^n\|\sigma^n)-\frac{1}{2}\log n\notag\\
		&=nD(\rho\|\sigma)+\sqrt{n}\,\sk(\rho\|\sigma)\Phi^{-1}\left(\eps-\frac{1}{\sqrt{n}}\right)+\cO(\log n).\label{eq:eps-delta}
	\end{align}
	\Cref{lem:phi-trick} applied to the second term yields 
	$$\sqrt{n}\,\Phi^{-1}\left(\eps-\frac{1}{\sqrt{n}}\right) = \sqrt{n}\,\Phi^{-1}(\eps)+\cO(\log n),$$ and substituting this in \eqref{eq:eps-delta} results in 
	$$D_H^{\eps-\delta}(\rho^n\|\sigma^n)+\log\delta = nD(\rho\|\sigma)+\sqrt{n}\,\sk(\rho\|\sigma)\Phi^{-1}(\eps)+\cO(\log n).$$ Using the above in the lower bound of \eqref{eq:D_H-iid-bounds} and \Cref{thm:D_H-asymptotics} for the upper bound in \eqref{eq:D_H-iid-bounds} finally proves the claim.
	
	(ii) \Cref{thm:D_s-bounds}, (ii) yields the following bounds on $\ods(\rho^n\|\sigma^n)$ for any $\delta>0$:
	\begin{align}\label{eq:D_H-iid-bounds-2}
		D_H^{1-\eps-\eta}(\rho^n\|\sigma^n)+\log\eta\leq\ods(\rho^n\|\sigma^n)\leq D_H^{1-\eps}(\rho^n\|\sigma^n) + \delta
	\end{align}
	Setting $\eta=\frac{1}{\sqrt{n}}$, we use \Cref{thm:D_H-asymptotics} in \eqref{eq:D_H-iid-bounds-2}, again applying \Cref{lem:phi-trick} to the lower bound, to obtain
	$$\ods(\rho^n\|\sigma^n) = nD(\rho\|\sigma) + \sqrt{n}\,\sk(\rho\|\sigma)\Phi^{-1}(1-\eps) + \cO(\log n).$$
	Observing that $\Phi^{-1}(1-\eps) = -\Phi^{-1}(\eps)$ now yields the result.
\end{proof}

The above results readily imply the asymptotic equipartition property (AEP) for the information spectrum relative entropies:
\begin{proposition}
	Given i.i.d.~states $\rho^{\otimes n}$ and $\sigma^{\otimes n}$, then for any $\eps\in(0,1)$ we have 
		$$\lim_{n\rightarrow\infty}\frac{1}{n}\uds(\rho^{\otimes n}\|\sigma^{\otimes n}) = D(\rho\|\sigma) = \lim_{n\rightarrow\infty}\frac{1}{n}\ods(\rho^{\otimes n}\|\sigma^{\otimes n}).$$
\end{proposition}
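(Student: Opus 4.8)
The plan is to read the claim off directly from the second order asymptotic expansions established in \thref{thm:D_s-asymptotics}. First I would take
$$\uds(\rho^{\otimes n}\|\sigma^{\otimes n}) = nD(\rho\|\sigma) + \sqrt{n}\,\sk(\rho\|\sigma)\Phi^{-1}(\eps) + \cO(\log n)$$
from part (i), divide through by $n$, and observe that for a fixed error parameter $\eps\in(0,1)$ the quantity $\sk(\rho\|\sigma)\Phi^{-1}(\eps)$ is a constant, so the middle term equals $\cO(1/\sqrt{n})$ while the remainder is $\cO((\log n)/n)$; both tend to $0$ as $n\to\infty$, leaving $\lim_{n\to\infty}\frac{1}{n}\uds(\rho^{\otimes n}\|\sigma^{\otimes n}) = D(\rho\|\sigma)$. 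Applying the identical reasoning to part (ii), where the $\sqrt{n}$ term only changes sign and is still $\cO(\sqrt{n})$, gives $\lim_{n\to\infty}\frac{1}{n}\ods(\rho^{\otimes n}\|\sigma^{\otimes n}) = D(\rho\|\sigma)$. This is the entire argument, and it is what the phrase ``readily imply'' refers to.

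There is essentially no obstacle here; the only point that warrants a remark is the degenerate case $\supp\rho\not\subseteq\supp\sigma$, in which $D(\rho\|\sigma)=\infty$ and \thref{thm:D_s-asymptotics} does not literally apply. In that situation I would argue separately that $\frac{1}{n}\uds(\rho^{\otimes n}\|\sigma^{\otimes n})\to\infty$: one can lower bound $\uds(\rho\|\sigma)$ by $\uds(\rho\|\sigma+\lambda\one)$ for small $\lambda>0$ using \thref{prop:D_s-properties}(v), apply the finite-support case to the perturbed $\sigma+\lambda\one$, and then let $\lambda\to 0$; alternatively one can use the lower bound $D_H^{\eps-\delta}(\rho^{\otimes n}\|\sigma^{\otimes n})+\log\delta \leq \uds(\rho^{\otimes n}\|\sigma^{\otimes n})$ from \thref{thm:D_s-bounds}(i) together with the divergence of $\frac{1}{n}D_H^\eps(\rho^{\otimes n}\|\sigma^{\otimes n})$. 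The statement for $\ods$ then follows either by the same perturbation applied to its monotonicity properties or directly from \thref{prop:D_s-properties}(ii).

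I would also note that a self-contained route bypassing \thref{thm:D_s-asymptotics} exists: by \thref{thm:D_s-bounds}(i) one has, for any $\delta\in(0,\eps)$,
$$D_H^{\eps-\delta}(\rho^{\otimes n}\|\sigma^{\otimes n})+\log\delta\leq \uds(\rho^{\otimes n}\|\sigma^{\otimes n})\leq D_H^{\eps}(\rho^{\otimes n}\|\sigma^{\otimes n}),$$
so dividing by $n$, sending $n\to\infty$, and invoking the quantum Stein's lemma (the AEP $\frac{1}{n}D_H^\eps(\rho^{\otimes n}\|\sigma^{\otimes n})\to D(\rho\|\sigma)$) yields the $\uds$ limit by a sandwich argument, while \thref{prop:D_s-properties}(ii) supplies the $\ods$ limit. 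Since this is strictly weaker than what \thref{thm:D_s-asymptotics} already provides, I would present the short computation of the first paragraph as the actual proof.
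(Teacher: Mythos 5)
Your first paragraph is exactly the paper's (implicit) argument: the paper states the proposition as a direct consequence of the second order expansions in \thref{thm:D_s-asymptotics}, obtained by dividing by $n$ and letting the $\sqrt{n}$ and $\log n$ terms vanish. The additional remarks on the degenerate support case and the alternative route via $D_H^\eps$ and quantum Stein's lemma are sensible but go beyond what the paper does; the core argument is correct and identical.
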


\subsection{Relation to the information spectrum method}
\label{sec:information-spectrum-method}
In the description of information-processing tasks, one usually assumes identical and independently distributed (i.i.d.) input states $\rho^{\otimes n}$, where $\rho$ is a state, if the given protocol is repeated $n$ times. In order to describe more general settings, two different approaches have been proposed. The first one is the smooth entropy framework of \emph{one-shot} quantum information theory, initiated by Renner \cite{Ren05,RK05,RW04} and then considerably extended by himself and others, e.g.~Datta \cite{Dat09}. This non-asymptotic approach uses the smooth entropies $\Dmine(\rho\|\sigma)$, $\Dmaxe(\rho\|\sigma)$ and $D_H^\eps(\rho\|\sigma)$ to obtain results about information-processing tasks with respect to some error $\eps$. On the other hand, Hayashi et al.~\cite{Hay06,HN03,NH07,ON00} adapted the classical information spectrum method, introduced by Han and Verdu \cite{HV93}, to the quantum setting. In \cite{BD06} the quantum information spectrum method was further developed, and \cite{DR09} eventually demonstrated that both approaches are equivalent in the sense that the spectral divergences used in the information spectrum method can be obtained as limits of the smooth entropies. 

In this section, we derive the analogous results to \cite{DR09} for the information spectrum relative entropies. Let us first recall the definitions of the spectral divergence rates:
\begin{definition}\label{def:spectral-divergence-rates}
	Let $\hrho=\lbrace\rho_n\rbrace_{n\in\mathbb{N}}$ be an arbitrary sequence of states with $\rho_n\in\cD(\cH^{\otimes n})$ and $\homega=\lbrace\omega_n\rbrace_{n\in\mathbb{N}}$ be an arbitrary sequence of positive semi-definite operators with $\omega_n\in\cP(\cH^{\otimes n})$. Then we define:
	\begin{enumerate}[{\normalfont (i)}]
		\item the \emph{quantum spectral inf-divergence rate}
			\begin{align}\label{infdv}\Du(\hrho\|\homega) \coloneqq  \sup\lbrace\gamma\mid \liminf_{n\rightarrow \infty}\tr[\lbrace\rho_n > 2^{n\gamma}\omega_n\rbrace(\rho_n-2^{n\gamma}\omega_n)] = 1\rbrace.\end{align}
		\item the \emph{quantum spectral sup-divergence rate}
			\begin{align}\label{supdv}\oD(\hrho\|\homega) \coloneqq  \inf\lbrace\gamma\mid \limsup_{n\rightarrow \infty}\tr[\lbrace\rho_n > 2^{n\gamma}\omega_n\rbrace(\rho_n-2^{n\gamma}\omega_n)] = 0\rbrace.\end{align}
	\end{enumerate}
\end{definition}
\begin{remark}
Note that the above quantities differ from the spectral divergence rates originally defined in \cite{HN03}. However, as proved in \cite{BD06}, they are equivalent.
\end{remark}
The quantum spectral divergence rates $\Du(\hrho\|\homega)$ and $\oD(\hrho\|\homega)$ can be recovered from the information spectrum relative entropies $\uds(\rho\|\sigma)$ and $\ods(\rho\|\sigma)$ respectively in the following way:
\begin{proposition}\label{prop:limit-inf-spectrum}
	Let $\hrho=\lbrace\rho_n\rbrace_{n\in\mathbb{N}}$ be an arbitrary sequence of states with $\rho_n\in\cD(\cH^{\otimes n})$ and let $\homega=\lbrace\omega_n\rbrace_{n\in\mathbb{N}}$ be an arbitrary sequence of positive operators with $\omega_n\in\cP(\cH^{\otimes n})$. Then the following relations hold:
	\begin{align*}
	\text{\normalfont (i)}\quad \lim_{\eps\rightarrow 0}\liminf_{n\rightarrow\infty}\frac{1}{n}\uds(\rho_n\|\omega_n) &= \Du(\hrho\|\homega) & 
	\text{\normalfont (ii)}\quad \lim_{\eps\rightarrow 0}\limsup_{n\rightarrow\infty}\frac{1}{n}\ods(\rho_n\|\omega_n) &= \oD(\hrho\|\homega)
	\end{align*}
\end{proposition}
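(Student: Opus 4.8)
The plan is to establish (i); statement (ii) then follows by an entirely analogous argument (or by exploiting the duality $\uds = \oD_s^{1-\eps}$ from \thref{prop:D_s-properties}(ii) together with the interplay between the two spectral rates). Fix the sequences $\hrho$ and $\homega$. The key observation is that for each fixed $n$ and each fixed $\eps$, the quantity $\uds(\rho_n\|\omega_n)$ is, by \thref{lem:trace-quantity}, exactly the unique $\gamma$ (up to the boundary behaviour of the sup) for which the monotone continuous function $\gamma\mapsto g_n(\gamma)\coloneqq\tr[(\rho_n-2^\gamma\omega_n)\{\rho_n>2^\gamma\omega_n\}] = \tr(\rho_n - 2^\gamma\omega_n)_+$ equals $1-\eps$. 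Rescaling, set $f_n(\gamma)\coloneqq\tr(\rho_n-2^{n\gamma}\omega_n)_+$, so that $\tfrac1n\uds(\rho_n\|\omega_n)=\sup\{\gamma\mid f_n(\gamma)\ge 1-\eps\}$, and recall from \eqref{infdv} that $\Du(\hrho\|\homega)=\sup\{\gamma\mid \liminf_n f_n(\gamma)=1\}$. Thus both sides of (i) are expressed through the same family of functions $f_n$, and the proposition becomes a statement about interchanging the $\eps\to 0$ limit with the $\liminf_n$.

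First I would prove the inequality $\lim_{\eps\to 0}\liminf_n \tfrac1n\uds(\rho_n\|\omega_n)\le \Du(\hrho\|\homega)$. Suppose $\gamma < \liminf_n\tfrac1n\uds(\rho_n\|\omega_n)$ for some value that holds for all sufficiently small $\eps$; then for every small $\eps>0$ there is $N$ such that for all $n\ge N$ we have $\tfrac1n\uds(\rho_n\|\omega_n)>\gamma$, which by monotonicity of $f_n$ (\thref{lem:trace-quantity}(iii)) forces $f_n(\gamma)\ge 1-\eps$. Since this holds for every $\eps>0$ (with $N$ depending on $\eps$), we conclude $\liminf_n f_n(\gamma)\ge 1-\eps$ for all $\eps>0$, hence $\liminf_n f_n(\gamma)=1$ (it is automatically $\le 1$), so $\gamma\le\Du(\hrho\|\homega)$. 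Taking the supremum over such $\gamma$ gives the bound. The care needed here is purely in the order of quantifiers: one must use that the left-hand side is a limit as $\eps\to 0$, so any $\gamma$ strictly below it lies below $\liminf_n\tfrac1n\uds(\rho_n\|\omega_n)$ for all small enough $\eps$.

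For the reverse inequality $\lim_{\eps\to 0}\liminf_n\tfrac1n\uds(\rho_n\|\omega_n)\ge\Du(\hrho\|\homega)$, fix any $\gamma<\Du(\hrho\|\homega)$, so that $\liminf_n f_n(\gamma)=1$. Fix $\eps>0$; then for all sufficiently large $n$ we have $f_n(\gamma)>1-\eps$, and by monotonicity this means $\gamma$ is feasible in the supremum defining $\tfrac1n\uds(\rho_n\|\omega_n)$, whence $\tfrac1n\uds(\rho_n\|\omega_n)\ge\gamma$. Therefore $\liminf_n\tfrac1n\uds(\rho_n\|\omega_n)\ge\gamma$ for every $\eps>0$, and so $\lim_{\eps\to 0}\liminf_n\tfrac1n\uds(\rho_n\|\omega_n)\ge\gamma$. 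Letting $\gamma\uparrow\Du(\hrho\|\homega)$ completes the proof of (i). For (ii) the same template applies with $f_n(\gamma)\le\eps$ in place of $f_n(\gamma)\ge 1-\eps$, $\limsup$ in place of $\liminf$, and the condition $\limsup_n f_n(\gamma)=0$ from \eqref{supdv}; one should also note that Proposition \ref{thm:D_s-asymptotics} (monotonicity in $\eps$ from \thref{prop:D_s-properties}(iv)) guarantees that the limits in $\eps$ exist, so that writing $\lim_{\eps\to 0}$ is justified.

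The main obstacle is not any single estimate — every individual step is an immediate consequence of the monotonicity and continuity in \thref{lem:trace-quantity} — but rather bookkeeping the three interacting limits ($\eps\to 0$, $\liminf_n$, and the sup/inf defining the relative entropy and the spectral rate) so that the quantifiers are threaded correctly; in particular one must be careful that feasibility at level $\gamma$ for $\uds$ is governed by $f_n(\gamma)\ge 1-\eps$ \emph{with a weak inequality but an attained supremum}, which is exactly what \thref{lem:trace-quantity}(iv) provides, so no $\pm\delta$ fudge factors are needed.
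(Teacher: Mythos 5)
Your proof is correct and follows essentially the same route as the paper's: both sides are expressed through the monotone, continuous function $\gamma\mapsto\tr(\rho_n-2^{n\gamma}\omega_n)_+$, and the two inequalities are obtained by threading the quantifiers in $\eps$, $n$, and $\gamma$, using the attainment of the supremum from \thref{lem:trace-quantity}. If anything, your choice to work with $\gamma$ strictly below the relevant suprema and then pass to the limit is slightly more careful at the boundary than the paper's direct substitution $\gamma=\Du(\hrho\|\homega)$.
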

\begin{proof}
(i) We first show 
\begin{align}\label{eq:inf-spec-upper-bound}
\lim_{\eps\rightarrow 0}\liminf_{n\rightarrow\infty}\frac{1}{n}\uds(\rho_n\|\omega_n) \leq \uD(\hrho\|\homega).
\end{align}
To this end, let $\gamma_n=\uds(\rho_n\|\omega_n)$ for $n\in\mathbb{N}$, and set $c(\eps)\coloneqq \liminf_{n\rightarrow\infty}\frac{\gamma_n}{n}$.
By \Cref{prop:D_s-properties}(iv) the function $c(\eps)$ is monotonically increasing in $\eps$, and hence the limit $c\coloneqq \lim_{\eps\to 0}c(\eps)$ exists in $\mathbb{R}\cup\lbrace-\infty\rbrace$.
Let us assume first that $|c|<\infty$.
It then follows from the definition of the limit that for all $\eta>0$ there exists an $\eps_0$ such that $|c-c(\eps)|\leq \eta$ holds for all $\eps\leq \eps_0$.
Moreover, by definition of the limit inferior, for all $\delta>0$ there exists an $N\in\mathbb{N}$ such that $\frac{\gamma_n}{n} >  c(\eps)-\delta$ for all $n>N$, or equivalently, 
\begin{align*}
\gamma_n > n(c(\eps)-\delta) \geq n(c-\eta-\delta)
\end{align*}
for $\eps\leq \eps_0$.
Hence, by definition of $\uds(\rho_n\|\omega_n)$, we have
\begin{align}\label{eq:lim-inf-proof-1}
\tr\left(\rho_n-2^{n(c-\eta-\delta)}\omega_n\right)_+\geq 1-\eps
\end{align}
for all $n>N$. 
Since \eqref{eq:lim-inf-proof-1} holds for arbitrarily small $\eps\leq\eps_0$, 
\begin{align*}
\liminf_{n\rightarrow\infty}\tr\left(\rho_n-2^{n(c-\eta-\delta)}\omega_n\right)_+ = 1,
\end{align*} 
which implies 
\begin{align*}
\underline{D}(\hrho\|\homega) \geq c-\eta-\delta
\end{align*}
by the definition of $\uD(\hrho\|\homega)$. 
As $\eta$ and $\delta$ were arbitrary, we obtain \eqref{eq:inf-spec-upper-bound}.

Conversely, let $\gamma=\uD(\hrho\|\homega)$. 
By the definition of the limit inferior, for all $\eps>0$ there exists an $N\in\mathbb{N}$ such that for all $n>N$ we have
\begin{align*}
\tr\left(\rho_n-2^{n\gamma}\omega_n \right)_+\geq 1-\eps.
\end{align*}
Hence, $n\gamma\leq \uds(\rho_n\|\omega_n)$ for all $n>N$ by the definition of the information spectrum relative entropy, and consequently, 
\begin{align}
\gamma\leq \lim_{\eps\rightarrow 0}\liminf_{n\rightarrow\infty}\frac{1}{n}\uds(\rho_n\|\omega_n),
\label{eq:gamma-limit-upper-bound}
\end{align} 
which yields the lower bound in the proposition.

Finally, in the case $c=-\infty$, the bound \eqref{eq:gamma-limit-upper-bound} shows that we also have $\gamma=\uD(\hrho\|\homega)=-\infty$, and hence the assertion of the proposition is trivially true.

(ii) To prove the upper bound, let $\gamma = \oD(\hrho\|\homega)$. 
By definition of $\oD(\hrho\|\homega)$ it holds that 
\begin{align*}
\limsup_{n\rightarrow\infty}\tr(\rho_n-2^{n\gamma}\omega_n)_+ = 0,
\end{align*} 
that is, for every $\eps >0$ there exists an $N\in\mathbb{N}$ such that for all $n>N$ we have
\begin{align*}
\tr(\rho_n-2^{n\gamma}\omega_n)_+ < \eps.
\end{align*}
By definition of the information spectrum relative entropy, this implies that $\ods(\rho_n\|\sigma_n)\leq n\gamma$ for all $n>N$, and hence,
\begin{align}
\lim_{\eps\rightarrow 0}\limsup_{n\rightarrow\infty}\frac{1}{n}\ods(\rho_n\|\omega_n)\leq \gamma = \oD(\hrho\|\homega),
\label{eq:gamma-limit-lower-bound}
\end{align} 
which proves the upper bound.

Conversely, let $\gamma_n= \ods(\rho_n\|\omega_n)$ and set $c(\eps) \coloneqq \limsup_{n\rightarrow \infty}\frac{\gamma_n}{n}$ and $c\coloneqq \lim_{\eps\to 0}c(\eps)$, which exists in $\mathbb{R}\cup\lbrace\infty\rbrace$ due to \Cref{prop:D_s-properties}(iv). 
If $|c|<\infty$, then by definition of the limit for all $\eta > 0$ there exists an $\eps_0$ such that $|c-c(\eps)|\leq \eta$ holds for all $\eps\leq \eps_0$.
Moreover, by the characterization of the limit superior, for all $\delta>0$ there exists an $N$ such that $\frac{\gamma_n}{n}< c(\eps) + \delta$ for all $n>N$, or equivalently, \begin{align*}
\gamma_n< n(c(\eps)+\delta) < n(c + \eta + \delta) 
\end{align*}
for all $\eps\leq\eps_0$.
This implies 
\begin{align*}
\tr(\rho_n-2^{n(c + \eta + \delta)}\omega_n)_+\leq \eps 
\end{align*}
for all $n>N$, and hence, $\limsup_{n\rightarrow \infty}\tr(\rho_n-2^{n(c + \eta + \delta)}\omega_n)_+ = 0$ as $\eps$ is arbitrarily small.
Therefore, $\oD(\hrho\|\homega)\leq c + \eta + \delta$, which yields the result, since $\delta$ and $\eta$ were arbitrary.
If $c=\infty$, then \eqref{eq:gamma-limit-lower-bound} shows that also $\gamma = \oD(\hrho\|\homega)=\infty$, in which case the assertion of the proposition is trivially true.
\end{proof}

\section{Information-processing tasks: One-shot bounds and second order asymptotics}
\label{sec:information-tasks}
After having discussed the information spectrum relative entropies and their second order asymptotics in the previous section, we now focus on their application in the following tasks in quantum information theory: 
\begin{itemize}
	\item fixed-length quantum source compression
	\item noisy dense coding
	\item entanglement concentration
	\item pure-state entanglement dilution
	\item capacity of classical-quantum (cq) channels
\end{itemize}
For the characteristic quantities in the above tasks, we obtain one-shot bounds in terms of the relative entropies $\uds(\rho\|\sigma)$ and $\ods(\rho\|\sigma)$. Furthermore, using \Cref{thm:D_s-asymptotics} we determine the second order asymptotics of the mentioned tasks, obtaining new results as well as reproducing the second order asymptotics for the capacity of cq-channels from \cite{TT13}. 

We employ the following useful lemma due to Hayashi and Nagaoka \cite{HN03}:
\begin{lemma}\label{lemHN}
Let ${\cal X}$ denote a finite alphabet and consider a classical-quantum channel $W\colon {\cal X} \mapsto {\cal D}({\cal H})$, i.e., $W_x \in {\cal D}(\cH)$ is the output of the channel when the input is $x \in {\cal X}$. Then for all $n \in \mathbb{N}$, $\gamma \in \mathbb{R}$, $M \in \mathbb{N}$, a probability distribution $\{p(x)\}_{x \in {\cal X}}$ on ${\cal X}$ and $c >0$ there exists a code ${\cal C}$ such that $|{\cal C}| = M$, and
\begin{align}\label{eqHN}
p_e({\cal C}) \le (1 +c) \sum_{x \in {\cal X}} p(x) \tr \left[W_x \lbrace W_x < 2^{\gamma} \oW\rbrace \right] + (2+c+c^{-1}) 2^{-\gamma} M,
\end{align}
where $\oW = \sum_{x \in {\cal X}} p(x) W_x$.
\end{lemma}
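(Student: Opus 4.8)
The plan is to combine a random coding argument with a square-root (``pretty good'') decoding measurement, with the entire error estimate funneled through the Hayashi--Nagaoka operator inequality. The one genuinely non-routine ingredient is that inequality itself \cite{HN03}: for operators $0\le S\le\one$ and $T\ge 0$ and any $c>0$,
$$\one-(S+T)^{-1/2}S(S+T)^{-1/2}\ \le\ (1+c)(\one-S)+(2+c+c^{-1})\,T,$$
where $(S+T)^{-1/2}$ denotes the generalized inverse, so that the left-hand side also contains the projection onto $\ker(S+T)$. I would either cite this directly from \cite{HN03}, or reprove it by setting $\Lambda=(S+T)^{-1/2}S(S+T)^{-1/2}$, observing that $\one-\Lambda=(S+T)^{-1/2}T(S+T)^{-1/2}$ on $\supp(S+T)$, and absorbing the remaining cross terms with the elementary operator estimate $X^\dagger Y+Y^\dagger X\le cX^\dagger X+c^{-1}Y^\dagger Y$.

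Next, fix $\gamma$, $M$ and the distribution $\{p(x)\}_{x\in\cX}$, and draw codewords $x_1,\dots,x_M$ independently from $p$ (the codebook $\cC$ being this list of $M$ symbols, so $|\cC|=M$). For each $i$ put $S_i\coloneqq\lbrace W_{x_i}\ge 2^{\gamma}\oW\rbrace$ and $S\coloneqq\sum_{i=1}^{M}S_i$, and take as decoding POVM the operators $\Lambda_i\coloneqq S^{-1/2}S_iS^{-1/2}$ together with a ``no decision'' element $\one-\sum_i\Lambda_i\ge 0$ (which can only lower the error). The average error probability of this codebook is $p_e(\cC)=\frac1M\sum_{i=1}^{M}\tr[W_{x_i}(\one-\Lambda_i)]$. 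Applying the operator inequality with $S=S_i$ and $T=\sum_{j\neq i}S_j$ gives, for each $i$,
$$\tr[W_{x_i}(\one-\Lambda_i)]\ \le\ (1+c)\,\tr[W_{x_i}(\one-S_i)]+(2+c+c^{-1})\sum_{j\neq i}\tr[W_{x_i}S_j],$$
and since $\one-S_i=\lbrace W_{x_i}<2^{\gamma}\oW\rbrace$ the first term is exactly $(1+c)\,\tr[W_{x_i}\lbrace W_{x_i}<2^{\gamma}\oW\rbrace]$.

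Finally I would average over the random codebook. The codewords $x_i$ and $x_j$ are independent for $j\neq i$ and $\mathbb{E}_{x_i}[W_{x_i}]=\oW$, so each cross term obeys
$$\mathbb{E}\,\tr[W_{x_i}S_j]=\mathbb{E}_{x_j}\tr\bigl[\oW\,\lbrace W_{x_j}\ge 2^{\gamma}\oW\rbrace\bigr]\le 2^{-\gamma},$$
the last inequality being \thref{lem3} applied with $\rho=W_{x_j}$ and $\sigma=\oW$ (replacing $\gamma$ by $-\gamma$ there). Summing the $M-1\le M$ cross terms for each message and the $M$ ``good-event'' terms and dividing by $M$ gives
$$\mathbb{E}\,p_e(\cC)\ \le\ (1+c)\sum_{x\in\cX}p(x)\,\tr[W_x\lbrace W_x<2^{\gamma}\oW\rbrace]+(2+c+c^{-1})\,2^{-\gamma}M,$$
and since the average over codebooks is bounded by this right-hand side, some fixed codebook $\cC$ with $|\cC|=M$ satisfies \eqref{eqHN}.

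I expect the operator inequality of the first paragraph to be the main obstacle: unlike the rest of the argument, which is bookkeeping plus a single use of \thref{lem3}, it must cope with the non-commutativity of $S_i$ and $\oW$ and the possible singularity of $S+T$. If one is content to cite \cite{HN03} for that inequality, the lemma follows immediately from the random-coding scheme above.
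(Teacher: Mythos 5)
The paper offers no proof of this lemma at all: it is imported verbatim from Hayashi and Nagaoka \cite{HN03}, so there is nothing internal to compare against. Your proposal is a correct reconstruction of the standard Hayashi--Nagaoka argument — random coding, square-root decoding, the operator inequality $\one-(S+T)^{-1/2}S(S+T)^{-1/2}\le(1+c)(\one-S)+(2+c+c^{-1})T$, and the bound $\tr[\lbrace W_{x_j}\ge 2^{\gamma}\oW\rbrace\,\oW]\le 2^{-\gamma}$ from \thref{lem3} for the cross terms — and all steps, including the handling of the kernel of $S+T$ and the derandomization at the end, check out.
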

We also use the following notation:
\begin{definition}\label{def:blocklength-quantity}
	Let $\Theta^{(1), \eps}(\Omega)$ be a quantity characterizing an information-processing task using a resource $\Omega$ in the one-shot setting for an error threshold $\eps$. For $n\in\mathbb{N}$, we write 
	\begin{align*}
	\Theta^{n,\eps}(\Omega) \coloneqq \Theta^{(1), \eps}(\Omega^{\otimes n}).
	\end{align*} 
	Here, $\Omega$ is a placeholder for the resource of the protocol, e.g.~a source state $\rho\in\cD(\cH)$ (source coding), a channel $\Lambda$, or an entangled pure state $\psi_{AB}\in\cD(\cH_{AB})$ (entanglement conversion).
\end{definition}

\subsection{Fixed-length quantum source coding}
\subsubsection{Blind vs.~visible coding}\label{sec:blind-vs-visible}
Consider a quantum information source characterized by an ensemble of pure states  $\kE=\lbrace p_i,\psi_i\rbrace$ where $|\psi_i\rangle\in\cH$ and $\lbrace p_i\rbrace$ being a probability distribution. Note that the pure states $\psi_i$ are not mutually orthogonal in general. Henceforth we shall refer to such an ensemble as the {\em{source ensemble}} and the associated density matrix
 $\rho=\sum_ip_i\psi_i\in\cD(\cH)$ as the {\em{source state}}. The pure states
$\psi_i$ are the signals emitted by the source with corresponding probabilities $p_i$. 

In fixed-length quantum source coding the aim is to store the information emitted by the source in a {\em{compressed state}} $\rho_c \in \cD(\cH_c)$ with $\dim\cH_c < \dim\cH$, such that it can be later decompressed to yield a state which is sufficiently close to the source state $\rho$, with respect to some chosen distance measure. In the one-shot setting, in which one considers a single use 
of the source, it is natural to allow a non-zero error in the compression-decompression scheme. Hence, a one-shot source-coding protocol is characterized by 
a parameter $\eps$ which denotes the maximum allowed value of the distance between the source state and the state which is obtained after decompression.

There are two different scenarios \cite{Hay02,BCF+01,Win99a} for the compression part of the protocol outlined above: In the first scenario, which is called the {\em{blind setting}}, the encoder (Alice) knows the source state $\rho$ but has no knowledge about the individual signals $\psi_i$. Hence, in this case the compression map is necessarily a quantum operation $\cE\colon\cD(\cH)\rightarrow\cD(\cH_c)$ on the source state. This is called \emph{blind source coding}. 

In contrast, in the second scenario, Alice has complete knowledge about the
signals $\psi_i$ as well as their corresponding probabilities $p_i$. This is 
the {\em{visible setting}}. In fact, Alice essentially has classical information about the pure-state ensemble $\kE=\lbrace p_i,\psi_i\rbrace$ in the form of a list of the probabilities $p_i$ and corresponding pure states $\psi_i$. She then uses an \emph{arbitrary} map $\cV\colon\lbrace i\rbrace_i\rightarrow \cD(\cH_c)$ to encode the $\psi_i$ in a state $\cV(i)\in\cD(\cH_c)$ and forming the average ensemble state $\rho_c = \sum_i p_i\cV(i)$. Note that the encoding map $\cV$ is not necessarily a quantum operation, since Alice simply prepares a collection of quantum states $\cV(i)$ based on her knowledge of the signal states $\psi_i$. In particular, $\cV$ can be a non-linear map. This is called {visible source coding}.

In the decompression phase of the protocol, 
the compressed state $\rho_c = \cE(\rho)$ (blind setting) respectively $\rho_c=\sum_ip_i\cV(i)$ (visible setting) is subjected to a quantum operation $\mathfrak{D}\colon\cD(\cH_c)\rightarrow\cD(\cH)$.

Interestingly, the optimal rates of both visible and blind source-coding are identical, given by the von Neumann entropy $S(\rho)$ of the ensemble state $\rho=\sum_ip_i\psi_i$ \cite{Sch95,BFJS96,Hor98}. In the following, we give a complete description of the second order behaviour of visible source-coding and derive bounds on the second order asymptotic expansion in the blind case. In both settings, the optimal rate $S(\rho)$ is retrieved.
We derive one-shot bounds on the {\em{minimal compression length}}, which is the characteristic quantity of fixed-length quantum source coding and is defined below. These bounds are given in terms of the information spectrum entropy. 
We then derive second order expansions of these one-shot bounds for sufficiently large $n$.

\subsubsection{One-shot bounds}\label{sec:one-shot-bounds}
Let us first define the figures of merit that we use to determine the distance between the input state and the target state of the source-coding protocol:
\begin{definition}~\label{def:figures-of-merit}
\begin{enumerate}[{\normalfont (i)}]
\item For an arbitrary CPTP map $\Lambda\colon\cD(\cH)\rightarrow\cD(\cH)$ and a state $\rho\in\cD(\cH)$ with purification  $\psi^\rho\in\cD(\cH\otimes\cH')$, the \emph{entanglement fidelity} $F_e(\rho,\Lambda)$ is defined by
$$F_e(\rho,\Lambda)\coloneqq  \langle\psi^\rho|(\id_{\cH'}\otimes\Lambda)(\psi^\rho)|\psi^\rho\rangle.$$
\item Let $\kE=\lbrace p_i,\psi_i\rbrace$ with $\psi_i\in\cD(\cH)$ be a pure-state ensemble and consider an arbitrary map $\cV\colon\lbrace i\rbrace_i\rightarrow \cD(\cH')$ and a CPTP map $\kD\colon\cD(\cH')\rightarrow \cD(\cH)$. The \emph{ensemble average fidelity} $\bar{F}(\kE,\kD\circ\cV)$ is defined by
\begin{align*}
\bar{F}(\kE,\kD\circ\cV) \coloneqq \sumi_i p_i\tr((\kD\circ\cV)(i)\psi_i).
\end{align*}
\end{enumerate}
\end{definition} 

We employ the entanglement fidelity and the ensemble average fidelity to define $\eps$-admissible codes in the blind and visible settings respectively. The minimal compression length for a code is then obtained by optimizing over all such codes:
\begin{definition}\label{def:source-coding}
Consider a quantum information source with source ensemble $\kE=\lbrace p_i,\psi_i\rbrace$, and source state $\rho=\sum_i p_i\psi_i \in \cD(\cH)$.  Let $\cH_c$ be a Hilbert space with $M=\dim\cH_c<\dim\cH$.
\begin{enumerate}[(i)]
\item  For any $\eps\in(0,1)$, a visible encoder $\cV\colon\lbrace i\rbrace_i\rightarrow\cD(\cH_c)$ and a CPTP decoder $\kD\colon\cD(\cH_c)\rightarrow\cD(\cH)$, the triple $\cC_v=(\cV,\kD,M)$ is called an $\eps$-admissible (visible) code if $$\bar{F}(\kE,\kD\circ\cV)\geq 1-\eps.$$

\item For any $\eps\in(0,1)$, a CPTP encoder $\cE\colon\cD(\cH)\rightarrow\cD(\cH_c)$ and a CPTP decoder $\kD\colon\cD(\cH_c)\rightarrow\cD(\cH)$, the triple $\cC_b=(\cE,\kD,M)$ is called an $\eps$-admissible (blind) code if $$F_e(\rho,\kD\circ\cE)\geq 1-\eps.$$

\item The \emph{$\eps$-error one-shot compression length} $m_{v/b}^{(1), \eps}(\rho)$ is defined by
	\begin{align*}
	m_{v/b}^{(1), \eps}(\rho) \coloneqq  \inf\lbrace \log M\mid \exists \text{ an }\eps\text{-admissible code }\cC_{v/b}\text{ with }M=\dim\cH_c\rbrace,
	\end{align*}
	where the subscripts $v$ and $b$ are used for the visible and blind setting respectively.
\end{enumerate}
\end{definition}

In the following, we derive one-shot bounds for the minimal compression length.
\begin{theorem}\label{thm:source-coding-one-shot}
Consider a quantum information source with source ensemble $\kE=\lbrace p_i,\psi_i\rbrace$, and source state $\rho=\sum_i p_i\psi_i \in \cD(\cH)$.
Then for any $\eta>0$ and $\eps\in(0,1)$, the $\eps$-error one-shot compression length $m_{v/b}^{(1), \eps}$ satisfies the following bounds:
\begin{enumerate}[{\normalfont (i)}]
\item Visible case: $\oH_s^{\eps+\eta}(\rho)+\log\eta \leq m_v^{(1), \eps}(\rho)\leq \oH_s^{\eps}(\rho)$
\item Blind case: $\oH_s^{\eps+\eta}(\rho)+\log\eta \leq m_b^{(1), \eps}(\rho)\leq \oH_s^{\eps/2}(\rho)$
\end{enumerate}
\end{theorem}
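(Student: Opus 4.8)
The plan is to establish the four inequalities (a lower bound and an upper bound in each of the visible and blind settings) by converting the approximation condition on a code into a feasibility condition for a suitable information spectrum relative entropy $\oD_s^\bullet(\rho\|\one)$, and vice versa. Throughout I will use $\oH_s^\eps(\rho) = -\uD_s^\eps(\rho\|\one)$ (\thref{def:children-entropies}(i)), so a bound on the compression length $\log M$ is equivalent to a bound on $\uD_s^\bullet(\rho\|\one)$, which by \thref{prop:D_s-properties}(ii) equals $\oD_s^{1-\bullet}(\rho\|\one)$. The key quantity appearing in the definition of these entropies is $\tr[(\rho - 2^\gamma\one)_+] = \tr[(\rho - 2^\gamma\one)\{\rho > 2^\gamma\one\}]$, i.e.\ the weight of $\rho$ on its eigenvalues exceeding $2^\gamma$, minus a correction; since the cutoff projector $\{\rho > 2^\gamma\one\}$ has rank at most $2^{-\gamma}$ (\thref{lem3} with $\sigma = \one$), projecting onto it provides a natural compressed space of dimension $M \approx 2^{-\gamma}$.

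For the \textbf{visible upper bound}, I would take $\gamma$ slightly below $-\oH_s^\eps(\rho) = \uD_s^\eps(\rho\|\one)$, let $P = \{\rho > 2^\gamma\one\}$ be the spectral projector, and set $M = \operatorname{rank} P \le 2^{-\gamma}$. The visible encoder maps each signal $\psi_i$ to (a purification/embedding of) the normalized state $P\psi_i P / \tr(P\psi_i)$ inside $\cH_c \cong \operatorname{im} P$, and the decoder is the embedding back into $\cH$. The ensemble average fidelity is then $\sum_i p_i \tr(P\psi_i)\cdot(\text{pure-state overlap}) \ge \big(\sum_i p_i\tr(P\psi_i)\big)^{\!2} = (\tr P\rho)^2$ by Cauchy--Schwarz (or more carefully using that $\langle\psi_i|P\psi_i P|\psi_i\rangle/\tr(P\psi_i) = \tr(P\psi_i)$), and $\tr P\rho \ge \tr[(\rho - 2^\gamma\one)P] \ge 1-\eps$ by feasibility. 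A short computation of the form $(1-\eps)^2 \ge 1 - 2\eps$ gives admissibility up to adjusting constants; taking $\gamma \to \uD_s^\eps(\rho\|\one)$ yields $m_v^{(1),\eps}(\rho) \le -\gamma \le \oH_s^\eps(\rho)$. The \textbf{blind upper bound} is the same construction but with the CPTP encoder $\cE(\cdot) = P(\cdot)P + \tr((\one-P)\cdot)\,\tau$ for some fixed state $\tau \in \operatorname{im}P$ (or a pinching), for which the entanglement fidelity is $F_e(\rho,\kD\circ\cE) \ge (\tr P\rho)^2 \ge 1 - 2\tr[(\one-P)\rho]$; since entanglement fidelity controls errors more coarsely than average fidelity, this forces the slightly worse smoothing $\eps/2$ rather than $\eps$, matching the statement.

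For the \textbf{lower bounds} (identical in both settings, since a blind code is in particular a valid code for the visible figure of merit after noting $F_e \le \bar F$ for the induced ensemble — or one argues directly), I would start from any $\eps$-admissible code $(\cdot,\kD,M)$ and show it forces $\log M \ge \oH_s^{\eps+\eta}(\rho) + \log\eta$. The idea: the overall channel $\Lambda = \kD\circ(\text{encoder})$ factors through a space of dimension $M$, so $\Lambda(\rho)$ (and every $\Lambda(\psi_i)$) has rank $\le M$; writing the fidelity condition and using the Fuchs--van de Graaf-type relation between fidelity and trace distance (\thref{lem:trace-purified-bound} and the relations in \thref{def:distance-measures}) gives $d(\rho, \Lambda(\rho)) \le \sqrt{2\eps}$ — more precisely one uses $\bar F(\kE,\kD\circ\cV) \ge 1-\eps \Rightarrow$ the decompressed ensemble state is within purified distance $\sqrt{2\eps - \eps^2}$ — but the cleaner route is via \thref{lem:tr-projector}: let $\sigma = \Lambda(\rho)$, which satisfies $\operatorname{rank}\sigma \le M$, hence $\sigma \le \operatorname{rank}(\sigma)\cdot\|\sigma\|_\infty\one \le M\one$, so for $\gamma = \log M$ one has $\sigma \le 2^\gamma\one$. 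Then $\tr[(\rho - 2^\gamma\one)_+] \le \tr[(\rho - \sigma)_+] + \tr[(\sigma - 2^\gamma\one)_+] = d(\rho,\sigma) + 0$, exactly as in the proof of \thref{thm:D_s-bounds}(iii), and $d(\rho,\sigma) \le \eps + \eta$ after absorbing the fidelity-to-trace-distance loss into $\eta$ (this is where the $+\log\eta$ and the shift $\eps\to\eps+\eta$ originate — one picks up a factor from passing between fidelity and trace distance and between admissibility and feasibility). Therefore $\gamma = \log M$ is feasible for $\oD_s^{1-(\eps+\eta)}(\rho\|\one)$ up to the $\log\eta$ correction, giving $\log M \ge \oD_s^{\,?}(\rho\|\one) + \log\eta = \oH_s^{\eps+\eta}(\rho) + \log\eta$ after invoking \thref{prop:D_s-properties}(ii),(iv).

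The \textbf{main obstacle} is the lower bound bookkeeping: getting the exact smoothing parameters and the additive $\log\eta$ right requires care in translating the fidelity-based admissibility criteria (entanglement fidelity $F_e$ and ensemble average fidelity $\bar F$) into a trace-distance statement $d(\rho,\Lambda(\rho)) \le \eps + \eta$, and then matching that to the trace-quantity in the definition of $\oD_s$; one must also verify that the rank bound $\operatorname{rank}\Lambda(\rho) \le M$ genuinely yields $\Lambda(\rho) \le 2^{\log M}\one$, which needs $\Lambda(\rho)$ to be a state (eigenvalues $\le 1$). A secondary subtlety is the asymmetry between visible and blind \emph{upper} bounds: one must check that the coarser entanglement-fidelity criterion really does only cost a factor of $2$ in the smoothing (hence $\eps/2$), which comes down to the inequality $F_e(\rho,\kD\circ\cE) \ge 1 - 2(1 - \tr P\rho)$ for the pinch-and-embed encoder, versus $\bar F \ge (\tr P\rho)^2 \ge 1 - 2(1-\tr P\rho)$ in the visible case being applied to $1-\eps$ directly without the square loss — I would double-check whether the paper in fact obtains the same factor and whether the discrepancy in the statement ($\oH_s^\eps$ vs.\ $\oH_s^{\eps/2}$) is genuine or an artifact of a cruder blind-case estimate.
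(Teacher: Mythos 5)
Your achievability constructions match the paper's: project onto the eigenspace of $\rho$ with eigenvalues at least $2^{-\gamma}$, pinch, and embed back. One point you leave unresolved there is actually clean: in the visible case the ensemble average fidelity of the pinching code is \emph{exactly} $\sum_i p_i\langle\psi_i|P|\psi_i\rangle=\tr P\rho\geq 1-\eps$ (your parenthetical identity $\langle\psi_i|P\psi_i P|\psi_i\rangle/\tr(P\psi_i)=\tr(P\psi_i)$ is the whole computation), so there is no square loss and the bound is $\oH_s^{\eps}(\rho)$; the square loss $(\tr P\rho)^2\geq 1-2\eps$ arises only for the entanglement fidelity in the blind case, which is exactly why the statement has $\oH_s^{\eps/2}$ there. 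The paper's remark confirms this asymmetry is an artifact of the cruder $F_e$ estimate and leaves closing it as an open problem.

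The converse is where your proposal has a genuine gap. First, the premise $\operatorname{rank}\Lambda(\rho)\leq M$ for $\Lambda=\kD\circ\cE$ is false: a CPTP decoder can increase rank (e.g.\ a depolarizing $\kD$ outputs a full-rank state), and in the visible setting the encoder need not even be linear, so ``factoring through a space of dimension $M$'' has no direct operator-theoretic meaning. Second, even granting a rank bound, the inequality you derive points the wrong way: with $\gamma=\log M\geq 0$ one has $\rho\leq\one\leq 2^{\gamma}\one$, so $\tr(\rho-2^{\gamma}\one)_+=0$ and the conclusion ``$\log M\geq\oD_s^{\eps+\eta}(\rho\|\one)$'' is vacuous (that quantity equals $\uD_s^{1-\eps-\eta}(\rho\|\one)\leq 0$, not $\oH_s^{\eps+\eta}(\rho)=-\uD_s^{\eps+\eta}(\rho\|\one)\geq 0$). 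What the converse actually requires is a statement of the form $1-\eps\leq\bar{F}(\kE,\kD\circ\cV)\leq\max\{\tr P\rho\mid P\text{ a projection with }\tr P=M\}$, after which the paper's two-line computation $\tr Q\rho\leq\tr(\rho-2^{-\log M+\log\eta}\one)_+ +\eta$ (\thref{lem:tr-projector}) finishes the job. That fidelity bound is the substantive content: in the visible case it is \thref{prop:averaged-fidelity-bound} (Hayashi), proved via a Stinespring dilation of $\kD$ and the separability argument of \thref{lem:separable-states}; in the blind case it is obtained from the Kraus decomposition of $\kD\circ\cE$ together with a Cauchy--Schwarz estimate producing a projector $\widetilde{Q}$ with $\tr\widetilde{Q}\leq M$. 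Neither ingredient appears in your outline, and no trace-distance/triangle-inequality substitute of the kind you sketch can replace them. (Your observation that $F_e\leq\bar{F}$ would let you deduce the blind lower bound from the visible one is correct, but it does not help until the visible converse is actually established.)
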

Interestingly, the lower bounds are identical in the two settings. However,
 the upper bounds are given in terms of the information spectrum entropies with different error parameters. 
This has consequences on the second order behaviour of $m_{b}^{(1), \eps}(\rho)$, which we discuss in Section \ref{sec:source-coding-second-order}.

\subsubsection{Proof of one-shot bounds: Visible setting}
\begin{proof}[Proof of the upper bound in \Cref{thm:source-coding-one-shot}(i) (Achievability)]
	Assume that 
	\begin{align}\label{eq:visible-comp-rate-assumption}
		\gamma=\oH_s^\eps(\rho).
	\end{align}
We set $P=\lbrace \rho\ge 2^{-\gamma}\one\rbrace$ and define the compressed Hilbert space as the image of this projection, i.e.~ $\cH_c\coloneqq \im P$. The dimension of $\cH_c$ in this case is given by
\begin{align}
M= \tr P \le 2^\gamma,\label{eq:dimension-compressed-space}
\end{align}
where the inequality follows from Lemma~\ref{lem3}. Consider now the visible encoding map 
\begin{align*}
\cV\colon i\mapsto \frac{P\psi_i P}{\tr P\psi_i}.
\end{align*}
For the decoding map $\kD$ we choose the trivial embedding of the compressed state $\rho_c = \sum_i p_i \cV(i)$ into the original Hilbert space $\cH$. We can verify that this choice of $\cH_c$, $\cV$ and $\kD$ constitutes an $\eps$-admissible code $\cC_v$ as follows:
	\begin{align*}
		\bar{F}(\kE,\kD\circ\cV) &= \sumi_i p_i\tr((\kD\circ\cV)(i)\psi_i)\\
		&= \sumi_i p_i\tr\frac{P\psi_i P\psi_i}{\tr P\psi_i}\\
		&= \sumi_i p_i \frac{\langle\psi_i|P|\psi_i\rangle^2}{\langle\psi_i|P|\psi_i\rangle}\\
		&= \sumi_i p_i \langle \psi_i|P|\psi_i\rangle\\
		&= \tr P\rho\\
		&\geq \tr [P(\rho-2^{-\gamma}\one)]\\
		&\geq 1-\eps,
	\end{align*}
	where the last inequality follows from the definitions of $P$ and $\oH_s^\eps(\rho)$, and the choice of $\gamma$ given in \eqref{eq:visible-comp-rate-assumption}. Thus, $(\cV, \kD, M)$ is an $\eps$-admissible code and we obtain $m_v^{(1), \eps}(\rho)\leq \gamma= \oH_s^\eps(\rho)$ by \eqref{eq:dimension-compressed-space} and \Cref{def:source-coding}(iii) of the minimal compression length $m_v^{(1), \eps}(\rho)$. 
\end{proof}

In order to prove the lower (converse) bound of \Cref{thm:source-coding-one-shot}(i), we employ the following lemma \cite{NK01,Hay02}:
\begin{lemma}\label{lem:separable-states}
Let $\rho_{AB}\in\cD(\cH_{AB})$ be separable. Then
\begin{multline*}
\max\lbrace \tr P\rho_A \mid P\text{ is a projection on }\cH_A\text{ with }\tr P=k\rbrace \\
\geq \max\lbrace \tr P\rho_{AB} \mid P\text{ is a projection on }\cH_{AB}\text{ with }\tr P=k\rbrace.
\end{multline*}
\end{lemma}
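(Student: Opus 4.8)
The plan is to rephrase both maxima through the Ky Fan maximum principle and thereby reduce the statement to a majorization relation, which is precisely the Nielsen--Kempe theorem that separable states are more disordered globally than locally \cite{NK01}. Recall that for a positive semi-definite operator $X$ and an integer $k$ not exceeding the dimension of the underlying space, $\max\{\tr PX \mid P \text{ a projection with } \tr P = k\} = \sum_{i=1}^k \lambda_i^\downarrow(X)$, the sum of the $k$ largest eigenvalues of $X$. Thus the claimed inequality is equivalent to
$$\sum_{i=1}^k \lambda_i^\downarrow(\rho_{AB}) \;\le\; \sum_{i=1}^k \lambda_i^\downarrow(\rho_A) \qquad\text{for all } k,$$
i.e.\ to the statement that the eigenvalue vector of $\rho_{AB}$, padded with zeros, is majorized by that of $\rho_A$.

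To prove this majorization I would argue via Gram matrices. Write the separable state as a mixture of pure product states, $\rho_{AB} = \sum_{i=1}^N p_i\,|a_i\rangle\langle a_i|\otimes|b_i\rangle\langle b_i|$ with $p_i>0$ and $|a_i\rangle,|b_i\rangle$ unit vectors. Since $\sum_i|u_i\rangle\langle u_i|$ and its Gram matrix $\big[\langle u_i|u_j\rangle\big]_{ij}$ always have the same nonzero spectrum, the nonzero eigenvalues of $\rho_{AB} = \sum_i |u_i\rangle\langle u_i|$ with $|u_i\rangle = \sqrt{p_i}\,|a_i\rangle|b_i\rangle$ are those of $G'_{ij} \coloneqq \sqrt{p_i p_j}\,\langle a_i|a_j\rangle\langle b_i|b_j\rangle$, while those of $\rho_A = \sum_i p_i|a_i\rangle\langle a_i|$ are those of $G_{ij}\coloneqq \sqrt{p_i p_j}\,\langle a_i|a_j\rangle$. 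Padding both spectra with zeros up to length $N$ leaves the partial sums $\sum_{i\le k}\lambda_i^\downarrow$ unchanged, so it suffices to show $\lambda^\downarrow(G')\prec\lambda^\downarrow(G)$. Now observe that $G' = G\circ M$, the Schur (entrywise) product with the matrix $M_{ij}\coloneqq\langle b_i|b_j\rangle$, which is a \emph{correlation matrix}: it is positive semi-definite, being a Gram matrix, and has unit diagonal.

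The remaining point is that Schur multiplication by a correlation matrix is a doubly stochastic quantum channel. Decomposing $M = \sum_l |m_l\rangle\langle m_l|$ and setting $D_l \coloneqq \operatorname{diag}\big((m_l)_1,\dots,(m_l)_N\big)$, a one-line computation gives $G\circ M = \sum_l D_l\, G\, D_l^\dagger$, and $\sum_l D_l^\dagger D_l = \sum_l D_l D_l^\dagger = \operatorname{diag}(M_{11},\dots,M_{NN}) = \one_N$. Hence $\Phi(\cdot)\coloneqq\sum_l D_l(\cdot)D_l^\dagger$ is unital and trace-preserving with $\Phi(G)=G'$, and the standard fact that doubly stochastic channels can only decrease the spectrum in the majorization order yields $\lambda^\downarrow(G')\prec\lambda^\downarrow(G)$. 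Combined with the Ky Fan identifications above, this proves the lemma.

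I expect the only mildly technical points to be the bookkeeping with the zero-padding, since $\rho_A$, $\rho_{AB}$ and the Gram matrices live in spaces of different sizes, and the invocation of the fact that a unital CPTP map induces majorization of spectra; everything else is an elementary computation. Of course one could instead simply cite \cite{NK01} for the majorization statement and \cite{Hay02} for its application to the converse bound in visible source coding.
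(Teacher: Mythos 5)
Your proposal is correct, but it proves the lemma by a genuinely different route than the paper. You reformulate both sides via the Ky Fan maximum principle as partial sums of ordered eigenvalues, reduce the claim to the Nielsen--Kempe majorization $\lambda^\downarrow(\rho_{AB})\prec\lambda^\downarrow(\rho_A)$ for separable states, and then prove that majorization from scratch: passing to Gram matrices, writing the global Gram matrix as a Schur product of the local one with a correlation matrix, and observing that Schur multiplication by a correlation matrix is a unital trace-preserving channel, hence spectrum-decreasing in the majorization order. All of these steps are sound (the identification $G'=G\circ M$, the Kraus form $\sum_l D_l(\cdot)D_l^\dagger$ with $\sum_l D_lD_l^\dagger=\sum_l D_l^\dagger D_l=\one_N$, and the zero-padding bookkeeping all check out), and the argument in fact establishes the strictly stronger statement that separable states are more disordered globally than locally, which is exactly the content of the cited reference \cite{NK01}. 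The paper instead argues much more briefly: separability gives the operator inequality $\rho_{AB}\leq\rho_A\otimes\one_B$, so for a maximizing projection $\widetilde{P}_{AB}$ one gets $\tr\widetilde{P}_{AB}\rho_{AB}\leq\tr[(\tr_B\widetilde{P}_{AB})\rho_A]$, and the operator $\tr_B\widetilde{P}_{AB}$ is then treated as feasible for the left-hand maximum. That route is shorter, but it leans on the delicate point that the partial trace of a projection is in general neither a projection nor even bounded above by $\one_A$, so strictly speaking the left-hand maximum must be relaxed (e.g.\ to the Ky Fan form over $0\leq X\leq\one$ with $\tr X=k$) and the feasibility of $\tr_B\widetilde{P}_{AB}$ argued separately; your majorization route avoids this issue entirely at the cost of more machinery. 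The only caveat worth recording in either approach is the implicit assumption $k\leq\dim\cH_A$, without which the left-hand maximum is vacuous.
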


The following crucial proposition is due to Hayashi \cite{Hay02}. We reproduce a proof in our notation for the convenience of the reader.

\begin{proposition}\label{prop:averaged-fidelity-bound}
Consider a quantum information source with source ensemble $\kE=\lbrace p_i,\psi_i\rbrace$, and source state $\rho=\sum_i p_i\psi_i \in \cD(\cH)$.
Let $\cV\colon \lbrace i\rbrace_i \rightarrow \cH_c$ be a visible encoding map into a compressed Hilbert space $\cH_c$, and $\kD\colon\cH_c\rightarrow\cH$ be the decompression map. Set $\rho=\sum_ip_i\psi_i$ and denote by $M=\dim\cH_c$ the dimension of the compressed Hilbert space $\cH_c$. Then
$$\bar{F}(\kE,\kD\circ\cV)\leq \max\lbrace \tr P\rho\mid P\text{ is a projection on }\cH\text{ with }\tr P=M\rbrace.$$
\end{proposition}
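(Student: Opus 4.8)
The plan is to bound the ensemble average fidelity $\bar F(\kE,\kD\circ\cV)$ by the best achievable overlap of $\rho$ with a rank-$M$ projection. First I would recall that by definition $\bar F(\kE,\kD\circ\cV)=\sumi_i p_i\tr((\kD\circ\cV)(\psi_i)\psi_i)$, and then exhibit this quantity as an ``entangled'' overlap on a larger space. Specifically, introduce a reference system $R$ with orthonormal basis $\{|i\rangle_R\}$ indexing the signals, and consider the classically correlated state $\Omega_{R}\otimes(\cdot)$ construction: on $\cH\otimes\cH_R$ form $\Xi = \sumi_i p_i\, \psi_i\otimes|i\X i|_R$ (the source state purified/extended by a classical flag), and on $\cH\otimes\cH_R$ also form the ``received'' state $\Theta = \sumi_i p_i\, (\kD\circ\cV)(\psi_i)\otimes|i\X i|_R$. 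Then $\bar F(\kE,\kD\circ\cV)=\tr(\Theta\,\Xi')$ for a suitable rank-one-per-block operator $\Xi'=\sumi_i p_i\,\psi_i\otimes|i\X i|_R$ — i.e. $\bar F = \sumi_i p_i \langle\psi_i|(\kD\circ\cV)(\psi_i)|\psi_i\rangle$, which is exactly $\tr[\Theta\, \Sigma]$ where $\Sigma=\sumi_i \psi_i\otimes|i\X i|_R$ is a projection (sum of orthogonal rank-one projectors) on $\cH\otimes\cH_R$.

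Next I would observe that $\Theta$ factors through the compressed space: $\Theta=(\kD\otimes\id_R)\big(\sumi_i p_i\,\cV(\psi_i)\otimes|i\X i|_R\big)$, and the operator inside has support on $\cH_c\otimes\cH_R$, hence rank at most $M\cdot|\kE|$ — but more usefully, it is supported on $\cH_c\otimes\cH_R$, a space on which a rank-$M$ argument can be run after tracing. The cleaner route is: $\bar F(\kE,\kD\circ\cV)=\tr[\Sigma\,\Theta]$ and $\Theta$ is a state obtained from a state $\rho_c\otimes(\text{flag})$-type object on $\cH_c\otimes\cH_R$ by applying the local channel $\kD$ on the first factor; since $\Sigma$ is separable across the $\cH{:}\cH_R$ cut, I would apply \thref{lem:separable-states} with $A=\cH$, $B=\cH_R$, $k=M$. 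Concretely, $\Sigma=\sumi_i\psi_i\otimes|i\X i|_R$ is a separable operator, so \thref{lem:separable-states} (applied with $\rho_{AB}$ playing the role of the separable object and $P$ the optimization variable — here I instead use the dual reading: maximizing $\tr P\,\Theta$ over rank-$M$ projections $P$ on $\cH$ dominates maximizing over rank-$M$ projections on $\cH\otimes\cH_R$) bounds $\bar F$ by $\max\{\tr P\,\rho : P \text{ projection on }\cH,\ \tr P=M\}$, once one checks that $\tr_R\Theta$ — the marginal of $\Theta$ on $\cH$ — and the relevant reduced quantity coincide with (a state dominated by, or exactly) $\rho$ via the fidelity-to-overlap rewriting.

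The main obstacle — and the step I'd be most careful about — is correctly setting up the ancilla/flag formalism so that \thref{lem:separable-states} applies with the right roles of the systems and so that the ``$k=M$'' rank constraint genuinely reflects $\dim\cH_c=M$. In particular one must argue that $\bar F$ can be written as $\tr[Q\,(\text{something of rank} \le M \text{ on the compressed register})]$ for a projection $Q$ that is separable across the physical/flag cut, and then that discarding the flag via \thref{lem:separable-states} turns the bound into one involving only rank-$M$ projections on $\cH$ applied to $\rho$. The fidelity identity $\langle\psi_i|(\kD\circ\cV)(\psi_i)|\psi_i\rangle = \tr[\psi_i\,(\kD\circ\cV)(\psi_i)]$ together with the fact that $\sumi_i p_i\psi_i=\rho$ handles the final collapse to $\tr P\rho$; the rest is bookkeeping with the separability lemma and the observation that a CPTP map acting before the measurement can only decrease the attainable overlap with a fixed-rank projection, which is precisely the content packaged in \thref{lem:separable-states}.
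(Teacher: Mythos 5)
There is a genuine gap in your proposal: the flag register $\cH_R$ never produces the rank-$M$ projection that the bound requires. In your setup the fidelity is written as $\tr[\Sigma\,\Theta]$ with $\Sigma=\sum_i\psi_i\otimes|i\X i|_R$; this $\Sigma$ is indeed a projection, but its rank is the number of signals $|\kE|$, not $M$, so invoking \thref{lem:separable-states} ``with $k=M$'' is unjustified. Worse, the separable object in your picture is $\Theta=\sum_i p_i(\kD\circ\cV)(\psi_i)\otimes|i\X i|_R$, whose marginal on $\cH$ is the \emph{output} state $\sum_ip_i(\kD\circ\cV)(\psi_i)$, not $\rho$; applying the lemma there yields a bound in terms of the wrong state and the wrong rank. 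The number $M$ can only enter through the fact that the encoded states live in the $M$-dimensional space $\cH_c$ \emph{before} the fixed channel $\kD$ is applied, and your construction discards exactly that information once $\kD$ acts. Your closing claim that ``a CPTP map acting before the measurement can only decrease the attainable overlap with a fixed-rank projection'' is not the content of \thref{lem:separable-states} (which concerns partial traces of separable states), and is in fact the nontrivial assertion that still needs proof.

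The paper's proof supplies precisely the missing ingredient: a Stinespring isometry $\cU:\cH_c\to\cH\otimes\cH'$ of $\kD$, whose range projection $\cU\cU^\dagger$ has rank exactly $M=\dim\cH_c$. After reducing to a pure encoded state $\chi$, one checks that $\tr[\kD(\chi)\psi_i]=\tr[\cU\chi\cU^\dagger\,\psi_i']$ for a suitably chosen purification $\psi_i'$ of $\psi_i$, which (being a purification of a \emph{pure} state) factorizes as $\psi_i\otimes\phi_i$. Hence $\rho'=\sum_ip_i\psi_i'$ is separable with $\tr_{\cH'}\rho'=\rho$, and $\cU\chi\cU^\dagger\le\cU\cU^\dagger$ turns the fidelity into $\tr[\cU\cU^\dagger\rho']$, i.e.\ the overlap of a separable state with a rank-$M$ projection on $\cH\otimes\cH'$ --- and only at this point does \thref{lem:separable-states} apply, stripping off the Stinespring ancilla (not a classical flag) to land on $\max\{\tr P\rho : \tr P=M\}$. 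If you want to rescue your outline, you would need to replace the flag register by this dilation argument; as written, the rank-$M$ constraint and the appearance of $\rho$ rather than the decoder output are both unaccounted for.
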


\begin{proof}
By considering eigenvalue decompositions $\cV(i) = \sum_j \mu_j^{(i)}\varphi_j^{(i)}$ of the encoded signals, it suffices to show that
\begin{align}
\sum_i p_i \tr(\kD(\chi)\psi_i) \leq \max\lbrace \tr P\rho\mid P\text{ is a projection on }\cH\text{ with }\tr P=M\rbrace\label{eq:hayashi-pure-state}
\end{align}
holds for any pure state $|\chi\rangle\in\cH_c$.

To show \eqref{eq:hayashi-pure-state}, let $\cU\colon\cH_c\rightarrow\cH\otimes\cH'$ be a Stinespring isometry of the CPTP map $\kD$, i.e.
$$\kD(\sigma) = \tr_{\cH'}(\cU\sigma\cU^\dagger)$$ for $\sigma\in\cD(\cH_c)$. Then $\cU\cU^\dagger$ is the projection onto the image of $\cU$ in $\cH\otimes\cH'$. Consider now the pure state
\begin{align}
\psi_i' \coloneqq \frac{(\psi_i\otimes\one_{\cH'})\cU\chi\cU^\dagger(\psi_i\otimes\one_{\cH'})}{\tr(\cU\chi\cU^\dagger(\psi_i\otimes\one_{\cH'}))}\in\cD(\cH\otimes\cH'),\label{eq:psi'-definition}
\end{align} which satisfies
\begin{align}
\tr (\cU\chi\cU^\dagger)\psi_i' &= \frac{\tr(\cU\chi\cU^\dagger(\psi_i\otimes\one_{\cH'})\cU\chi\cU^\dagger(\psi_i\otimes\one_{\cH'}))}{\tr(\cU\chi\cU^\dagger(\psi_i\otimes\one_{\cH'}))}\notag\\
&= \frac{\langle\chi|\cU^\dagger(\psi_i\otimes\one_{\cH'})\cU|\chi\rangle^2}{\langle\chi|\cU^\dagger(\psi_i\otimes\one_{\cH'})\cU|\chi\rangle}\notag\\
&= \langle\chi|\cU^\dagger(\psi_i\otimes\one_{\cH'})\cU|\chi\rangle\notag\\
&= \tr(\cU\chi\cU^\dagger(\psi_i\otimes\one_{\cH'}))\notag\\
&=\tr \kD(\chi)\psi_i.\label{eq:psi'-trace-relation}
\end{align}
We claim that 
\begin{align}
\tr_{\cH'}\psi_i'=\psi_i\label{eq:psi'-psi}.
\end{align} 
This can be seen as follows.
Firstly, for the denominator of \eqref{eq:psi'-definition} we have
\begin{align}
\tr(\cU\chi\cU^\dagger(\psi_i\otimes\one_{\cH'})) = \tr(\kD(\chi)\psi_i) = \langle \psi_i|\kD(\chi)|\psi_i\rangle.\label{eq:partial-denominator}
\end{align}
Secondly, the numerator of \eqref{eq:psi'-definition} can be rewritten as
\begin{align*}
(\psi_i\otimes \one_{\cH'})\cU\chi\cU^\dagger(\psi_i\otimes\one_{\cH'}) &= \left(|\psi_i\rangle\langle\psi_i|\otimes \sumi_j|j\rangle\langle j|\right))\cU\chi\cU^\dagger \left(|\psi_i\rangle\langle\psi_i|\otimes\sumi_k|k\rangle\langle k|\right)\notag\\
&=\sum_{j,k} \Big[(\langle\psi_i|\otimes \langle j|)\cU\chi\cU^\dagger(|\psi_i\rangle\otimes |k\rangle)\Big] |\psi_i\rangle\langle\psi_i|\otimes| j\rangle\langle k|
\end{align*}
where $\lbrace |j\rangle\rbrace$ constitutes an orthonormal basis for $\cH'$. Hence,
\begin{align}
\tr_{\cH'}\left((\psi_i\otimes \one_{\cH'})\cU\chi\cU^\dagger(\psi_i\otimes\one_{\cH'})\right) &=\tr_{\cH'}\bigg(\sumi_{j,k} \Big[(\langle\psi_i|\otimes \langle j|)\cU\chi\cU^\dagger(|\psi_i\rangle\otimes |k\rangle)\Big]\notag\\
&\qquad{}\times |\psi_i\rangle\langle\psi_i|\otimes| j\rangle\langle k|\bigg)\notag\\
&= \sum_{j} \Big[(\langle\psi_i|\otimes \langle j|)\cU\chi\cU^\dagger(|\psi_i\rangle\otimes |j\rangle)\Big] |\psi_i\rangle\langle\psi_i|\notag\\
&= \langle\psi_i|\kD(\chi)|\psi_i\rangle \psi_i,\label{eq:partial-nominator}
\end{align}
where the last equality follows from the definition of the partial trace. Using \eqref{eq:partial-denominator} and \eqref{eq:partial-nominator} in \eqref{eq:psi'-definition} yields \eqref{eq:psi'-psi}.

We now want to relate the state $\psi_i'$ to the source state $\rho\coloneqq \sum_ip_i\psi_i$ associated to $\kE$. Since $\psi'_i$ is a purification of the pure state $\psi_i$ by \eqref{eq:psi'-psi}, there exists a pure state $\phi_i\in\cD(\cH')$ such that $\psi_i'=\psi_i\otimes \phi_i$. Consequently, 
the state $$\rho'\coloneqq \sumi_i p_i\psi_i'=\sumi_i p_i \psi_i\otimes\phi_i$$ is separable and satisfies $\tr_{\cH'}\rho'=\rho$. Thus, we can apply \Cref{lem:separable-states} to infer that
\begin{multline}\label{eq:projectors}
\max\lbrace \tr P\rho' \mid P\text{ is a projection on }\cH\otimes\cH'\text{ with }\tr P=M\rbrace \\
\leq \max\lbrace \tr P\rho \mid P\text{ is a projection on }\cH\text{ with }
\tr P=M\rbrace,
\end{multline}
and we observe that the projector $\cU\cU^\dagger$ with $$\tr\cU\cU^\dagger=\tr\cU^\dagger\cU = \tr\one_{\cH_c}=M$$ is feasible for the left-hand side of \eqref{eq:projectors}. Noting that 
\begin{align}
\chi\leq \one_{\cH_c}\quad\text{and thus}\quad\cU\chi\cU^\dagger\leq\cU\cU^\dagger,\label{eq:isometry-projector-relation}
\end{align}
we finally arrive at
\begin{align*}
\sum_i p_i\tr(\kD(\chi)\psi_i) &= \sumi_i p_i\tr((\cU\chi\cU^\dagger)\psi_i')\\
&\leq \sumi_i p_i \tr(\cU\cU^\dagger\psi_i'	)\\
&= \tr(\cU\cU^\dagger\rho')\\
&\leq \max\lbrace \tr P\rho' \mid P\text{ is a projection on }\cH\otimes\cH'\text{ with }\tr P=M\rbrace \\
&\leq \max\lbrace \tr P\rho \mid P\text{ is a projection on }\cH\text{ with }\tr P=M\rbrace
\end{align*}
where the first line follows from \eqref{eq:psi'-trace-relation}, the second line uses \eqref{eq:isometry-projector-relation} and the last line is obtained by applying \eqref{eq:projectors}.
\end{proof}

\begin{proof}[Proof of the lower bound in \Cref{thm:source-coding-one-shot}(i) (Converse)]
Let $\cV\colon\lbrace i\rbrace_i\rightarrow \cH_c$ be a visible encoding of the source ensemble $\kE$ into the compressed Hilbert space $\cH_c$ with $\dim\cH_c = M$, and let $\kD\colon\cH_c\rightarrow \cH$ be an arbitrary CPTP map used in the decoding operation. Assuming that $$\bar{F}(\kE,\kD\circ\cV)\geq 1-\eps,$$ we have to show that 
\begin{align*}
\log M \geq \oH_s^{\eps+\eta}(\rho)+\log\eta.
\end{align*}
To this end, we compute:
\begin{align*}
1-\eps &\leq \bar{F}(\kE,\kD\circ\cV)\\
&\leq \max\lbrace \tr P\rho\mid P\text{ is a projection on }\cH\text{ with }\tr P=M\rbrace\\
&= \tr Q\rho\\
&= \tr Q(\rho-2^{-\log M + \log\eta}\one) + 2^{-\log M + \log\eta} \tr Q\\
&\leq \tr(\rho-2^{-\log M + \log\eta}\one)_+ + \eta
\end{align*}
where the second inequality follows from \Cref{prop:averaged-fidelity-bound}, with the maximizing projection denoted by $Q$ satisfying $\tr Q = M$, and the last inequality uses \Cref{lem:tr-projector}. We have thus shown that, for arbitrary $\eta>0$ 
$$1-(\eps+\eta) \leq \tr(\rho-2^{-\log M + \log\eta}\one)_+,$$ and hence $\oH_s^{\eps+\eta}(\rho)\leq \log M - \log\eta$, which proves the claim.
\end{proof}

\subsubsection{Proof of one-shot bounds: Blind setting}
\begin{proof}[Proof of the upper bound in \Cref{thm:source-coding-one-shot}(ii) (Achievability)]
	Assume that 
	\begin{align}\label{eq:comp-rate-assumption}
		\gamma=\oH_s^\eps(\rho).
	\end{align}
We set $P=\lbrace \rho\geq 2^{-\gamma}\one\rbrace$ and consider the compression map 
	\begin{align}
	\cE\colon\rho\mapsto P\rho P + \tr[\rho(\one-P)] |\varphi\rangle\langle \varphi|,\label{eq:compression-map}
	\end{align}
	where $|\varphi\rangle$ is an arbitrary pure state in the compressed Hilbert space $\cHt_c\equiv \im P$ with dimension
$M= \tr P \le 2^\gamma.$
Here, the inequality follows from Lemma~\ref{lem3}. For the decoding map $\kD$ we consider the trivial embedding of the compressed state $\cE(\rho)$ into the original Hilbert space $\cH_A$. It is known \cite{Sch96} that for a CPTP map $\Lambda$ with Kraus operators $\lbrace A_i\rbrace$ the entanglement fidelity is given by the expression $$F_e(\rho,\Lambda) = \sum_i |\tr A_i\rho|^2.$$ 
The Kraus form of the compression map $\cE$ as in \eqref{eq:compression-map} is given by
$$\cE(\rho) = P\rho P + \sumi_iA_i\rho A_i^\dagger,$$ where $A_i=|\phi\rangle\langle i|$ and $\lbrace |i\rangle\rbrace$ is an orthonormal basis for $(\im P)^\perp$. We therefore have
	\begin{align*}
		F_e(\rho,\Lambda) &= |\tr P\rho|^2 + \sumi_i|\tr (|0\rangle\langle i|\rho)|^2\\
		&\geq (\tr P\rho)^2\\
		&\geq (\tr [P(\rho-2^{-\gamma}\one)])^2\\
		&= (1-\eps)^2\\
		&\geq 1-2\eps,
	\end{align*}
	where the last equality follows from the definitions of $P$ and $\oH_s^\eps(\rho)$, and the choice of $\gamma$ given in \eqref{eq:comp-rate-assumption}. Thus, $(\cE, \kD, M)$ is an $2\eps$-admissible code, and by replacing $\eps$ with $\frac{\eps}{2}$ we obtain $m_b^{(1), \eps}(\rho)\leq \gamma= \oH_s^{\eps/2}(\rho)$ by \Cref{def:source-coding}(iii) of the compression length $m_b^{(1), \eps}$. 
\end{proof}

For the proof of the lower bound (converse) in \Cref{thm:source-coding-one-shot}(ii), we employ the Kraus representation of the CPTP encoding map $\cE$ as follows:
\begin{proof}[Proof of the lower bound in \Cref{thm:source-coding-one-shot}(ii) (Converse)]
	We need to prove that for every $\eps$-admissible (blind) coding scheme $\cC_b=(\cE,\kD,M)$ with $F_e(\rho,\kD\circ\cE)\geq 1-\eps$, the code size $M$ satisfies
	$$\log M \geq \oH_s^{\eps+\eta}(\rho)+\log\eta.$$
		
	To this end, let $\cH_c\subset \cH$ be the compressed Hilbert space with dimension $M=\dim\cH_c$, let $\cE\colon\cD(\cH)\rightarrow\cD(\cH_c)$ be a CPTP map, and let $P$ be the projector onto $\cH_c$, i.e.~$\tr P=M$. Furthermore, let $\kD\colon\cD(\cH_c)\rightarrow\cD(\cH)$ be an arbitrary CPTP map. If $\lbrace E_j\rbrace$ and $\lbrace D_k\rbrace$ are sets of Kraus operators for $\cE$ and $\kD$ respectively, then $\lbrace D_kE_j\rbrace$ is a set of Kraus operators for the map $\Lambda = \kD\circ\cE$. We also define $Q_k$ as the projector onto the image of $D_k$, i.e.~$Q_k = \Pi_{\im(D_k P)}$, and note that 
	\begin{align}\label{eq:Qk-Dk}
		D_kE_j=Q_kD_kE_j.
	\end{align}	
		For all $k$ we have
	\begin{align}
		\tr Q_k = \dim\im(D_k P) =\dim\im D_k|_{\im P}\leq \dim\dom D_k|_{\im P}=\dim\im P=\tr P.\label{eq:tr-Q-tr-P}
	\end{align}
	Let us first derive an upper bound for the entanglement fidelity:
	\begin{align*}
		F_e(\rho,\Lambda) &= \sum_{j,k}|\tr(D_kE_j\rho)|^2\\
		&= \sum_{j,k}|\tr(Q_kD_kE_j\rho)|^2\\
		&= \sum_{j,k}|\tr(\sqrt{\rho}Q_kD_kE_j\sqrt{\rho})|^2\\
		&\leq \sum_{j,k}\tr(Q_k\rho Q_k)\tr(\sqrt{\rho}E_j^\dagger D_k^\dagger D_k E_j\sqrt{\rho})\\
		&= \sum_{j,k}\tr(Q_k\rho Q_k)\tr(D_k E_j\rho E_j^\dagger D_k^\dagger)\\
	\end{align*}
	where we used \eqref{eq:Qk-Dk} in the second equality and the Cauchy-Schwarz inequality for the Hilbert-Schmidt inner product to obtain the inequality. Now let $\widetilde{Q}=\arg\max_{Q_k}\tr(Q_k\rho Q_k)$ and $\eta>0$ arbitrary, then we can further bound the entanglement fidelity by
	\begin{align}
		F_e(\rho,\Lambda) &\leq \tr(\tq\rho\tq)\sum_{j,k}\tr(D_k E_j\rho E_j^\dagger D_k^\dagger)\notag\\
		&= \tr\tq\rho\underbrace{\tr\left(\sum\nolimits_k D_k\left(\sum\nolimits_j E_j\rho E_j^\dagger\right)D_k^\dagger\right)}_{=\tr(\Lambda(\rho))=1}\notag\\
		&= \tr[\tq(\rho-2^{-\log M + \log\eta}\one)]+ 2^{-\log M + \log\eta}\tr \tq\notag\\
		&\leq\tr(\rho-2^{-\log M + \log\eta}\one)_+ + 2^{-\log M + \log\eta}\tr P\notag\\
		&= \tr(\rho-2^{-\log M + \log\eta}\one)_+ + \eta\label{eq:converse-bound}
	\end{align}
	where we used \Cref{lem:tr-projector} and \eqref{eq:tr-Q-tr-P} in the second inequality. Hence, \eqref{eq:converse-bound} together with the assumption $F_e(\rho,\kD\circ\cE)\geq 1-\eps$ imply that
	\begin{align*}
		1-(\eps+\eta) &\leq \tr(\rho-2^{-\log M + \log\eta}\one)_+.
	\end{align*}
	By definition of the information spectrum entropy, we infer that $\oH_s^{\eps+\eta}\leq \log M -\log\eta$.
\end{proof}

\subsubsection{Second order asymptotics of source coding}\label{sec:source-coding-second-order}
The one-shot bounds for the optimal source compression length in terms of the information spectrum entropy in \Cref{thm:source-coding-one-shot} together with the asymptotic expansion of the latter in \Cref{thm:D_s-asymptotics} readily yield the second order asymptotic expansion of the visible coding compression length $m_v^{n,\eps}(\rho)$, which is related to the one-shot compression length via \Cref{def:blocklength-quantity}. In the blind setting, \Cref{thm:D_s-asymptotics} provides second order asymptotic bounds for $m_b^{n,\eps}(\rho)$.
\begin{theorem}\label{thm:sc-second-order}
	Consider a memoryless quantum source characterized by the pure-state ensemble $\kE=\lbrace p_i,\psi_i\rbrace$ with $\psi_i\in\cD(\cH)$ and the associated average ensemble state $\rho=\sum_ip_i\psi_i$. For any $\eps\in(0,1)$, the following holds for $m_{v/b}^{n,\eps}(\rho)$:
	\begin{enumerate}[{\normalfont (i)}]
	\item In the visible setting, the second order asymptotic expansion of  $m_{v}^{n,\eps}(\rho)$ is given by
	\begin{align}
		m_v^{n,\eps}(\rho) = nS(\rho)-\sqrt{n\left(\tr[\rho(\log\rho)^2]-S(\rho)^2\right)}\,\Phi^{-1}(\eps)+\cO(\log n).\label{eq:sc-second-order}
		\end{align}
		\item In the blind setting, we obtain the following asymptotic bounds:
		\begin{align}\label{eq:blind-sc-second-order}
			\begin{aligned}
			m_b^{n,\eps}(\rho) &\geq nS(\rho)-\sqrt{n\left(\tr[\rho(\log\rho)^2]-S(\rho)^2\right)}\,\Phi^{-1}(\eps)+\cO(\log n)\\
			m_b^{n,\eps}(\rho) &\leq nS(\rho)-\sqrt{n\left(\tr[\rho(\log\rho)^2]-S(\rho)^2\right)}\,\invP{\frac{\eps}{2}}+\cO(\log n).\\
			\end{aligned}
			\end{align}
	\end{enumerate}	
\end{theorem}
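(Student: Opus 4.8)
The plan is to feed the one-shot bounds of \thref{thm:source-coding-one-shot} into the asymptotic expansion of the information spectrum entropy, in exactly the way \thref{thm:D_s-asymptotics} was derived from \thref{thm:D_H-asymptotics}. By \thref{def:blocklength-quantity} we have $m_{v/b}^{n,\eps}(\rho)=m_{v/b}^{(1),\eps}(\rho^{\otimes n})$, so \thref{thm:source-coding-one-shot} gives, for every $\eta>0$,
\begin{align*}
\oH_s^{\eps+\eta}(\rho^{\otimes n})+\log\eta &\leq m_v^{n,\eps}(\rho)\leq \oH_s^\eps(\rho^{\otimes n}),\\
\oH_s^{\eps+\eta}(\rho^{\otimes n})+\log\eta &\leq m_b^{n,\eps}(\rho)\leq \oH_s^{\eps/2}(\rho^{\otimes n}).
\end{align*}
Since $\oH_s^\eps(\rho^{\otimes n})=-\uds(\rho^{\otimes n}\|\one)$ by \thref{def:children-entropies}(i), and $\one$ on $\cH^{\otimes n}$ equals $\one^{\otimes n}$, I would apply \thref{thm:D_s-asymptotics}(i) with $\sigma=\one$.

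Next I would evaluate the two constants appearing there for $\sigma=\one$. From \thref{def:quantum-relative-entropy}, $D(\rho\|\one)=\tr[\rho\log\rho]=-S(\rho)$ and $V(\rho\|\one)=\tr[\rho(\log\rho)^2]-D(\rho\|\one)^2=\tr[\rho(\log\rho)^2]-S(\rho)^2$, so that $\sk(\rho\|\one)=\sqrt{\tr[\rho(\log\rho)^2]-S(\rho)^2}$. Hence \thref{thm:D_s-asymptotics}(i) yields
$$\oH_s^\eps(\rho^{\otimes n})=nS(\rho)-\sqrt{n}\,\sk(\rho\|\one)\,\Phi^{-1}(\eps)+\cO(\log n),$$
and the same identity with $\eps$ replaced by $\eps+\eta$ or by $\eps/2$, noting $\sqrt{n}\,\sk(\rho\|\one)=\sqrt{n\left(\tr[\rho(\log\rho)^2]-S(\rho)^2\right)}$.

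For part (i), I would take $\eta=1/\sqrt n$ in the lower bound, so that $\log\eta=-\tfrac12\log n=\cO(\log n)$, and invoke \thref{lem:phi-trick} to write $\sqrt n\,\Phi^{-1}(\eps+1/\sqrt n)=\sqrt n\,\Phi^{-1}(\eps)+(\Phi^{-1})'(\xi)$ with $|\xi-\eps|\leq 1/\sqrt n$; since $(\Phi^{-1})'$ is continuous and $\xi$ lies in a compact subinterval of $(0,1)$ for all large $n$, this correction is $\cO(1)$ and is absorbed into $\cO(\log n)$. Thus the lower bound for $m_v^{n,\eps}(\rho)$ matches the upper bound up to $\cO(\log n)$, giving \eqref{eq:sc-second-order}. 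For part (ii), the lower bound is obtained identically (the blind and visible lower bounds coincide), while the upper bound follows by substituting the expansion of $\oH_s^{\eps/2}(\rho^{\otimes n})$; these two estimates do not match because $\Phi^{-1}(\eps/2)\neq\Phi^{-1}(\eps)$, so one simply records the pair of bounds \eqref{eq:blind-sc-second-order}.

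There is no genuine obstacle here: the argument is a bookkeeping composition of results already in hand. The only points requiring care are the sign conventions (the minus in the definition of $\oH_s^\eps$ together with $D(\rho\|\one)=-S(\rho)$), the boundedness of $(\Phi^{-1})'(\xi)$ used when passing from error level $\eps+\eta$ to $\eps$, and the observation that \thref{thm:D_s-asymptotics} applies with the non-normalized positive operator $\sigma=\one$ — which is legitimate since that expansion rests only on the Nussbaum–Szko\l a construction, valid for any $\sigma\in\cP(\cH)$.
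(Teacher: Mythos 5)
Your proposal is correct and follows essentially the same route as the paper's proof: substitute the one-shot bounds of \thref{thm:source-coding-one-shot} into the expansion of \thref{thm:D_s-asymptotics} with $\sigma=\one$, take $\eta=1/\sqrt{n}$, and absorb the $\log\eta$ and \thref{lem:phi-trick} corrections into $\cO(\log n)$. The sign bookkeeping ($\oH_s^\eps=-\uD_s^\eps(\cdot\|\one)$, $D(\rho\|\one)=-S(\rho)$) and the evaluation of $\sk(\rho\|\one)$ match the paper exactly.
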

\begin{proof}
	We abbreviate $\rho^n\equiv\rho^{\otimes n}$. By \Cref{thm:source-coding-one-shot}(i) we obtain the following bounds for the compression length in the visible setting with block length $n$:
	$$-\uD_s^{\eps+\eta}(\rho^{n}\|\one_n) + \log\eta \leq m_v^{n,\eps}(\rho)\leq -\uD_s^\eps(\rho^n\|\one_n).$$
	Setting $\eta=\frac{1}{\sqrt{n}}$, \Cref{thm:D_s-asymptotics} implies
	\begin{align*}
		m_v^{n,\eps}(\rho)&\geq -nD(\rho\|\one) - \sqrt{n}\,\sk(\rho\|\one)\Phi^{-1}\left(\eps+\frac{1}{\sqrt{n}}\right) + \cO(\log n)\\
		m_v^{n,\eps}(\rho) &\leq -nD(\rho\|\one) - \sqrt{n}\,\sk(\rho\|\one)\Phi^{-1}(\eps) + \cO(\log n).
	\end{align*}
	We apply \Cref{lem:phi-trick} to the lower bound and use $S(\rho)=-D(\rho\|\one)$ to obtain 
	$$m_v^{n,\eps}(\rho) = nS(\rho) - \sqrt{n}\,\sk(\rho\|\one)\Phi^{-1}(\eps) + \cO(\log n).$$
	Expanding the term $\sk(\rho\|\one)$ gives
	\begin{align*}
		\sk(\rho\|\one) &= \sqrt{\tr[\rho_A(\log\rho - \log\one)^2]-D(\rho\|\one)^2}\\
		&= \sqrt{\tr[\rho(\log\rho)^2]-S(\rho)^2},
	\end{align*}
	which yields the result. The bounds for the blind setting follow analogously.
\end{proof}

\begin{remark}~
\begin{enumerate}[(i)]
\item We note that the `gap' in the second order asymptotic bounds \eqref{eq:blind-sc-second-order} for $m_b^{(1),\eps}(\rho)$ is due to the different error parameters $\eps$ and $\frac{\eps}{2}$ in \Cref{thm:source-coding-one-shot}(ii). Hence, our method only yields non-matching asymptotic bounds in the blind setting. 
This problem has since been resolved by Abdelhadi and Renes \cite{AR20}, who derived a closed form of the second order asymptotic expansion of blind quantum source coding.

\item Using the result in \Cref{thm:sc-second-order}, we are able to recover the second order asymptotics of classical fixed-length source coding derived by Hayashi (see Theorems 3 and 9 of \cite{Hay08}).

\item Note that $\invP{\eps}>0$ for $\eps>\frac{1}{2}$, and therefore, the second order term in \eqref{eq:sc-second-order} is strictly negative. In this case, the compression rate to second order drops below the von Neumann entropy $S(\rho)$, as illustrated in Figure \ref{fig:below}, since the former is given by $\frac{an+b\sqrt{n}}{n}$ where $a=S(\rho)$ and $b=-\sqrt{\left(\tr[\rho(\log\rho)^2]-S(\rho)^2\right)}\,\Phi^{-1}(\eps)$.
\begin{figure}[ht]
\centering
\includegraphics{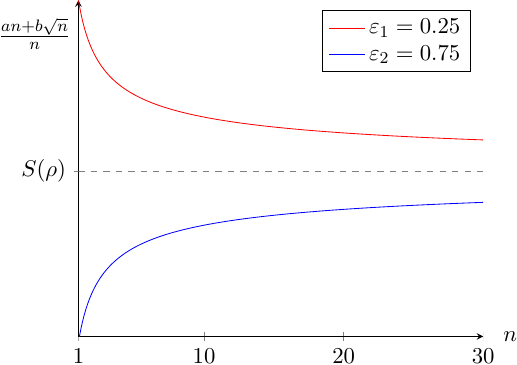}
\caption{Plot of $\frac{an+b\sqrt{n}}{n}$ for the source state $\rho=\frac{1}{2}|0\rangle\langle 0|+\frac{1}{2}|+\rangle\langle +|$.}
\label{fig:below}
\end{figure}
\end{enumerate}
\end{remark}

\subsection{Noisy dense coding}
Dense coding is the protocol by which prior shared entanglement between a sender (Alice) and a receiver (Bob) is exploited for sending classical messages through a noiseless quantum channel. If the entanglement shared between Alice and Bob is in the form of a mixed state $\rho_{AB} \in {\cal D}(\cH_A \otimes \cH_B)$ (instead of a maximally entangled pure state), then the entanglement is said to be noisy.

Our aim is to derive second order asymptotics for the optimal rate of reliable transmission of classical information through a noiseless quantum channel, assisted by prior shared noisy entanglement. If the noisy entanglement is in the form of multiple, identical copies of a bipartite state $\rho_{AB}$, then the optimal rate in the asymptotic limit is referred to as the {\em{dense coding capacity}} of the state $\rho_{AB}$ and denoted by $C_{dc}(\rho_{AB})$. 

In \cite{HHH+01}, it was shown that $$C_{dc}(\rho_{AB}) = \log d + S(\rho_B) - \inf_n\inf_{\Lambda^{(n)}}\frac{1}{n} S((\Lambda^{(n)}\otimes\id_{B^n})\rho_{AB}^{\otimes n}),$$ where $\Lambda^{(n)}$ is a CPTP map acting on states on $\cH_{A}^{\otimes n}$ and $d=\dim\cH_A$. If only product encodings of the form $\Lambda^{\otimes n}$ are allowed, the above expression reduces to 
\begin{align}
C_{dc}(\rho_{AB}) = \log d + S(\rho_B) - \inf_{\Lambda:\text{ CPTP}} S((\Lambda\otimes\id_B)\rho_{AB}),\label{eq:dense-coding-first-order}
\end{align}
which was derived independently by Winter in \cite{Win02}. 

In the present work, we restrict ourselves to the latter case. As a first step, we derive upper and lower bounds on the maximum number of bits of a classical message which can be transmitted through a single use of the channel, with a probability of error at most $\eps$, for a given $0<\eps <1$. We then discuss the i.i.d.~case under the assumption of a product encoding. In the following, we briefly summarize the coding procedure:

Let $\rho_{AB}$ be the noisy entangled state that Alice shares with Bob, the system $A$ being with Alice and $B$ being with Bob. Suppose Alice has a set of classical messages labelled by the elements of the set ${\cal M} \coloneqq  \{1,2 ,\ldots, M\}$, which she wishes to send to Bob through a noiseless quantum channel. The most general protocol for this consists of an encoding map by Alice on her system A, transmission of the encoded state through the noiseless channel to Bob, followed by a decoding operation by Bob on the joint state of the system that he receives and the system $B$.

The codewords are given by 
$$\varphi(m) = \left( {\cal E}^m_A \otimes \id_B \right)\rho_{AB} = \sigma^m_{AB}, \quad m \in {\cal M}.$$
Here, $\varphi$ denotes the encoding map for a code of size $M$, as defined in terms of the CPTP maps $ {\cal E}^m_A$ for $m \in {\cal M}$. Since the system $A$ is sent through a noiseless channel, the final state in Bob's posession (when Alice sends the message $m$) is $\sigma^m_{AB}$. Let Bob's decoding map be given by the POVM $Y= \{Y^m_{AB}\}_{m \in {\cal M}}$. The triple $\left( \varphi, Y, M\right)$ defines a code ${\cal C}$ of size $M$, with average probability of error given by
\begin{align*} 
p_e({\cal C}) \coloneqq  \frac{1}{M}\sum_{m=1}^M \left(1 - \tr\left(\sigma^m_{AB}Y^m_{AB}\right)\right).
\end{align*}
\begin{definition}
For $\eps >0$, the $\eps$-error one-shot dense coding capacity of the state $\rho_{AB}$ is defined as
\begin{align*}
C_{dc}^{(1), \eps}(\rho_{AB}) \coloneqq  \sup \{ \log M \mid\exists\text{ a code }{\cal C} = (\varphi, Y, M)\text{ such that }p_e({\cal C}) \le \eps\}.
\end{align*}
\end{definition}

\subsubsection{One-shot bounds for noisy dense coding}
We obtain the following bounds on the $\eps$-error one-shot dense coding capacity of a bipartite state $\rho_{AB}$:
\begin{theorem}\label{thm:dense-coding-one-shot}
Fix $\eps >0$. Let $c,\eta, \eta'>0$ with $\eta>\frac{c}{1+c}\eps$ and $\delta>0$. The one-shot $\eps$-error dense coding capacity of a state $\rho_{AB}\in\cD(\cH_{A}\otimes\cH_B)$ satisfies the following bounds: 
\begin{align*} 
	\begin{aligned}
		 C_{dc}^{(1), \eps}(\rho_{AB})&\geq\log d_A - \min_{\Lambda^{A\rightarrow A}} \oH_s^{\eps-\eta}(A|B)_\sigma  +  \log\frac{c}{1+c}+\log\left(\eta - \frac{c}{1+c}\eps\right)\\
C_{dc}^{(1), \eps}(\rho_{AB})&\le \log d_A - \min_{\Lambda^{A\rightarrow A}} \oH_s^{\eps+ \eta'}(A|B)_\sigma + \delta - \log \eta',
	\end{aligned}
\end{align*}
where $d_A = \dim\cH_A$ and $\sigma_{AB}\coloneqq  (\Lambda \otimes \id_B) \rho_{AB}$ for any CPTP map $\Lambda^{A\rightarrow A}$.
\end{theorem}
\begin{proof}[Proof of the upper bound in \Cref{thm:dense-coding-one-shot} (Converse)]
To establish the converse bound, it suffices to prove that for any code ${\cal C}$ of $M$ codewords with
\begin{align}\label{eq:dc-assumption}
\log M > \log d_A - \min_{\Lambda^{A\rightarrow A}} \oH_s^{\eps+\eta'}(A|B)_\sigma + \delta - \log \eta',
\end{align}
where $d_A = \dim\cH_A$, and $\sigma_{AB}\coloneqq  (\Lambda \otimes \id_B) \rho_{AB}$ for any CPTP map $\Lambda^{A\rightarrow A}$, we must have $p_e(\cC) >\eps$.

If a code ${\cal C}$ of size $M$ has codewords 
$ \sigma^m_{AB}= \left( {\cal E}^m_A \otimes \id_B \right) \rho_{AB}$, and measurement operators $Y^m_{AB}$ for $m \in {\cal M}$, then for any $\gamma >0$ we have
\begin{align}
p_e({\cal C}) & =  1 - \frac{1}{M}\sum_{m=1}^M \tr \left[ Y^m_{AB} \left( \sigma^m_{AB} - 2^{- \gamma}( \one_A \otimes \rho_B) \right)\right] - \frac{2^{- \gamma}}{M}\sum_{m=1}^M \tr \left[ Y^m_{AB} (  \one_A \otimes \rho_B)\right]\notag\\
&\ge 1 - \frac{1}{M}\sum_{m=1}^M \tr \left[ \{\sigma^m_{AB} \ge 2^{- \gamma} \one_A \otimes \rho_B\} \left( \sigma^m_{AB} - 2^{- \gamma} \one_A \otimes \rho_B\right) \right] -  \frac{2^{- \gamma}}{M} \tr \one_A\notag\\
&\ge 1 - \max_{m \in {\cal M}} \tr \left[ \{\sigma^m_{AB} \ge 2^{- \gamma} \one_A \otimes \rho_B\} \left( \sigma^m_{AB} - 2^{- \gamma} \one_A \otimes \rho_B\right) \right] -  \frac{2^{\log d_A - \gamma}}{M},
\label{eq_bd1}
\end{align}
where we used \Cref{lem:tr-projector} and the fact that $\sum_{m \in {\cal M}} Y^m_{AB} = \one_{AB}$ in the first inequality and $\tr \one_A = d_A$ in the second inequality.

For some arbitrary $\eta', \delta >0$, choose 
\begin{align}
	\gamma &= \min_{\Lambda^{A\rightarrow A}} \oH_s^{\eps + \eta'}(A|B)_\sigma - \delta\le  \oH_s^{\eps + \eta'}(A|B)_{\sigma^m} - \delta < \oH_s^{\eps + \eta'}(A|B)_{\sigma^m}\label{eq:dc-gamma}
\end{align} 
where $\sigma_{AB}\coloneqq  (\Lambda \otimes \id_B) \rho_{AB}$ and $\sigma^m_{AB}$ is any codeword. The first inequality holds because $ \sigma^m_{AB} =   \left( {\cal E}^m_A \otimes \id_B \right) \rho_{AB}$ with ${\cal E}^m_A \otimes \id_B$ being a CPTP map, and hence 
$$ \oH_s^{\eps + \eta'}(A|B)_{\sigma^m} \ge  \min_{\Lambda^{A\rightarrow A}} \oH_s^{\eps + \eta'}(A|B)_\sigma.$$
The choice of $\gamma$ in \eqref{eq:dc-gamma} yields 
\begin{align}\label{eq:dc-tr-upper-bound}
\tr \left[ \{\sigma^m_{AB} \ge  2^{- \gamma} \one_A \otimes \rho_B\} \left( \sigma^m_{AB} - 2^{- \gamma} \one_A \otimes \rho_B\right) \right] < 1 - (\eps + \eta'),
\end{align}
and by \eqref{eq:dc-assumption}, \eqref{eq_bd1} and \eqref{eq:dc-tr-upper-bound}, we obtain
\begin{align*}
p_e(\cC) &> \eps+\eta' - 2^{\log d - \gamma - \log M}\\
&> \eps + \eta' - 2^{\log\eta'}\\
&=\eps.\qedhere
\end{align*}
\end{proof}

\begin{proof}[Proof of the lower bound in \Cref{thm:dense-coding-one-shot} (Achievability)]
To establish the lower bound, we use \Cref{lemHN}. For our purpose, we define ${\cal X}\coloneqq  \{(p,q)\mid 0\leq p,q\leq d_{A}-1\}$ (such that $|\mathcal{X}|=d_{A}^2$) and consider the classical-quantum (c-q) channel 
\begin{align}\label{cq} 
W : x \in {\cal X} \mapsto W_x \equiv W^x_{AB}\coloneqq  ({\cal U}_x \otimes \id_B) \sigma_{AB}.
\end{align}
Here, $\sigma_{AB}\coloneqq  (\Lambda \otimes \id_B) \rho_{AB}$ 
where $\Lambda^{A\rightarrow A}$ denotes the minimizing CPTP map in 
$$ \min_{\Lambda^{A\rightarrow A}} \oH_s^{\eps + \eta}(A|B)_\sigma,$$ 
and $\rho_{AB}$ is the initial entangled state shared between Alice and Bob. For $x=(p,q) \in {\cal X}$, the unitary encoding ${\cal U}_x$ is defined by $\cU_x(\rho)\coloneqq U_{p,q}\rho U_{p,q}^\dagger$. The Weyl operators \cite{Hol11} $U_{p,q}$ with $p,q \in \{0,1, \ldots, (d_{A}-1)\}$ are defined by $U_{p,q} = X^q Z^p$ where for any orthonormal basis $\{|j\rangle : j \in \{0,1,\ldots, (d-1)\}$ in a $d$-dimensional Hilbert space $\cH$ the action of $X$ and $Z$ is given by $X |j \rangle = |(j + 1)\bmod{d})\rangle$ and $Z |j \rangle = e^{\frac{2\pi i j}{d}}|j \rangle$. Hence, $$U_{p,q} =  e^{\frac{2\pi i p j}{d}} |(j + q)\bmod{d})\rangle.$$ Furthermore, it is known that 
\begin{align}\label{weyl} 
\sum_{x \in {\cal X}} {\cal U}_x (\omega) \equiv \sum_{p,q} U_{p,q} \omega U_{p,q}^\dagger = d \, \one
\end{align}
for any 
$\omega \in {\cal D}(\cH)$, with $d = \dim(\cH)$. 

Now let $\{p(x)\}_{x \in {\cal X}}$ denote the uniform distribution on ${\cal X}$, i.e.~$p(x) = 1/d_{A}^2$ for all $x \in {\cal X}$. For the ensemble $\{p(x), W^x_{AB}\}$, where the states $W^x_{AB}$ are defined by \eqref{cq}, we have by \eqref{weyl} that
\begin{align*}
\oW_{AB} &\coloneqq  \sum_{x \in {\cal X}} p(x) W^x_{AB} = \frac{1}{d_{A}^2} \sum_{x \in {\cal X}} 
( {\cal U}_x \otimes \id_B)\sigma_{AB}\\
&= \frac{\one_{A}}{d_{A}} \otimes \rho_B,
\end{align*}
with $\rho_B = \sigma_B = \tr_{A} \sigma_{AB}$. Setting $\alpha \coloneqq  \sum_{x \in {\cal X}} p(x) \tr \left[ W^x_{AB} \{ W^x_{AB} \ge 2^\gamma \oW_{AB}\}\right]$,
we obtain
\begin{align*}
\alpha & \ge \frac{1}{d_{A}^2} \sum_{x \in {\cal X}} \tr \left[ \{W^x_{AB} \ge 2^\gamma \oW_{AB}\} (W^x_{AB} -  2^\gamma \oW_{AB})\right]\nonumber\\
& = \frac{1}{d_{A}^2} \sum_{x \in {\cal X}} \tr\left[  \{W^x_{AB} \ge 2^{- ( \log d_{A} - \gamma)}(\one_{A} \otimes \rho_B)\} (W^x_{AB} - 2^{- ( \log d_{A} - \gamma)}(\one_{A} \otimes \rho_B) ) \right]\nonumber\\
&= \frac{1}{d_{A}^2} \sum_{x \in {\cal X}} \tr\left[  \{\sigma_{AB} \ge 2^{- (\log d_{A} - \gamma)}  (\one_{A} \otimes \rho_B)\} (\sigma_{AB} - 2^{- (\log d_{A} - \gamma)}  (\one_{A} \otimes \rho_B)) \right].\nonumber\\
\end{align*}
In the above we have made use of the fact that 
\begin{align*}
\{W^x_{AB} \ge 2^\gamma \oW_{AB}\}& =\left\lbrace({\cal U}_x \otimes \id_B)\sigma_{AB}\ \ge 2^\gamma ({\cal U}_x \otimes \id_B) \left(\frac{\one_{A}}{d_{A}} \otimes \rho_B\right)\right\rbrace\\
& = \{\sigma_{AB} \ge 2^{- (\log d_{A} - \gamma)}  (\one_{A} \otimes \rho_B)\},
\end{align*}
and the fact that the trace remains invariant under a unitary transformation.

Choose 
\begin{align}
\gamma = \log d_{A} -  \oH_s^{\eps- \eta}(A|B)_\sigma =  \log d_{A} - \min_{\Lambda^{A\rightarrow A}} \oH_s^{\eps - \eta}(A|B)_\sigma,
\label{id2}
\end{align}
for some arbitrary $\eta >0$. The second identity in \eqref{id2} follows from the fact that $\sigma_{AB}$ is the minimizing state in \eqref{cq}. For this choice of $\gamma$, $\alpha \ge 1 - (\eps - \eta)$. Using this in \eqref{eqHN} of \Cref{lemHN} we infer that, for the c-q channel $W$ defined by \eqref{cq}, there exists a code ${\cal C}$ of size $M$ such that
\begin{align*}
p_e({\cal C}) \le  (1 +c) (\eps - \eta)  + (2+c+c^{-1}) 2^{-\gamma} M.
\end{align*}
Thus, $p_e({\cal C}) \le \eps$ for the choice
$$(1 +c) (\eps - \eta)  + (2+c+c^{-1}) 2^{-\gamma} M =  \eps,$$
and hence,
$$ \frac{(1+c)^2}{c} 2^{-\gamma}M  = (1+c) \eta - c\eps.$$
We obtain 
\begin{align*}
\log M &= \gamma + \log\frac{c}{1+c}+\log\left(\eta - \frac{c}{1+c}\eps\right)\\
 &=  \log d_{A} -  \min_{\Lambda^{A\rightarrow A}} \oH_s^{\eps-\eta}(A|B)_\sigma +  \log\frac{c}{1+c}+\log\left(\eta - \frac{c}{1+c}\eps\right).
\end{align*}
Note that the argument of the logarithm in the right-most term yields the condition $\eta>\frac{c}{1+c}\eps$. The above implies that for the dense coding protocol employing the entangled 
state $\rho_{AB}$, we have the following bound:
$$ C_{dc}^{(1), \eps}(\rho_{AB})  \ge 
\log d_{A} - \min_{\Lambda^{A\rightarrow A}} \oH_s^{\eps-\eta}(A|B)_\sigma +  \log\frac{c}{1+c}+\log\left(\eta - \frac{c}{1+c}\eps\right)$$
This inference follows from the fact that in the dense coding protocol, after her encoding, Alice can post-process her system $A$ by ${\cal U}_x \circ \Lambda$ (where $\Lambda$ is some CPTP map) before sending it through the noiseless channel to Bob.
\end{proof}

\subsubsection{Second order asymptotics of noisy dense coding}
In order to obtain the second order asymptotics of the $\eps$-error one-shot dense coding capacity $C_{dc}^{(1),\eps}$, we restrict our analysis to the case where Alice is only allowed to use product encodings. More precisely, given an initial state $\rho_{AB}$, Alice chooses an encoding map $\Lambda^{A\rightarrow A}$. Her encoding map on $\rho_{AB}^{\otimes n}$ results in the state $((\Lambda\otimes\id_B)(\rho_{AB}))^{\otimes n}$, which is transmitted to Bob through the noiseless channel. This is in contrast to the general setting, where Alice selects an encoding map $\Lambda^{(n)}\colon A^n\rightarrow A^n$ for each $n$ and transmits the state $(\Lambda^{(n)}\otimes\id_{B^n})(\rho_{AB}^{\otimes n})$ \cite{HHH+01}.

Using the results from Section \ref{sec:2nd-order-asymptotics-D_s}, we obtain the following second order expansion for $C_{dc}^{n,\eps}(\rho_{AB})$, which is related to the one-shot dense coding capacity via \Cref{def:blocklength-quantity}.
To state the theorem, recall from \eqref{eq:dense-coding-first-order} that the first order asymptotics of noisy dense coding is determined by the following expression (note that the infimum can be replaced by a minimum since the set of all quantum channels is compact):
\begin{align*}
\log d + S(\rho_B) - \min_{\Lambda^{A\to A}} S((\Lambda\otimes\id_B)\rho_{AB}).
\end{align*}
To analyze the second order asymptotics, we define the set $\cF(\rho_{AB})$ of all CPTP maps $\Lambda^{A\to A}$ achieving the minimum in the last term of the above expression:
\begin{align}
\cF(\rho_{AB}) \coloneqq \argmin_{\Lambda^{A\to A}} S((\Lambda\otimes\id_B)\rho_{AB}).\label{eq:F-set}
\end{align}
We are now ready to prove the main result of this section.
\begin{theorem}\label{thm:dc-second-order-expansion}
	Consider an arbitrary bipartite state $\rho_{AB}\in\cD(\cH_{AB})$. 
	Then for $\eps\in(0,1/2]$ we have the following closed second order asymptotic expansion:
	\begin{align*}
	C_{dc}^{n,\eps}(\rho_{AB}) &= n\left(\log d_A+S(\rho_B)-\min_{\Lambda^{A\rightarrow A}}S((\Lambda\otimes\id)(\rho_{AB}))\right)\\
	&\quad {}+ \sqrt{n}\,\max_{\Lambda\in\cF(\rho_{AB})}\sk((\Lambda\otimes\id)(\rho_{AB})\|\one_{A}\otimes\rho_B)\Phi^{-1}(\eps)+\cO(\log n),
	\end{align*}
	where the maximization in the second term is over the set $\cF(\rho_{AB})$ of first-order-achieving CPTP maps defined in \eqref{eq:F-set}.
	For $\eps\in (1/2,1)$, we have the second order asymptotic bounds
\begin{align*}
	C_{dc}^{n,\eps}(\rho_{AB}) &\geq n\left(\log d_A+S(\rho_B)-\min_{\Lambda^{A\rightarrow A}}S((\Lambda\otimes\id)(\rho_{AB}))\right)\\
	&\quad {}+ \sqrt{n}\,\max_{\Lambda\in\cF(\rho_{AB})}\sk((\Lambda\otimes\id)(\rho_{AB})\|\one_{A}\otimes\rho_B)\Phi^{-1}(\eps)+\cO(\log n)\\
	C_{dc}^{n,\eps}(\rho_{AB}) &\leq n\left(\log d_A+S(\rho_B)-\min_{\Lambda^{A\rightarrow A}}S((\Lambda\otimes\id)(\rho_{AB}))\right)\\
	&\quad {}+ \sqrt{n}\,\max_{\Lambda^{A\rightarrow A}}\sk((\Lambda\otimes\id)(\rho_{AB})\|\one_{A}\otimes\rho_B)\Phi^{-1}(\eps)+\cO(\log n).
\end{align*}
\end{theorem}
\begin{proof}
	For all $\eps\in (0,1)$ we have the following one-shot bounds on the dense coding capacity $C_{dc}^{n,\eps}(\rho_{AB})$ by \Cref{thm:dense-coding-one-shot}: for $c>0$ and $\eta>\frac{c}{1+c}\eps$,
	\begin{align}
		C_{dc}^{n,\eps}(\rho_{AB}) &\geq n\log d_A - \min_{\Lambda^{A\rightarrow A}} \oH_s^{\eps-\eta}(A^n|B^n)_{\sigma^n} + \log\frac{c}{1+c}+\log\left(\eta - \frac{c}{1+c}\eps\right),\label{eq:dense-coding-one-shot-n-achieve}
		\intertext{and for $\delta,\eta'>0$,}
		C_{dc}^{n,\eps}(\rho_{AB}) &\leq n\log d_A - \min_{\Lambda^{A\rightarrow A}} \oH_s^{\eps+\eta'}(A^n|B^n)_{\sigma^n}+\delta - \log\eta'.\label{eq:dense-coding-one-shot-n-converse}
	\end{align}
	In both \eqref{eq:dense-coding-one-shot-n-achieve} and \eqref{eq:dense-coding-one-shot-n-converse} we set $\sigma^n\equiv\sigma_{AB}^{\otimes n}$ where $\sigma_{AB}\coloneqq (\Lambda\otimes\id_B)(\rho_{AB})$ . 

	We would like to set $\eta=\eta'=\frac{1}{\sqrt{n}}$ in the above bounds, just as in the proof of \Cref{thm:D_s-asymptotics}. To this end, let us choose $c=\frac{1}{n}$ for the constant from \Cref{lemHN} in \eqref{eq:dense-coding-one-shot-n-achieve}, which then guarantees the existence of a code $\cC_n$ of size $M_n$ such that
	$$\log M_{n}=\gamma +\log\left(\frac{1+n}{n}\eta-\frac{1}{n}\eps\right) - \log n + 2\log(1+n),$$
	subject to the condition 
	\begin{align*}
		\eta>\frac{1}{1+n}\eps. 
	\end{align*}
	This in turn implies a lower bound on the $\eps$-error capacity $C_{dc}(n,\eps)$ given by
	\begin{align}
		\begin{aligned}
		C_{dc}^{n,\eps}(\rho_{AB})&\geq n\log d_A - \min_{\Lambda^{A\rightarrow A}}\oH_s^{\eps-\eta}(A^n|B^n)_{\sigma^n} \\
											&\quad{} +\log\left(\frac{1+n}{n}\eta-\frac{1}{n}\eps\right) - \log n + 2\log(1+n).
		\end{aligned}\label{eq:blocklength-achieve-bound}
	\end{align}
	We observe that for $0<\eps<1$ we have
	$$\frac{1}{\sqrt{n}} > \frac{1}{1+n} > \frac{1}{1+n}\eps,$$ and thus, the choice $\eta=\frac{1}{\sqrt{n}}$ is valid in \eqref{eq:blocklength-achieve-bound}. 
	
	We can therefore choose $\eta=\eta'=\frac{1}{\sqrt{n}}$ and apply \Cref{thm:D_s-asymptotics} to $\oH_s^{\eps\pm\eta}(A^n|B^n)_{\sigma^n}$ in \eqref{eq:dense-coding-one-shot-n-converse} and \eqref{eq:blocklength-achieve-bound} in the following way:
	\begin{align}
		-\oH_s^{\eps\pm\eta}(A^n|B^n)_{\sigma^n} &= \underline{D}_s^{\eps\pm\eta}(\sigma_{AB}^n\|\one_{A^n}\otimes\rho_B^n)\notag\\
		&= n D(\sigma_{AB}\|\one_{A}\otimes\rho_B) + \sqrt{n}\,\sk(\sigma_{AB}\|\one_{A}\otimes\rho_B)\Phi^{-1}(\eps\pm\eta)+\cO(\log n)\notag\\
		&=  -nS(\sigma_{AB}) + nS(\rho_B) + \sqrt{n}\,\sk(\sigma_{AB}\|\one_{A}\otimes\rho_B)\Phi^{-1}(\eps)+\cO(\log n),\label{eq:H_s-expansion}
	\end{align}
	where the last equality follows from \Cref{lem:phi-trick} and the fact that $\tr_{A}\sigma_{AB}=\rho_B$.
	Finally, we take the maximum over all CPTP maps $\Lambda^{A\to A}$ in \eqref{eq:H_s-expansion}:
	\begin{multline}
	\max_{\Lambda^{A\rightarrow A}} \left\lbrace-\oH_s^{\eps\pm\eta}(A^n|B^n)_{\sigma^n}\right\rbrace = nS(\rho_B) + \max_{\Lambda^{A\rightarrow A}} \left\lbrace -nS(\sigma_{AB}) +  \sqrt{n}\,\sk(\sigma_{AB}\|\one_{A}\otimes\rho_B)\Phi^{-1}(\eps) \right\rbrace\\
	 +\cO(\log n),\label{eq:max-Hs-second-order}
	\end{multline}
	where $\sigma_{AB} = (\Lambda\otimes \id_B)(\rho_{AB})$ as before.
	
	We now use the expansion \eqref{eq:max-Hs-second-order} in the upper and lower bounds on $C_{dc}^{n,\eps}(\rho_{AB})$ in \eqref{eq:dense-coding-one-shot-n-converse} and \eqref{eq:blocklength-achieve-bound}, respectively.
	For \eqref{eq:blocklength-achieve-bound}, we bound the maximum on the right-hand side of \eqref{eq:max-Hs-second-order} from below by picking a CPTP map $\Lambda^*\in \argmin_{\Lambda\in\cF(\rho_{AB})}\sk(\sigma_{AB}\|\one_{A}\otimes\rho_B)$.
	In other words, among all CPTP maps optimizing the first order term for $C_{dc}^{n,\eps}(\rho_{AB})$, we pick one that additionally maximizes the second order term determined by the quantity $\sk(\sigma_{AB}\|\one_{A}\otimes\rho_B)$.
	This yields the following lower bound on $C_{dc}^{n,\eps}(\rho_{AB})$ valid for all $\eps\in(0,1)$:
	\begin{multline}
	C_{dc}^{n,\eps}(\rho_{AB}) \geq n \left(\log d_A + S(\rho_B) - \min_{\Lambda^{A\rightarrow A}}S(\sigma_{AB}) \right) + \sqrt{n} \max_{\Lambda\in\cF(\rho_{AB})} \sk(\sigma_{AB}\|\one_{A}\otimes\rho_B)\Phi^{-1}(\eps) \\ +\cO(\log n).\label{eq:second-order-lower-bound}
	\end{multline}
	
	To use \eqref{eq:max-Hs-second-order} in the upper bound \eqref{eq:dense-coding-one-shot-n-converse} on $C_{dc}^{n,\eps}(\rho_{AB})$, we first use the inequality $\max\lbrace x + y\rbrace \leq \max x + \max y$ in \eqref{eq:max-Hs-second-order} to obtain
	\begin{multline}
	C_{dc}^{n,\eps}(\rho_{AB}) \leq n \left(\log d_A + S(\rho_B) - \min_{\Lambda^{A\rightarrow A}}S(\sigma_{AB}) \right) + \sqrt{n} \max_{\Lambda^{A\rightarrow A}} \sk(\sigma_{AB}\|\one_{A}\otimes\rho_B)\Phi^{-1}(\eps) \\ +\cO(\log n).\label{eq:second-order-upper-bound}
	\end{multline}
	This bound is valid for all $\eps\in(0,1)$, but it does not match the second order expansion in \eqref{eq:second-order-lower-bound}, as the maximization in the second order term is over arbitrary CPTP maps.
	However, recall that $\Phi^{-1}(\eps)\leq 0$ for $\eps\leq 1/2$.
	Hence, in this case we can restrict the maximization in \eqref{eq:second-order-upper-bound} to the smaller set $\cF(\rho_{AB})$, giving $$\max_{\Lambda^{A\to A}} \sk(\sigma_{AB}\|\one_{A}\otimes\rho_B)\Phi^{-1}(\eps) \leq \max_{\Lambda\in\cF(\rho_{AB})} \sk(\sigma_{AB}\|\one_{A}\otimes\rho_B)\Phi^{-1}(\eps).$$ 
	Substituting this in \eqref{eq:second-order-upper-bound} leads to the following upper bound on $C_{dc}^{n,\eps}(\rho_{AB})$ valid for $\eps\in(0,1/2]$,
	\begin{multline*}
	C_{dc}^{n,\eps}(\rho_{AB}) \leq n \left(\log d_A + S(\rho_B) - \min_{\Lambda^{A\rightarrow A}}S(\sigma_{AB}) \right) + \sqrt{n} \max_{\Lambda\in\cF(\rho_{AB})} \sk(\sigma_{AB}\|\one_{A}\otimes\rho_B)\Phi^{-1}(\eps) \\ +\cO(\log n),
	\end{multline*}
	which finishes the proof.
\end{proof}
\Cref{thm:dc-second-order-expansion} immediately implies the first order asymptotics of noisy dense coding as proved in \cite{Win02}:
\begin{corollary}
	In the asymptotic limit, the noisy dense coding capacity of a bipartite state $\rho_{AB}\in\cD(\cH_{A}\otimes\cH_B)$ reads
	$$C_{dc}(\rho_{AB}) = \log d_A + S(\rho_B) - \inf_{\Lambda:\text{ CPTP}}S((\Lambda\otimes\id_B)\rho_{AB}),$$
	where $d_A=\dim\cH_A$.
\end{corollary}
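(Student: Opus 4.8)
The plan is to extract the first-order asymptotics directly from the second-order expansion established in \thref{thm:dc-second-order-expansion}. Recall that the asymptotic dense coding capacity (under the product-encoding restriction adopted in this section, which is also the setting of \cite{Win02}) is recovered from the one-shot quantity through
$$C_{dc}(\rho_{AB}) = \lim_{\eps\to 0}\lim_{n\to\infty}\frac{1}{n}C_{dc}^{n,\eps}(\rho_{AB}),$$
where $C_{dc}^{n,\eps}(\rho_{AB}) = C_{dc}^{(1), \eps}(\rho_{AB}^{\otimes n})$ by \thref{def:blocklength-quantity}. So the first step is simply to divide the expansion of \thref{thm:dc-second-order-expansion} by $n$:
$$\frac{1}{n}C_{dc}^{n,\eps}(\rho_{AB}) = \log d_A + S(\rho_B) - \min_{\Lambda^{A\rightarrow A}}S((\Lambda\otimes\id)(\rho_{AB})) + \frac{1}{\sqrt{n}}\,\max_{\Lambda^{A\rightarrow A}}\sk((\Lambda\otimes\id)(\rho_{AB})\|\one_{A}\otimes\rho_B)\,\Phi^{-1}(\eps) + \frac{\cO(\log n)}{n}.$$

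Next I would let $n\to\infty$ for fixed $\eps\in(0,1)$. For this one only needs to observe that the second-order coefficient is finite: $\Phi^{-1}(\eps)$ is a finite constant, and $\sk((\Lambda\otimes\id)(\rho_{AB})\|\one_A\otimes\rho_B) = \sqrt{V((\Lambda\otimes\id)(\rho_{AB})\|\one_A\otimes\rho_B)}$ is bounded on the compact set of CPTP maps $\Lambda^{A\to A}$ (in finite dimensions the information variance is finite and depends continuously on $\Lambda$). Hence the $\frac{1}{\sqrt{n}}$-term and the $\cO(\log n)/n$-term both vanish as $n\to\infty$, leaving
$$\lim_{n\to\infty}\frac{1}{n}C_{dc}^{n,\eps}(\rho_{AB}) = \log d_A + S(\rho_B) - \min_{\Lambda^{A\rightarrow A}}S((\Lambda\otimes\id)(\rho_{AB})).$$
This is independent of $\eps$, so taking $\eps\to 0$ leaves it unchanged.

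The only remaining point is to identify the minimum with an infimum: since $\cH_A$ is finite-dimensional, the set of CPTP maps $\Lambda^{A\to A}$ is compact (via the Choi isomorphism it is a closed, bounded set of positive semidefinite operators subject to a linear partial-trace constraint), and $\Lambda\mapsto S((\Lambda\otimes\id_B)(\rho_{AB}))$ is continuous, so $\min_{\Lambda}S((\Lambda\otimes\id)(\rho_{AB})) = \inf_{\Lambda:\,\text{CPTP}}S((\Lambda\otimes\id_B)\rho_{AB})$. Substituting this yields exactly Winter's formula. There is essentially no obstacle in this argument; the one thing to double-check carefully is the finiteness/boundedness of the dispersion term, so that it is genuinely of lower order after dividing by $n$, but this is immediate in the finite-dimensional setting we work in.
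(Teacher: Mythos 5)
Your argument is exactly the one the paper intends: the corollary is stated as an immediate consequence of \thref{thm:dc-second-order-expansion}, obtained by dividing the expansion by $n$ and letting $n\to\infty$ and $\eps\to 0$, and your additional remarks on the boundedness of the dispersion term and the compactness of the set of CPTP maps merely make explicit what the paper leaves implicit. The proposal is correct and matches the paper's approach.
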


\subsection{Entanglement concentration}\label{sec:ent-concentration}
Entanglement concentration is the protocol in which two parties, Alice and Bob, share a partially entangled state $|\psi_{AB}\rangle \in \cH_A \otimes \cH_B$, and they wish to convert it into a maximally entangled state $|\Phi^+_M\rangle$ (of Schmidt rank $M$) by local operations and classical communication (LOCC) alone. If $\Lambda$ denotes the LOCC map used by Alice and Bob, then the fidelity of the protocol is given by
\begin{align*}
F(\Lambda(\psi_{AB}), \Phi_M^+ ) \coloneqq  \langle \Phi_M^+ | \Lambda\left(\psi_{AB}\right) |\Phi_M^+\rangle.
\end{align*}
Note that the above definition of fidelity (which is also used in Sections \ref{sec:ent-dilution} and \ref{sec:irreversibility}) is the square of the one defined previously. For given orthonormal bases $\{|\tk_A\rangle\}_{k=1}^{d_A}$ and  $\{|\tk_B\rangle\}_{k=1}^{d_B}$ in Hilbert spaces $\cH_A$ and $\cH_B$ of dimensions $d_A$ and $d_B$ respectively, we define the canonical maximally entangled state of Schmidt rank $M = \min\{d_A, d_B\}$ as 
\begin{align}\label{eq:mes}
|\Phi^+_M\rangle = \frac{1}{\sqrt{M}} \sum_{k=1}^M |\tk_A\rangle |\tk_B\rangle.
\end{align}
In fact, in the following we consider $\cH_A \simeq \cH_B$, for simplicity, so that $d_A = d_B= d$ (say).

\begin{definition}
For any $\eps > 0$ the one-shot $\eps$-error distillable entanglement of the
pure state $\psi_{AB}$ is defined as
\begin{align*}
E_D^{(1), \eps}(\psi_{AB}) \coloneqq  \sup \{ \log M\mid\exists\text{ an LOCC map }\Lambda\text{ such that }F(\Lambda(\psi_{AB}), \Phi_M^+ ) \ge 1- \eps\}. 
\end{align*}
\end{definition}
Bennett et al.~\cite{BBPS96} established that in the asymptotic limit requiring perfect concentration (i.e.~$\eps\rightarrow 0$) yields the optimal rate of entanglement concentration to be the entropy of entanglement, i.e.~the von Neumann entropy $S(\rho_A)$ of the reduced state $\rho_A=\tr_B\psi_{AB}$:
$$\lim_{\eps\rightarrow 0}\lim_{n\rightarrow \infty}\frac{1}{n}E_D^{n,\eps}(\psi_{AB})=S(\rho_A)$$ 
In this section we determine the asymptotic behaviour to second order of the $\eps$-error distillable entanglement.
In the converse proof of \Cref{thm_ec} (as well as \Cref{thm_entdil} in Section \ref{sec:ent-dilution}) we employ the following result, which follows directly from Lemma 2 in \cite{VJN00}:
\begin{lemma}\label{lem:LOCC-to-pure}
Let $\psi_{AB}$ and $\phi_{AB}$ be bipartite pure states, and let $\Lambda$ be an arbitrary LOCC operation. Then there exists an LOCC operation $\bar{\Lambda}$ such that $\bar{\Lambda}(\phi_{AB})$ is pure, and
\begin{align*}
F(\psi_{AB},\Lambda(\phi_{AB})) \leq F(\psi_{AB},\bar{\Lambda}(\phi_{AB})),
\end{align*}
where we use the notation $F(\rho,\sigma)\coloneqq \|\sqrt{\rho}\sqrt{\sigma}\|_1^2$.
\end{lemma}

\subsubsection{One-shot bounds for entanglement concentration}
We obtain the following bounds on the distillable entanglement:
\begin{theorem}\label{thm_ec}
Fix $\eps >0$ and let $\delta, \eta>0$. Then the one-shot $\eps$-error distillable entanglement of a
pure state $\psi_{AB}$ satisfies the bounds
\begin{align}\label{eq_thm}
\uH_s^{\eps-\eta}(\rho_A) + \log \eta + \log (1-\eps) - \Delta
 \le E_D^{(1), \eps}(\psi_{AB})  \le - \oH_s^{\eps + \eta}(A|B)_\psi + \delta - \log \eta,
\end{align}
where $\Delta$ is a number included to ensure that the left-hand side is the logarithm of an integer.
\end{theorem}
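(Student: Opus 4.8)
\emph{Proof plan.} We treat the two inequalities separately.

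\textbf{Converse (upper bound on $E_D^{(1),\eps}$).} The plan is to feed the rank-one projector $\Phi_M^+=|\Phi_M^+\X\Phi_M^+|$ into the variational definition of $\uD_s$, and then invoke the LOCC monotonicity of \thref{lem:lo}. Fix any admissible code, i.e.\ an LOCC map $\Lambda$ with $F(\Lambda(\psi_{AB}),\Phi_M^+)=\langle\Phi_M^+|\sigma_{AB}|\Phi_M^+\rangle\ge 1-\eps$, where $\sigma_{AB}\coloneqq\Lambda(\psi_{AB})$ and $\sigma_B\coloneqq\tr_A\sigma_{AB}$. Since $0\le\Phi_M^+\le\one$, \thref{lem:tr-projector} gives for every $\gamma\in\R$ that $\tr(\sigma_{AB}-2^\gamma\one_A\otimes\sigma_B)_+\ge\tr[\Phi_M^+(\sigma_{AB}-2^\gamma\one_A\otimes\sigma_B)]\ge(1-\eps)-2^\gamma\langle\Phi_M^+|\one_A\otimes\sigma_B|\Phi_M^+\rangle$. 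Using $\tr_A\Phi_M^+=\tfrac1M\Pi$ for a rank-$M$ projector $\Pi$ on $\cH_B$, the last inner product equals $\tfrac1M\tr[\Pi\sigma_B]\le\tfrac1M$, so the bound becomes $\ge(1-\eps)-2^\gamma/M$. Choosing $\gamma=\log M+\log\eta$ makes this $\ge 1-(\eps+\eta)$, so $\gamma$ is feasible for $\uD_s^{\eps+\eta}(\sigma_{AB}\|\one_A\otimes\sigma_B)$, whence $\log M\le\uD_s^{\eps+\eta}(\sigma_{AB}\|\one_A\otimes\sigma_B)-\log\eta=-\oH_s^{\eps+\eta}(A|B)_\sigma-\log\eta$. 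By \thref{lem:lo} we have $\oH_s^{\eps+\eta}(A|B)_\psi\le\oH_s^{\eps+\eta}(A|B)_\sigma$, hence $\log M\le-\oH_s^{\eps+\eta}(A|B)_\psi-\log\eta\le-\oH_s^{\eps+\eta}(A|B)_\psi+\delta-\log\eta$; taking the supremum over admissible codes gives the claimed upper bound.

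\textbf{Achievability (lower bound on $E_D^{(1),\eps}$).} Set $\gamma_0\coloneqq\oD_s^{\eps-\eta}(\rho_A\|\one)=-\uH_s^{\eps-\eta}(\rho_A)$; by \thref{lem:trace-quantity}(iv) the infimum is attained, so $\tr(\rho_A-2^{\gamma_0}\one)_+=\eps-\eta$. Writing $\rho_A=\sumi_i\lambda_i|i\X i|$, Alice applies the local filtering instrument with Kraus operators $M_0\coloneqq\sumi_i\min(1,\sqrt{2^{\gamma_0}/\lambda_i})\,|i\X i|$ and $M_1\coloneqq\sqrt{\one-M_0^2}$, and announces the outcome to Bob. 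Outcome $0$ occurs with probability $p_0=\sumi_i\min(\lambda_i,2^{\gamma_0})=1-(\eps-\eta)$, and the post-measurement state $|\psi'_{AB}\rangle$ is pure with squared Schmidt coefficients $\min(\lambda_i,2^{\gamma_0})/p_0\le 2^{\gamma_0}/p_0$ and Schmidt rank $\operatorname{rank}\rho_A$. Putting $M\coloneqq\lfloor p_0\,2^{-\gamma_0}\rfloor$, all these coefficients are $\le 1/M$ and $M\le\operatorname{rank}\rho_A$ (both from $p_0\le 2^{\gamma_0}\operatorname{rank}\rho_A$ and $M\le p_0 2^{-\gamma_0}$), so by Nielsen's theorem $|\psi'_{AB}\rangle$ can be converted to $|\Phi_M^+\rangle$ by LOCC; on outcome $1$ the parties output a fixed state. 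The resulting LOCC map $\Lambda$ satisfies $F(\Lambda(\psi_{AB}),\Phi_M^+)\ge p_0=1-\eps+\eta>1-\eps$, hence $E_D^{(1),\eps}(\psi_{AB})\ge\log M$. Since $1-\eps+\eta\ge\eta(1-\eps)$ for $0<\eps<1$, $\eta>0$, this gives $\log M\ge\uH_s^{\eps-\eta}(\rho_A)+\log\eta+\log(1-\eps)-\Delta$, where $\Delta\ge 0$ is precisely the correction making the right-hand side equal to $\log$ of an integer.

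\textbf{Main obstacle.} The routine parts are the converse (a direct substitution into \thref{lem:tr-projector} plus \thref{lem:lo}) and the integrality bookkeeping absorbed into $\Delta$. The genuinely delicate step is the achievability conversion: one must check that the flattened state $|\psi'_{AB}\rangle$ majorizes $|\Phi_M^+\rangle$ so that Nielsen's theorem applies, i.e.\ that the choice $M=\lfloor p_0 2^{-\gamma_0}\rfloor$ is simultaneously compatible with the uniform bound on the Schmidt coefficients and with the available Schmidt rank — this is where the specific smoothing level $\gamma_0=\oD_s^{\eps-\eta}(\rho_A\|\one)$ and the identity $p_0=1-(\eps-\eta)$ are used.
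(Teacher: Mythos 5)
Your proof is correct. The converse is essentially the paper's argument: both insert the projector $\Phi_M^+$ as a test operator into the trace quantity via \thref{lem:tr-projector}, use $\tr_A \Phi_M^+ \le \one_B/M$ to control the cross term, and transfer the resulting bound from $\oH_s^{\eps+\eta}(A|B)_\sigma$ to $\oH_s^{\eps+\eta}(A|B)_\psi$ via the Lo--Popescu monotonicity of \thref{lem:lo}; you merely run it in the direct rather than the contrapositive direction, which is arguably cleaner. The achievability, however, follows a genuinely different protocol. The paper has Alice perform the projective measurement $\lbrace Q,\oQ\rbrace$ with $Q=\lbrace\rho_A<2^{-\gamma}\one_A\rbrace$ and abort on outcome $\oQ$; the threshold is deliberately placed a factor $\eta$ above the level fixed by $\gamma'=\uH_s^{\eps-\eta}(\rho_A)$ so that $\tr\oQ\rho_A\le(\eps-\eta)+\eta=\eps$ via \thref{lem3}. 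You instead use a local filtering instrument that caps every eigenvalue of $\rho_A$ at $2^{\gamma_0}$ with $\gamma_0=-\uH_s^{\eps-\eta}(\rho_A)$, which succeeds with probability exactly $1-(\eps-\eta)$ and needs no second threshold parameter. Both routes then invoke Nielsen's theorem and land on the stated lower bound (your $M=\lfloor(1-\eps+\eta)2^{-\gamma_0}\rfloor$ is at least as large as the paper's guaranteed $\lfloor\eta(1-\eps)2^{-\gamma_0}\rfloor$, since $1-\eps+\eta\ge\eta(1-\eps)$). Your explicit check that $M\le\operatorname{rank}\rho_A$, so that the majorization inequalities hold at every partial sum, is a point the paper passes over silently. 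The only cosmetic caveat is that $M_0$ should be defined as the identity on $\ker\rho_A$ to avoid the formal division by $\lambda_i=0$; this does not affect the argument since $\psi_{AB}$ has no support there.
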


\begin{proof}[Proof of the lower bound in \Cref{thm_ec} (Achievability)]
Let the bipartite state $|\psi_{AB}\rangle$ have the Schmidt decomposition $$|\psi_{AB}\rangle = \sum_{k} \sqrt{\lambda_k}|e_k^A\rangle |e_k^B\rangle,$$
and let $\rho_A = \tr_B \psi_{AB}$ and define projection operators  $Q = \{\rho_A < 2^{-\gamma} \one_A\}$ and $\oQ = \one_A - Q$, for some $\gamma >0$. 

The first step of the protocol is for one of the parties (say, Alice) to do a von Neumann measurement, described by the projection operators $Q$ and $\oQ$, on her part of the shared bipartite state $\psi_{AB}$. If the outcome of the measurement corresponds to $\oQ$, then the protocol is aborted as unsuccessful. This occurs with probability
\begin{align*}
P_{\text{fail}} = \tr\oQ \rho_A.
\end{align*}
If the outcome of the measurement corresponds to $Q$, then the post-measurement state is given by
\begin{align*}
|\psi_{AB}^\prime\rangle \coloneqq  \frac{1}{\sqrt{\tr Q\rho_A }}
\sum_{\lambda_k < 2^{-\gamma}} \sqrt{\lambda_k}|e_k^A\rangle |e_k^B\rangle,
\end{align*}
and each of the eigenvalues of the reduced density matrix of this state is bounded from above by
$$\frac{2^{-\gamma}}{\tr Q\rho_A}.$$
Nielsen's majorization theorem \cite{Nie99} states that a bipartite pure state $\Psi$ with subsystem state $\sigma$ may be converted by LOCC into the pure state $\Phi$ with subsystem state $\omega$, if and only if the ordered eigenvalues of $\sigma$ are majorized by those of $\omega$. Specifically, 
\begin{align}\label{eq:ine2}
\sum_{m=1}^k \nu_m \le \sum_{m=1}^k \mu_m
\end{align}
for all $1\leq k\leq d$, with equality holding for $k=d$. Here $\{\nu_m\}$ and $\{\mu_m\}$ denote the
sets of eigenvalues of $\Psi$ and $\Phi$ respectively, labelled in a manner such that $\nu_1 \ge \nu_2 \ge \ldots$, and $\mu_1 \ge \mu_2 \ge \ldots$, and 
$d = {\rm{dim}}\,\cH_A =  {\rm{dim}}\,\cH_B.$ It follows from Nielsen's theorem that the state $|\psi_{AB}^\prime\rangle$ may be transformed by LOCC into the maximally entangled state $|\Phi^+_M\rangle$ of Schmidt rank
\begin{align}\label{eq:srank}
M = \lfloor 2^\gamma \tr Q\rho_A\rfloor,
\end{align}
as the eigenvalues all obey the inequality in \eqref{eq:ine2}. This concludes the protocol.

For some $\gamma' >0$ we can therefore bound the probability of failure of the protocol as follows: 
\begin{align*}
P_{\text{fail}} =  \tr\oQ \rho_A & = \tr \left[\oQ \left( \rho_A - 2^{-\gamma^\prime}\one_A\right)\right] +  2^{-\gamma^\prime}\tr\oQ \one_A\\
& \le \tr \left( \rho_A - 2^{- \gamma^\prime}\one_A\right)_+ + 2^{-\left(\gamma^\prime - \gamma\right)},
\end{align*}
where the inequality follows from the fact that $ \tr \oQ \le 2^\gamma$ (see \Cref{lem3}). Choosing $\gamma = \gamma^\prime + \log \eta$ for some arbitrary $\eta>0$, and
$$\gamma^\prime = \uH_s^{\eps - \eta}(\rho_A),$$
we obtain
\begin{align}\label{eq:fail}
P_{\text{fail}} = \tr \oQ \rho_A \le \eps - \eta + \eta = \eps.
\end{align}
Moreover, it follows from \eqref{eq:srank} that for this choice of $\gamma$ we have
\begin{align*}
\log M &=  \gamma + \log (\tr Q\rho_A) - \Delta\\
&\ge  \uH_s^{\eps - \eta}(\rho_A) + \log \eta + \log (1- \eps) - \Delta,
\end{align*}
where $\Delta$ is a constant included to ensure that $\log M$ is the logarithm of an integer, and the inequality follows from the fact that
$$\tr Q\rho_A = 1 -  \tr\oQ\rho_A \ge 1-\eps,$$
since we have from \eqref{eq:fail} that $\tr\oQ\rho_A \le \eps$. 
\end{proof}

\begin{proof}[Proof of the upper bound in \Cref{thm_ec} (Converse)]
To prove the upper bound, we need to establish that if $\Lambda$ is any LOCC operation such that $F(\Lambda(\psi_{AB}),\Phi^+_M)\geq 1-\eps$ for a given $\eps\in (0,1)$, then
\begin{align}\label{eq:ec-to-prove}
\log M \leq - \oH_s^{\eps + \eta}(A|B)_\psi + \delta - \log \eta.
\end{align}
We prove this by contradiction. Hence, let us assume that
\begin{align}\label{eq:ec-assumption}
\log M >  - \oH_s^{\eps + \eta}(A|B)_\psi + \delta - \log \eta.
\end{align}
By \Cref{lem:LOCC-to-pure}, there is an LOCC operation $\bar{\Lambda}$ such that $\phi_{AB} \coloneqq \bar{\Lambda}(\psi_{AB})$ is pure and 
\begin{align*}
F\left( \Lambda(\psi_{AB}), \Phi^+_M\right) \leq F\left(\bar{\Lambda}(\psi_{AB}), \Phi^+_M\right).
\end{align*} 
Hence, for an arbitrary constant $\gamma>0$, we have
\begin{align}
1-\eps &\leq F(\Lambda(\psi_{AB}),\Phi^+_M)\nonumber\\
&\leq F\left( \bar{\Lambda}(\psi_{AB}), \Phi^+_M\right)\nonumber\\
&=\langle  \Phi^+_M |  \bar{\Lambda}(\psi_{AB}) | \Phi^+_M\rangle\nonumber\\
&= \tr \left[  \Phi^+_M \left( \phi_{AB}-  2^{-\gamma} (\one_A \otimes \phi_B)\right)\right] +  2^{-\gamma} \tr \left[  \Phi^+_M  (\one_A \otimes \phi_B)\right]\nonumber\\
&\le \tr \left[  \{\phi_{AB} \ge 2^{-\gamma}(\one_A \otimes \phi_{B})\}\left( \phi_{AB} - 2^{-\gamma}(\one_A \otimes \phi_{B})\right)\right] + \tr \left[  \Phi^+_M  
 2^{-\gamma}(\one_A \otimes \phi_{B})\right]\nonumber\\
&\le \tr \left[  \{\phi_{AB} \ge 2^{-\gamma}(\one_A \otimes \phi_{B})\}\left( \phi_{AB} - 2^{-\gamma}(\one_A \otimes \phi_{B})\right)\right] + 
 \frac{2^{-\gamma}}{M},\label{eqa}
\end{align}
where the second inequality follows from \Cref{lem:tr-projector} and the last identity
holds because $$\tr_A \Phi^+_M = \frac{\one_A}{M}$$ and $\phi_B$ is a normalized state.

Choose 
\begin{align}\label{eq:ec-gamma}
\gamma = \oH_s^{\eps + \eta}(A|B)_\phi - \delta
\end{align}
for some arbitrary $\delta >0$. Then we have 
\begin{align}\label{eq:ec-gamma-bound}
	-\gamma = -\oH_s^{\eps+\eta}(A|B)_\phi + \delta \leq -\oH_s^{\eps+\eta}(A|B)_\psi +\delta
\end{align}
by \Cref{lem:lo}. Substituting \eqref{eq:ec-assumption}, \eqref{eq:ec-gamma}, and \eqref{eq:ec-gamma-bound} in \eqref{eqa} yields
\begin{align*}
1-\eps & < 1-(\eps+\eta) + 2^{-\gamma-\log M}\\
	&< 1-(\eps+\eta) + 2^{\log\eta}\\
	&= 1-\eps,
\end{align*}
which is clearly a contradiction. Hence, it follows that \eqref{eq:ec-to-prove} must hold for any LOCC operation $\Lambda$ for which $F\left(\Lambda(\psi_{AB}),\Phi^+_M\right)\geq 1-\eps$ for some given $\eps\in (0,1)$.
\end{proof}

\subsubsection{Second order asymptotics of entanglement concentration}
The second order asymptotics of entanglement concentration are recorded in the following theorem, where $E_D^{n,\eps}(\psi_{AB})$ is related to the one-shot distillable entanglement via  \Cref{def:blocklength-quantity}.
\begin{theorem}\label{thm:ec-second-order}
	The second order asymptotic expansion of the distillable entanglement of a pure state $\psi_{AB}$ with error $\eps\in(0,1)$ is given by
	$$E_D^{n,\eps}(\psi_{AB}) = nS(\rho_A) + \sqrt{n\left(\tr[\rho_A(\log\rho_A)^2]- S(\rho_A)^2\right)}\,\Phi^{-1}(\eps) + \cO(\log n),$$
	where $\rho_A=\tr_B\psi_{AB}$.
\end{theorem}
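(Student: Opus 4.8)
The plan is to combine the one-shot bounds of \thref{thm_ec}, applied to the i.i.d.\ resource $\psi_{AB}^{\otimes n}$, with the second order expansions of the information spectrum entropies from \thref{thm:D_s-asymptotics}, and then to verify that the dispersion coefficients appearing in the resulting upper and lower bounds agree. The first order term will come out to be $S(\rho_A)$ from both sides essentially by construction; the real content is the coincidence of the $\sqrt n$-coefficients.

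First I would invoke \thref{def:blocklength-quantity} and \thref{thm_ec} with the choices $\eta=\eta'=n^{-1/2}$ and an arbitrary fixed $\delta>0$, obtaining for all sufficiently large $n$
\[
\uH_s^{\eps-\frac{1}{\sqrt n}}(\rho_A^{\otimes n}) + \log\tfrac{1}{\sqrt n} + \log(1-\eps) - \Delta
\;\le\; E_D^{n,\eps}(\psi_{AB}) \;\le\;
-\oH_s^{\eps+\frac{1}{\sqrt n}}(A^n|B^n)_{\psi^{\otimes n}} + \delta - \log\tfrac{1}{\sqrt n},
\]
where $\Delta=\cO(1)$. The terms $\log(1-\eps)$, $\Delta$, $\delta$ and $\log n^{-1/2}=-\tfrac12\log n$ are all $\cO(\log n)$ and can be absorbed. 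Next I would translate the information spectrum entropies into information spectrum relative entropies via \thref{def:children-entropies}: $\uH_s^{\eps-\frac1{\sqrt n}}(\rho_A^{\otimes n}) = -\oD_s^{\eps-\frac1{\sqrt n}}(\rho_A^{\otimes n}\|\one_A^{\otimes n})$, and, using that $\psi_{AB}^{\otimes n}$ has reduced state $\psi_B^{\otimes n}$ on $B^n$ and that $\one_{A^n}\otimes\psi_B^{\otimes n}=(\one_A\otimes\psi_B)^{\otimes n}$ under the natural identification $\cH_{A^nB^n}\cong\cH_{AB}^{\otimes n}$, $-\oH_s^{\eps+\frac1{\sqrt n}}(A^n|B^n)_{\psi^{\otimes n}} = \uD_s^{\eps+\frac1{\sqrt n}}\big((\psi_{AB})^{\otimes n}\,\|\,(\one_A\otimes\psi_B)^{\otimes n}\big)$. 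Both quantities are now of the i.i.d.\ form required by \thref{thm:D_s-asymptotics}; applying part (ii) to the lower bound and part (i) to the upper bound, together with \thref{lem:phi-trick} to replace $\sqrt n\,\Phi^{-1}(\eps\mp n^{-1/2})$ by $\sqrt n\,\Phi^{-1}(\eps)+\cO(1)$, and using $-D(\rho_A\|\one_A)=S(\rho_A)$ together with $D(\psi_{AB}\|\one_A\otimes\psi_B)=-S(A|B)_\psi=S(\rho_A)$, I arrive at
\begin{align*}
nS(\rho_A) + \sqrt n\,\sk(\rho_A\|\one_A)\,\Phi^{-1}(\eps) + \cO(\log n)
&\le E_D^{n,\eps}(\psi_{AB})\\
&\le nS(\rho_A) + \sqrt n\,\sk(\psi_{AB}\|\one_A\otimes\psi_B)\,\Phi^{-1}(\eps) + \cO(\log n).
\end{align*}

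The main obstacle is then the purely algebraic fact that $\sk(\rho_A\|\one_A)=\sk(\psi_{AB}\|\one_A\otimes\psi_B)$, which is exactly what makes these two bounds collapse into a single expansion. Since $\log\one_A=0$ and $D(\rho_A\|\one_A)=-S(\rho_A)$, one has $V(\rho_A\|\one_A)=\tr[\rho_A(\log\rho_A)^2]-S(\rho_A)^2$. For $V(\psi_{AB}\|\one_A\otimes\psi_B)$ I would use the Nussbaum-Szko\l a distributions: writing the Schmidt decomposition $|\psi_{AB}\rangle=\sum_k\sqrt{\lambda_k}|e_k^A\rangle|e_k^B\rangle$, the distribution $P_{\psi_{AB},\,\one_A\otimes\psi_B}$ is supported on the diagonal and the associated log-likelihood ratio random variable $Z$ takes the value $-\log\lambda_k$ with probability $\lambda_k$; hence $Z$ has mean $\sum_k\lambda_k(-\log\lambda_k)=S(\rho_A)$ (consistent with $D(\psi_{AB}\|\one_A\otimes\psi_B)=S(\rho_A)$) and variance $\sum_k\lambda_k(\log\lambda_k)^2-S(\rho_A)^2=\tr[\rho_A(\log\rho_A)^2]-S(\rho_A)^2$. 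By the Nussbaum-Szko\l a identities this variance equals $V(\psi_{AB}\|\one_A\otimes\psi_B)$, so both dispersions equal $\sqrt{\tr[\rho_A(\log\rho_A)^2]-S(\rho_A)^2}$. Substituting this common value into the displayed bounds yields the claimed second order expansion.
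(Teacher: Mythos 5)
Your proposal is correct and follows essentially the same route as the paper: the one-shot bounds of \thref{thm_ec} with $\eta=\eta'=n^{-1/2}$, translation into $\uds$ and $\ods$ via \thref{def:children-entropies}, the i.i.d.\ expansions of \thref{thm:D_s-asymptotics} combined with \thref{lem:phi-trick}, and the verification that the two dispersion coefficients coincide. The only (harmless) deviation is that you compute $V(\psi_{AB}\|\one_A\otimes\psi_B)$ via the Nussbaum-Szko\l a distributions, whereas the paper evaluates $\sk(\psi_{AB}\|\one_A\otimes\rho_B)^2$ directly and notes it equals $\tr[\rho_B(\log\rho_B)^2]-S(\rho_A)^2=\tr[\rho_A(\log\rho_A)^2]-S(\rho_A)^2$; both yield the same common dispersion.
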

\begin{proof}
	We abbreviate $\rho^n\equiv\rho^{\otimes n}$. From \Cref{thm_ec} we obtain the following bounds:
	\begin{align}\label{eq:ec-blocklength-bounds}
		\uH_s^{\eps-\eta}(\rho_A^n)+\log\eta + \log(1-\eps)-\Delta \leq E_D^{n,\eps}(\psi_{AB}) \leq -\oH_s^{\eps+\eta}(A^n|B^n)_{\psi^n} + \delta - \log\eta
	\end{align}
	Noting that $\log(1-\eps) - \Delta = \cO(1)$, setting $\eta=\frac{1}{\sqrt{n}}$ and applying \Cref{thm:D_s-asymptotics} to $\uH_s^{\eps-\eta}(\rho_A^n)$, we get
	\begin{align}
		\uH_s^{\eps-\eta}(\rho_A^n) &= - \oD_s^{\eps-\eta}(\rho_A^n\|\one_{A^n})\notag\\
		&= -nD(\rho_A\|\one_A) + \sqrt{n}\,\sk(\rho_A\|\one_A)\invP{\eps-\eta} + \cO(\log n)\notag\\
		&= nS(\rho_A) + \sqrt{n\left( \tr[\rho_A(\log\rho_A - \log\one_A)^2]-D(\rho_A\|\one_A)^2\right)}\,\invP{\eps} + \cO(\log n)\notag\\
		&= nS(\rho_A) + \sqrt{n\left( \tr[\rho_A(\log\rho_A)^2]-S(\rho_A)^2\right)}\,\invP{\eps} + \cO(\log n)\label{eq:ec-lower-bound}
	\end{align}
	where we used \Cref{lem:phi-trick} in the third equality. For the upper bound in \eqref{eq:ec-blocklength-bounds} we set $\eta=\frac{1}{\sqrt{n}}$ and apply \Cref{thm:D_s-asymptotics} and \Cref{lem:phi-trick} to obtain
	\begin{align}
		-\oH_s^{\eps+\eta}(A^n|B^n)_{\psi^n} &= \uD_s^{\eps+\eta}(\psi_{AB}\|\one_A\otimes\rho_B)\notag\\
		&= nD(\psi_{AB}\|\one_A\otimes\rho_B) + \sqrt{n}\,\sk(\psi_{AB}\|\one_A\otimes\rho_B)\invP{\eps} + \cO(\log n)\\
		&= nS(\rho_A) + \sqrt{n}\,\sk(\psi_{AB}\|\one_A\otimes\rho_B)\invP{\eps} + \cO(\log n)\label{eq:ec-upper-bound-inter}.
	\end{align}
	Moreover,
	\begin{align}
		 \sk(\psi_{AB}\|\one_A\otimes\rho_B)^2 &= \tr[\psi_{AB}(\log\psi_{AB}-\log(\one_A\otimes\rho_B))^2] - D(\psi_{AB}\|\one_A\otimes\rho_B)^2\notag\\
		 &= \tr[\rho_B(\log\rho_B)^2] - S(\rho_A)^2\notag\\
		 &= \tr[\rho_A(\log\rho_A)^2] - S(\rho_A)^2\label{eq:s(A|B)},
	\end{align}
	since the first term in the last line only depends on the eigenvalues of $\rho_A$, which are identical to the eigenvalues of $\rho_B$. Substituting \eqref{eq:s(A|B)} in \eqref{eq:ec-upper-bound-inter}, we obtain
	$$-\oH_s^{\eps+\eta}(A^n|B^n)_{\psi^n} = nS(\rho_A) + \sqrt{n\left( \tr[\rho_A(\log\rho_A)^2]-S(\rho_A)^2\right)}\,\invP{\eps} + \cO(\log n),$$
	 which, when substituted in \eqref{eq:ec-blocklength-bounds} together with the lower bound \eqref{eq:ec-lower-bound}, yields the result.
\end{proof}

\subsection{Pure-state entanglement dilution}\label{sec:ent-dilution}
Pure-state entanglement dilution is the protocol which is essentially opposite to entanglement concentration. Here, the two parties Alice and Bob share a maximally entangled state and wish to convert it into a particular non-maximally entangled state $|\psi_{AB}\rangle$ by LOCC alone. 

Suppose Alice and Bob initially share a maximally entangled state $|\Phi^+_M\rangle$ of Schmidt rank $M$ given by \eqref{eq:mes}. Let $\rho_A = \tr_B \psi_{AB},$ and suppose the Schmidt decomposition of 
$|\psi_{AB}\rangle$ be given by
\begin{align}\label{schmidt1}
|\psi_{AB}\rangle = \sum_{k=1}^N \sqrt{\lambda_k} |e_k\rangle \otimes |e_k\rangle,
\end{align}
where the Schmidt coefficients $\lambda_k$ are arranged in non-increasing order,
i.e., $\lambda_1 \ge \lambda_2 \ge \ldots \ge \lambda_N$. 

\begin{definition}\label{def:entcost}
For any $\eps > 0$, the one-shot $\eps$-{\em{error entanglement cost}} of the
pure state $\psi_{AB}$ is defined as
\begin{align*}
E_C^{(1), \eps}(\psi_{AB}) \coloneqq \inf \{ \log M\mid\exists\text{ an LOCC map }\Lambda \text{ such that }F\left(\psi_{AB}, \Lambda(\Phi^+_M)\right) \ge 1- \eps\}. 
\end{align*}
\end{definition}
The entanglement cost in the case of asymptotically perfect dilution is known \cite{BBPS96} to be given by the entropy of entanglement.

\subsubsection{One-shot bounds for pure-state entanglement dilution}
We obtain the following one-shot bounds for pure-state entanglement dilution:
\begin{theorem}\label{thm_entdil} 
Fix $\eps >0$ and let $\delta, \eta >0$. Then the one-shot {{$\eps$-error entanglement cost}} of a bipartite pure state $\psi_{AB}$ satisfies the following bounds:
\begin{align*} 
\oH_s^{\eps+\eta} (\rho_A)  - \delta + \log \eta \le E_C^{(1), \eps}(\psi_{AB})  \le \oH_s^\eps(\rho_A) 
\end{align*}
\end{theorem}
\begin{proof}[Proof of the upper bound in \Cref{thm_entdil} (Achievability)] 
Suppose that Alice and Bob share a maximally entangled state of Schmidt rank $M$ given by \eqref{eq:mes}. Alice locally prepares the desired state \eqref{schmidt1}. If $M\ge N$, then the part $B$ of the above state can be teleported to Bob perfectly. However, if $M<N$ then the part $B$ of only the following truncated state can be perfectly teleported:
$$ |\psi^\prime_{AB}\rangle \coloneqq \frac{1}{\tr(P_M \psi_{AB})}
(P_M \otimes \one_B) |\psi_{AB}\rangle, \quad \text{where }
(P_M \otimes \one_B) |\psi_{AB}\rangle = \sum_{k=1}^M \sqrt{\lambda_k} |e_k\rangle \otimes |e_k\rangle,
$$
that is, $P_M$ denotes the orthogonal projection onto the eigenspace of $\rho_A$ spanned by its $M$ largest eigenvalues. In this case the final state shared between Alice and Bob after the teleportation is given by $|\psi^\prime_{AB}\rangle$. The fidelity of the entanglement dilution protocol in this case is given by
\begin{align*}
\langle\psi_{AB}|\psi^\prime_{AB} |\psi_{AB}\rangle
&= \sum_{k=1}^M \lambda_k.
\end{align*} 
This simple protocol turns out to be sufficient for proving the upper bound
in \Cref{thm_entdil}.

In particular, suppose that the final state shared between Alice and Bob after teleportation is given by
\begin{align*}
|\phi_{AB}\rangle \coloneqq \frac{1}{\tr [(Q \otimes \one_B)\psi_{AB}]}
Q|\psi_{AB}\rangle,
\end{align*}
where $Q\equiv Q(\gamma)$ is the projection operator given by
\begin{align*} 
Q\coloneqq \{ \rho_A \ge 2^{-\gamma \one_A}\},
\end{align*}
where $\gamma >0$ and $\rho_A = \tr_B \psi_{AB}$. Further, note that in order to constitute the final state, we necessarily have
\begin{align}
\log M = \gamma.\label{cond}
\end{align}
This is because the operator $Q$ projects onto the subspace spanned by eigenvectors of $\rho_A$ corresponding to eigenvalues  
which are less than or equal to $2^{-\gamma}$, and the number of such eigenvalues is at most $2^\gamma$. As the initial maximally entangled state \eqref{eq:mes} is of Schmidt rank $M$, perfect teleportation can be achieved if $M = 2^\gamma$, which in turn implies \eqref{cond}.

Choosing $\gamma = \oH_s^\eps(\rho_A)$, it follows from \Cref{def:children-entropies} of $\oH_s^\eps(\rho_A)$ that 
we have
\begin{align*}
\tr(Q\rho_A) \ge \tr\left[  \{ \rho_A \ge 2^{-\gamma} \one_A\}(\rho_A - 2^{-\gamma} \one_A)\right] = 1 - \eps.
\end{align*}
Hence, denoting the teleportation by $\Lambda$, the fidelity of the entanglement dilution protocol can be bounded as follows:
\begin{align*}
F\left(\psi_{AB}, \Lambda(\Phi^+_M)\right) &= \frac{1}{\tr\left[(Q \otimes \one_B)\psi_{AB}\right]}
|\langle \psi_{AB} |(Q \otimes \one_B) |\psi_{AB} \rangle |^2\nonumber\\
&= \tr Q \rho_A  \nonumber\\
& \ge 1- \eps
\end{align*}
Further, from \eqref{cond} we infer that 
$$\log M =  \oH_s^\eps(\rho_A) + \delta,$$
which along with \Cref{def:entcost} of the one-shot $\eps$-error entanglement cost implies the upper bound of \Cref{thm_entdil}.
\end{proof}

\begin{proof}[Proof of the lower bound in \Cref{thm_entdil} (Converse)]
In order to establish the converse, we need to prove that for any LOCC operation $\Lambda$ for which $F\left(\psi_{AB},\Lambda(\Phi^+_M)\right)\geq 1-\eps$ for some given $\eps\in(0,1)$, we must have 
\begin{align}\label{eq:ed-to-prove} 
\log M \geq \oH_s^{\eps+\eta} (\rho_A)  - \delta + \log \eta.
\end{align}
We prove this by contradiction, assuming that 
\begin{align}\label{eq:ed-assumption} 
\log M < \oH_s^{\eps+\eta} (\rho_A)  - \delta + \log \eta.
\end{align}

As in Section \ref{sec:ent-concentration}, by \Cref{lem:LOCC-to-pure} there is an LOCC operation $\bar{\Lambda}$ such that $\bar{\Lambda}(\Phi^+_M)$ is pure and $F(\psi_{AB},\Lambda(\Phi^+_M))\leq F(\psi_{AB},\bar{\Lambda}(\Phi^+_M))$. 
We then have
\begin{align}
1-\eps &\leq F(\psi_{AB},\Lambda(\Phi^+_M))\nonumber\\
&\leq F\left( \psi_{AB}, \bar{\Lambda}(\Phi^+_M) \right)\nonumber\\
&= \langle  \psi_{AB} | \sum_j (U_j \otimes K_j) |\Phi^+_M\rangle \langle \Phi^+_M| (U_j^\dagger \otimes K_j^\dagger)| \psi_{AB}\rangle,\nonumber\\
&= \sum_j | \langle  \psi_{AB} |(U_j \otimes K_j) |\Phi^+_M\rangle |^2,
\nonumber
\end{align}
where we made use of the Lo-Popescu characterization \cref{Lo-Pop} of LOCC operations which take pure states to pure states. 

Let $W$ be a unitary operator such that $W|e_k\rangle = |\tk\rangle$ for each $k=1,2,\ldots, d$. Recall that $\{|e_k\rangle\}$ denotes the Schmidt basis of $|\psi_{AB}\rangle$, whereas $ \{|\tk\rangle\}$ denotes the given orthonormal basis in $\cH_A ( \simeq \cH_B)$ in terms of which the maximally entangled state $|\Phi_M^+\rangle$ has been defined in \eqref{eq:mes}. Further, $N$ is the number of non-zero Schmidt coefficients of $|\psi_{AB}\rangle$. Then, defining the projector $P_M = \sum_{k=1}^M |\tk_A\rangle \langle \tk_A|$,  we have
\begin{align*}
(U_j \otimes \one_B) |\Phi^+_M\rangle &= \frac{1}{\sqrt{M}} \sum_{k=1}^M U_j |\tk_A\rangle \otimes |\tk_B\rangle \nonumber\\
&=  \frac{1}{\sqrt{M}}  (\one_A \otimes P_M)\sum_{k=1}^N U_j W |k_A\rangle \otimes W |k_B\rangle \nonumber\\
&=   \frac{1}{\sqrt{M}} (\one_A \otimes P_B) \sum_{k=1}^N  |k_A\rangle \otimes (U_j W)^T W |k_B\rangle,\nonumber\\
&=  \frac{1}{\sqrt{M}} (\one_A \otimes P_B) \sum_{k=1}^N  |k_A\rangle \otimes V_j |k_B\rangle,\nonumber\\
\end{align*}
where we have defined $V_j = (U_j W)^T W$ and used the fact that $$\sum_k U|k\rangle \otimes |k\rangle =  \sum_k |k\rangle \otimes U^T|k\rangle.$$
Then, using the the Schmidt decomposition \eqref{schmidt1} of $|\psi_{AB}\rangle$, we obtain
\begin{align*}
\langle  \psi_{AB} |(U_j \otimes K_j) |\Phi^+_M\rangle
& = \sum_k \sqrt{\frac{\lambda_k}{M}} \langle e_k| K_j P_M V_j |e_k\rangle\nonumber\\
&= \tr \left(\frac{1}{\sqrt{M}} \sqrt{\rho_A} K_j P_M V_j \right).
\end{align*}
Using the Cauchy-Schwarz inequality, we further obtain
\begin{align}
F\left( \psi_{AB}, \bar{\Lambda}(\Phi^+_M) \right) &= \sum_k \left| \tr \left(\frac{1}{\sqrt{M}} \sqrt{\rho_A} K_j P_M V_j \right)\right|^2\nonumber\\
&=  \sum_k \left| \tr \left(\frac{1}{\sqrt{M}} \sqrt{\rho_A} K_j P_M \cdot P_M  V_j \right)\right|^2\nonumber\\
& \le \frac{1}{M} \tr (P_M) \max_j \tr(V_j^\dagger P_M V_j \rho_A)\nonumber\\
&= \frac{1}{M} \tr (P_M) \max_j \tr(P_j \rho_A),\nonumber\\
&=  \max_j \tr(P_j \rho_A),\label{eq:ed-F-bound}
\end{align}
where we have defined $P_j = V_j^\dagger P_M V_j$ and used the fact that $\tr P_M = M$.

Using \Cref{lem:tr-projector} we have, for any $\gamma >0$ that
\begin{align}
\tr P_j \rho_A &= \tr \left[ P_j( \rho_A - 2^{-\gamma} \one_A)\right]+ 2^{-\gamma} \tr P_j,\nonumber\\
&\le \tr \left[\{( \rho_A \ge 2^{-\gamma} \one_A\}( \rho_A - 2^{-\gamma} \one_A) \right]+ 2^{-\gamma} \tr P_j,\nonumber\\
&= \tr \left[\{( \rho_A \ge 2^{-\gamma} \one_A\}( \rho_A - 2^{-\gamma} \one_A) \right]+ 2^{-\gamma}M,\label{last0}
\end{align}
since by the cyclicity of the trace and the unitarity of the operators $V_j$ we have $$\tr P_j = \tr V_j^\dagger P_M V_j= \tr P_M = M.$$ 

Choose 
\begin{align}\label{eq:ed-gamma}
\gamma = \oH_s^{\eps + \eta}(\rho_A) - \delta
\end{align}
for some arbitrary $\delta, \eta >0$. Then substituting \eqref{last0} and \eqref{eq:ed-gamma} in \eqref{eq:ed-F-bound}, and using \eqref{eq:ed-assumption}, we obtain
\begin{align*}
	1-\eps&< 1-(\eps+\eta) + 2^{-\gamma + \log M}\\
	&< 1- (\eps+\eta) + 2^{\log\eta}\\
	&=1-\eps,
\end{align*}
which is clearly a contradiction. Hence, we proved that \eqref{eq:ed-to-prove} must hold for any LOCC operation $\Lambda$ such that $F\left(\psi_{AB},\Phi^+_M\right)\geq 1-\eps$ for some given $\eps\in(0,1)$.
\end{proof}

\subsubsection{Second order asymptotics for pure-state entanglement dilution}
We obtain the following second order asymptotic expansion for entanglement dilution, where $E_C^{n,\eps}(\psi_{AB})$ is related to the one-shot entanglement cost via \Cref{def:blocklength-quantity}.
\begin{theorem}\label{thm:ed-second-order}
Let $\psi_{AB}$ be a pure bipartite state. For any $\eps\in(0,1)$, the second order asymptotic expansion of $E_C^{n,\eps}(\psi_{AB})$ is given by
$$E_C^{n,\eps}(\psi_{AB}) = nS(\rho_A) - \sqrt{n(\tr[\rho_A(\log\rho_A)^2]-S(\rho_A)^2)}\,\invP{\eps} + \cO(\log n),$$
where $\rho_A=\tr_B\psi_{AB}$.
\end{theorem}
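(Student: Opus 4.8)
The plan is to follow exactly the template used for visible source coding in \thref{thm:sc-second-order}(i): insert the i.i.d.\ state $\psi_{AB}^{\otimes n}$ (whose reduced state on $A$ is $\rho_A^{\otimes n}$) into the one-shot bounds of \thref{thm_entdil}, and then invoke the second-order expansion of the information spectrum entropy from \thref{thm:D_s-asymptotics}. Concretely, by \thref{def:blocklength-quantity} we have $E_C^{n,\eps}(\psi_{AB}) = E_C^{(1),\eps}(\psi_{AB}^{\otimes n})$, so \thref{thm_entdil} applied to $\psi_{AB}^{\otimes n}$ gives, for any $\delta,\eta>0$,
$$\oH_s^{\eps+\eta}(\rho_A^{\otimes n}) - \delta + \log\eta \;\le\; E_C^{n,\eps}(\psi_{AB}) \;\le\; \oH_s^{\eps}(\rho_A^{\otimes n}).$$

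Next I would rewrite both sides using $\oH_s^\eps(\omega) = -\uds(\omega\|\one)$ from \thref{def:children-entropies}(i) together with \thref{thm:D_s-asymptotics}(i), which yields $\uds(\rho_A^{\otimes n}\|\one_{A^n}) = nD(\rho_A\|\one) + \sqrt{n}\,\sk(\rho_A\|\one)\Phi^{-1}(\eps) + \cO(\log n)$. Since $S(\rho_A) = -D(\rho_A\|\one)$, the upper bound becomes $nS(\rho_A) - \sqrt{n}\,\sk(\rho_A\|\one)\Phi^{-1}(\eps) + \cO(\log n)$. For the lower bound I would set $\eta = \tfrac{1}{\sqrt{n}}$, so that $\log\eta = -\tfrac{1}{2}\log n = \cO(\log n)$ and $\delta$ stays fixed ($=\cO(1)$); then $\oH_s^{\eps+1/\sqrt{n}}(\rho_A^{\otimes n}) = nS(\rho_A) - \sqrt{n}\,\sk(\rho_A\|\one)\Phi^{-1}\!\left(\eps+\tfrac{1}{\sqrt{n}}\right) + \cO(\log n)$, and \thref{lem:phi-trick} replaces $\sqrt{n}\,\Phi^{-1}(\eps+\tfrac{1}{\sqrt{n}})$ by $\sqrt{n}\,\Phi^{-1}(\eps) + \cO(1)$. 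Hence the lower bound matches the upper bound up to $\cO(\log n)$. Finally I would expand the constant as in the source-coding proof: $\sk(\rho_A\|\one) = \sqrt{V(\rho_A\|\one)} = \sqrt{\tr[\rho_A(\log\rho_A - \log\one)^2] - D(\rho_A\|\one)^2} = \sqrt{\tr[\rho_A(\log\rho_A)^2] - S(\rho_A)^2}$, which gives precisely the claimed expression.

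There is no serious obstacle here: the whole content resides in \thref{thm_entdil} and \thref{thm:D_s-asymptotics}, both already established. The only point worth emphasizing is that — in contrast to the blind source-coding bounds in \thref{thm:source-coding-one-shot}(ii), which carry the mismatched parameters $\eps$ and $\eps/2$ — the two one-shot bounds in \thref{thm_entdil} involve the \emph{same} error parameter $\eps$ (up to the $\eta$-perturbation absorbed into the $\cO(\log n)$ term). This is exactly what makes the $\sqrt{n}$-coefficients of the upper and lower bounds coincide, so that one obtains a genuine second-order asymptotic expansion rather than the gap appearing in \thref{thm:sc-second-order}(ii).
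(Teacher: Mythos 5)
Your proposal is correct and follows essentially the same route as the paper's own proof: both insert $\psi_{AB}^{\otimes n}$ into the one-shot bounds of \thref{thm_entdil}, expand $\oH_s^{\eps}(\rho_A^{\otimes n})=-\uds(\rho_A^{\otimes n}\|\one)$ via \thref{thm:D_s-asymptotics}, set $\eta=1/\sqrt{n}$, and absorb the shift in the error parameter using \thref{lem:phi-trick}. Your closing observation about why the bounds match (same error parameter $\eps$ on both sides, unlike the blind source-coding case) is accurate, though the paper does not spell it out.
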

\begin{proof}
We abbreviate $\rho_A^n\equiv\rho_A^{\otimes n}$. By \Cref{thm_entdil} we have the following one-shot bounds for finite block length size:
\begin{align}\label{eq:ed-finite-blocklength}
\oH_s^{\eps+\eta}(\rho_A^n) - \delta + \log\eta \leq E_C^{n,\eps}(\psi_{AB}) \leq \ohs(\rho_A^n)
\end{align}
We apply \Cref{thm:D_s-asymptotics} to $\ohs(\rho_A^n)$ to obtain
\begin{align}
\oH_s^{\eps}(\rho_A^n) &= -nD(\rho_A\|\one_A) - \sqrt{n}\,\sk(\rho_A\|\one_A)\invP{\eps} + \cO(\log n)\notag\\
&= nS(\rho_A) - \sqrt{n(\tr[\rho_A(\log\rho_A)^2]-S(\rho_A)^2)}\,\invP{\eps} + \cO(\log n)\label{eq:H_s-expansion2}
\end{align}
For the lower bound in \eqref{eq:ed-finite-blocklength} we set $\eta=\frac{1}{\sqrt{n}}$ and apply \eqref{eq:H_s-expansion2} and \Cref{lem:phi-trick}, which finally yields the claim.
\end{proof}
The converse bound in \Cref{thm_entdil} together with the asymptotic expansion for entanglement dilution in \Cref{thm:ed-second-order} implies the following result, which we employ in the next section:
\begin{corollary}\label{cor:ed-n-shot-bound}
Let $\delta>0$ and $n$ be sufficiently large. For any LOCC map $\cD_n$ used in an entanglement dilution protocol, for which $$F\left(\psi_{AB}^{\otimes n},\cD_n\left(\Phi^+_{M_n}\right)\right)\geq 1-\delta,$$ the Schmidt rank $M_n$ of the MES $\Phi^+_{M_n}$ must satisfy
\begin{align*}
\log M_n &\geq nS(\rho_A) - \sqrt{n(\tr[\rho_A(\log\rho_A)^2]-S(\rho_A)^2)}\,\invP{\delta} + \cO(\log n).
\end{align*}
\end{corollary}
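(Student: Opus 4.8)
The plan is to recognise that this corollary is essentially a reformulation of the converse (lower) bound of \thref{thm_entdil} at block length $n$, combined with the second order asymptotic expansion of the one-shot entanglement cost in \thref{thm:ed-second-order}. The only genuinely conceptual point is that a single protocol achieving the required fidelity makes its log--Schmidt-rank a feasible point for the infimum defining $E_C^{n,\delta}$.

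Concretely, first I would observe that if $\cD_n$ is an LOCC map with $F\bigl(\psi_{AB}^{\otimes n},\cD_n(\Phi^+_{M_n})\bigr)\geq 1-\delta$, then by \thref{def:entcost} applied to the resource $\psi_{AB}^{\otimes n}$ together with \thref{def:blocklength-quantity}, the number $\log M_n$ belongs to the set over which the infimum defining $E_C^{n,\delta}(\psi_{AB})=E_C^{(1),\delta}(\psi_{AB}^{\otimes n})$ is taken. Hence
$$\log M_n \geq E_C^{n,\delta}(\psi_{AB}).$$
Applying \thref{thm:ed-second-order} with $\eps=\delta$ then immediately gives
$$\log M_n \geq nS(\rho_A) - \sqrt{n\bigl(\tr[\rho_A(\log\rho_A)^2]-S(\rho_A)^2\bigr)}\,\invP{\delta} + \cO(\log n),$$
which is the claim.

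If one prefers not to invoke the full two-sided asymptotic equality, the same bound follows directly from the one-shot converse: applying the lower bound of \thref{thm_entdil} to the resource $\psi_{AB}^{\otimes n}$ with error threshold $\delta$, auxiliary parameter chosen arbitrarily small, and $\eta=1/\sqrt n$, one gets $\log M_n \geq \oH_s^{\delta+1/\sqrt n}(\rho_A^{\otimes n}) - \tfrac{1}{2}\log n + \cO(1)$. Then, writing $\oH_s^{\eps}(\rho_A^{\otimes n})=-\uD_s^{\eps}(\rho_A^{\otimes n}\,\|\,\one)$ and using \thref{thm:D_s-asymptotics}(i) with $D(\rho_A\|\one)=-S(\rho_A)$ and $\sk(\rho_A\|\one)^2=\tr[\rho_A(\log\rho_A)^2]-S(\rho_A)^2$ (as computed in the proof of \thref{thm:sc-second-order}), followed by \thref{lem:phi-trick} to replace $\invP{\delta+1/\sqrt n}$ by $\invP{\delta}$ up to an $\cO(1)$ term, reproduces the stated inequality.

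There is no real obstacle here; the only steps requiring a little care are (i) correctly identifying that a single achieving protocol suffices to make $\log M_n$ feasible for the infimum, so that the converse bound for $E_C^{n,\delta}$ transfers verbatim to $\log M_n$, and (ii) checking that all the error-handling contributions --- the auxiliary smoothing constant in \thref{thm_entdil}, the choice $\eta=1/\sqrt n$ (contributing $-\tfrac{1}{2}\log n$), and the $\cO(1)$ arising from Taylor-expanding $\Phi^{-1}$ around $\delta$ via \thref{lem:phi-trick} (which is uniform in $n$ once $\delta\in(0,1)$ is fixed and $n$ is large) --- are all $\cO(\log n)$ and hence absorbed into the stated remainder.
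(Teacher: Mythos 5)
Your proposal is correct and matches the paper's (implicit) argument: the paper states this corollary as an immediate consequence of the converse bound in \thref{thm_entdil} together with the expansion in \thref{thm:ed-second-order}, which is exactly your route via the feasibility of $\log M_n$ for the infimum defining $E_C^{n,\delta}(\psi_{AB})$. Your alternative derivation directly from the one-shot converse is also sound and is just the same chain of estimates unrolled.
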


\subsection{Irreversibility of entanglement concentration}\label{sec:irreversibility}
In the asymptotic limit, the distillable entanglement for any given bipartite pure 
state $\psi_{AB}$ is equal to its entanglement cost, and is given by the {\em{entropy
of entanglement}} $S(\rho_A)$ (where $\rho_A = \tr_B \psi_{AB}$). This equality between
the optimal rates of asymptotically perfect entanglement concentration and entanglement dilution 
has led to the popular belief that entanglement conversions of pure bipartite states are \emph{asymptotically 
reversible}. The reversibility is in the sense that, in the asymptotic limit, the ebits extracted from multiple
copies of $\psi_{AB}$ via entanglement concentration can subsequently be used to recover the original 
number of copies of the state  $\psi_{AB}$ via entanglement dilution. However, recently Kumagai and 
Hayashi \cite{KH13} proved that the error incurred in the composite process (i.e.~concentration followed by dilution) is necessarily non-zero, even
in the asymptotic limit. This implies that, contrary to popular belief, entanglement concentration is in fact
irreversible.

Second order asymptotic expansions for the distillable entanglement and the entanglement cost (as 
given by \Cref{thm:ec-second-order} and \Cref{thm:ed-second-order} respectively) show that even though to leading order the number of ebits which can be distilled from $\psi_{AB}^{\otimes n}$ is equal to the number of ebits needed to create $\psi_{AB}^{\otimes n}$,
there is a discrepancy between these two quantities in the second order ($\sqrt{n}$). In this section we show how the irreversibility of entanglement concentration can be easily proved using this discrepancy. We formalize the concept of (ir-)reversibility in the following way:

\begin{definition}~\label{def:reversibility}
\begin{enumerate}[(i)]
\item Set $M_n=\lfloor \exp(nS(\rho_A))\rfloor$. A sequence $\lbrace\cC_n\rbrace_{\nin}$ of LOCC maps, where $\cC_n\colon\cD(\cH_{AB}^{\otimes n})\rightarrow \cD(\cH_{AB}^{\otimes n})$, is called an \emph{asymptotically perfect concentration protocol} (APCP), if it converts the sequence of pure states $\lbrace\psi_{AB}^{\otimes n}\rbrace_{\nin}$ to the sequence of MES $\lbrace\Phi_{M_n}^+\rbrace_{\nin}$ with asymptotically vanishing error, that is, for any $\eps>0$ there exists an $N\in\mathbb{N}$ such that for all $n>N$ we have
$$F\left(\cC_n(\psi_{AB}^{\otimes n}),\Phi_{M_n}^+\right)\geq 1-\eps.$$  This is equivalent to $\lim_{n\rightarrow \infty}F\left(\cC_n(\psi_{AB}^{\otimes n}),\Phi_{M_n}^+\right)=1$.
\item We say that entanglement concentration is \emph{reversible} if 
$$\lim_{n\rightarrow \infty}\max_{\cD_n}F\left(\psi_{AB}^{\otimes n},(\cD_n\circ\cC_n)(\psi_{AB}^{\otimes n})\right) = 1$$
where $\lbrace\cC_n\rbrace_{\nin}$ is a sequence of APCPs and the minimization is over all sequences of LOCC maps $\lbrace\cD_n\rbrace_{\nin}$, with $\cD_n\colon\cD(\cH_{AB}^{\otimes n})\rightarrow \cD(\cH_{AB}^{\otimes n})$, used in the entanglement dilution protocol.
\end{enumerate}
\end{definition}

\begin{theorem}
Entanglement concentration of any given bipartite pure state $\psi_{AB}$ is irreversible, i.e.~for any sequence $\lbrace\cC_n\rbrace_{\nin}$ of APCPs we have $$\lim_{n\rightarrow \infty}\max_{\cD_n\text{:LOCC}}F\left(\psi_{AB}^{\otimes n},(\cD_n\circ\cC_n)(\psi_{AB}^{\otimes n})\right)<1.$$
\end{theorem}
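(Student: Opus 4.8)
The plan is to argue by contradiction, exploiting the fact that the number of ebits $M_n=\lfloor\exp(nS(\rho_A))\rfloor$ produced by an APCP carries no $\sqrt n$ surplus, while re-creating $\psi_{AB}^{\otimes n}$ by dilution provably requires such a positive surplus; the latter is exactly the content of \thref{cor:ed-n-shot-bound}. Write $v\coloneqq \tr[\rho_A(\log\rho_A)^2]-S(\rho_A)^2$. As is implicit, we assume $\psi_{AB}$ is not maximally entangled, i.e.\ $v>0$ (otherwise $\psi_{AB}^{\otimes n}$ is itself an MES up to local unitaries and the protocols are trivially reversible). Suppose, towards a contradiction, that there is a sequence $\lbrace\cC_n\rbrace_{\nin}$ of APCPs with $\limsup_{n\to\infty}\max_{\cD_n}F\!\left(\psi_{AB}^{\otimes n},(\cD_n\circ\cC_n)(\psi_{AB}^{\otimes n})\right)=1$, and fix, along a subsequence, LOCC maps $\cD_n$ with $F\!\left(\psi_{AB}^{\otimes n},(\cD_n\circ\cC_n)(\psi_{AB}^{\otimes n})\right)\to 1$ (recall that $F$ in this section is the squared fidelity).

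The first step is to transfer this to a statement about diluting the \emph{exact} MES $\Phi^+_{M_n}$. By the defining property of an APCP, $F\!\left(\cC_n(\psi_{AB}^{\otimes n}),\Phi^+_{M_n}\right)\to 1$. Monotonicity of the fidelity under the CPTP map $\cD_n$ gives $F\!\left(\cD_n(\cC_n(\psi_{AB}^{\otimes n})),\cD_n(\Phi^+_{M_n})\right)\geq F\!\left(\cC_n(\psi_{AB}^{\otimes n}),\Phi^+_{M_n}\right)\to 1$. Applying the triangle inequality for the purified distance (\thref{def:distance-measures}(ii)) to the three normalized states $\psi_{AB}^{\otimes n}$, $\cD_n(\cC_n(\psi_{AB}^{\otimes n}))$ and $\cD_n(\Phi^+_{M_n})$, together with the identity $P=\sqrt{1-F}$ valid for normalized states in this section's convention, then yields $P\!\left(\psi_{AB}^{\otimes n},\cD_n(\Phi^+_{M_n})\right)\to 0$, hence $F\!\left(\psi_{AB}^{\otimes n},\cD_n(\Phi^+_{M_n})\right)\to 1$.

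The second step applies the dilution converse. Fix $\delta=\tfrac14$. By the previous step, for all large $n$ in the subsequence we have $F\!\left(\psi_{AB}^{\otimes n},\cD_n(\Phi^+_{M_n})\right)\geq 1-\delta$, so $\cD_n$ realizes an entanglement dilution of $\Phi^+_{M_n}$ into $\psi_{AB}^{\otimes n}$ with error at most $\delta$, and \thref{cor:ed-n-shot-bound} forces $\log M_n\geq nS(\rho_A)-\sqrt{nv}\,\invP{\delta}+\cO(\log n)$. On the other hand, the definition of $M_n$ gives $\log M_n\leq nS(\rho_A)+\cO(1)$. Subtracting the two bounds yields $0\geq -\sqrt{nv}\,\invP{1/4}+\cO(\log n)$; since $\invP{1/4}<0$ and $v>0$, the right-hand side equals a positive constant times $\sqrt n$ up to an $\cO(\log n)$ correction, which is impossible for $n$ large. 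This contradiction proves the theorem; in fact it shows more precisely that there exist $\delta_0>0$ and $N$ with $\max_{\cD_n}F\!\left(\psi_{AB}^{\otimes n},(\cD_n\circ\cC_n)(\psi_{AB}^{\otimes n})\right)\leq 1-\delta_0$ for all $n\geq N$.

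The main obstacle is the first step: the reversibility hypothesis concerns the composite channel applied to $\psi_{AB}^{\otimes n}$, whose intermediate state is only \emph{approximately} maximally entangled, whereas \thref{cor:ed-n-shot-bound} speaks of diluting the \emph{exact} MES $\Phi^+_{M_n}$. Bridging the gap needs the combination of fidelity monotonicity with the triangle inequality for the purified distance, and some care with the squared-versus-unsquared fidelity convention used here. One should also check that the $\cO(\log n)$ term in \thref{cor:ed-n-shot-bound} may be taken with an implied constant independent of $n$ once $\delta$ is fixed, so that the final comparison of the two bounds on $\log M_n$ is legitimate.
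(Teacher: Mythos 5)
Your proposal is correct and rests on the same two pillars as the paper's proof: the second-order dilution converse (\thref{cor:ed-n-shot-bound}) applied with a fixed constant error $\delta$, and a triangle-inequality argument bridging the gap between diluting the exact MES $\Phi^+_{M_n}$ and the composite map $\cD_n\circ\cC_n$ acting on $\psi_{AB}^{\otimes n}$. The execution differs in three minor but genuine ways. First, you run the argument by contradiction and do the bridging in the forward direction (fidelity monotonicity under $\cD_n$ plus the purified-distance triangle inequality, concluding $F(\psi_{AB}^{\otimes n},\cD_n(\Phi^+_{M_n}))\to 1$), whereas the paper first establishes $F(\psi_{AB}^{\otimes n},\cD_n(\Phi^+_{M_n}))<1-\delta$ and then propagates this backward via the trace-distance triangle inequality and Fuchs--van de Graaf; both are sound, and yours yields the quantitative conclusion $F\leq 1-\delta_0$ just as cleanly. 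Second, you bound $\log M_n\leq nS(\rho_A)+\cO(1)$ directly from the definition $M_n=\lfloor\exp(nS(\rho_A))\rfloor$ in \thref{def:reversibility}, bypassing the paper's invocation of the concentration expansion in its equation for $\log M_n$ — this is actually a simplification, and arguably tidier, since that equation sits somewhat awkwardly next to the definition that already pins $M_n$ down. Third, you make explicit the hypothesis $v=\tr[\rho_A(\log\rho_A)^2]-S(\rho_A)^2>0$, which the paper leaves implicit but which is genuinely needed (for a maximally entangled or product $\psi_{AB}$ the statement fails, as the conversion is trivially reversible); flagging it is a point in your favour.
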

\begin{proof}
Let $\lbrace\cC_n\rbrace_{\nin}$ be a sequence of APCPs as in \Cref{def:reversibility} and fix $\eps\in(0,\frac{1}{2})$. By our result on the second order asymptotics for entanglement concentration (\Cref{thm_ec}), for sufficiently large $n$ we have 
\begin{align}
\log M_n &= nS(\rho_A)+\sqrt{n(\tr[\rho_A(\log\rho_A)^2]-S(\rho_A)^2)}\,\Phi^{-1}(\eps) + \cO(\log n)\notag\\
&\equiv nS(\rho_A)+\sqrt{n(\tr[\rho_A(\log\rho_A)^2]-S(\rho_A)^2)}\,\Phi^{-1}(\eps) + f(n)\label{eq:log-Mn}
\end{align}
where in the second line $f$ is a function satisfying $\lim_{n\rightarrow\infty}(f(n)/\sqrt{n})=0$,
which can be chosen since $\cO(\log n)\subset o(\sqrt{n})$. Contrarily, \Cref{cor:ed-n-shot-bound} implies that for every dilution protocol $\cD_n$ with $$F\left(\psi_{AB}^{\otimes n},\cD_n\left(\Phi^+_{M'_n}\right)\right)\geq 1-\delta,$$ where $\delta>0$ and $n$ is sufficiently large, the Schmidt rank $M'_n$ of the initially shared MES $\Phi^+_{M'_n}$ must satisfy
\begin{align}
\log M'_n &\geq nS(\rho_A) - \sqrt{n(\tr[\rho_A(\log\rho_A)^2]-S(\rho_A)^2)}\,\invP{\delta} + \cO(\log n)\notag\\
&\equiv nS(\rho_A) - \sqrt{n(\tr[\rho_A(\log\rho_A)^2]-S(\rho_A)^2)}\,\invP{\delta} + g(n).\label{eq:log-Mn-prime}
\end{align}
Here, $g$ is chosen such that $\lim_{n\rightarrow\infty}(g(n)/\sqrt{n})=0$. We show in the following that \eqref{eq:log-Mn} does not satisfy this bound. For $\delta,\eps>(0,\frac{1}{2})$ we have that $\invP{\eps}$ and $\invP{\delta}$ are strictly negative. Note that the restriction of both $\eps$ and $\delta$ to the interval $(0,\frac{1}{2})$ arises from the requirement that the overall error of the recovery process can be at most $1$. 

Subtracting the expression for $\log M_n$ given by \eqref{eq:log-Mn} from the RHS of \eqref{eq:log-Mn-prime} yields
\begin{align}
\sqrt{n}\left(-\sqrt{\tr[\rho_A(\log\rho_A)^2]-S(\rho_A)^2}\,(\invP{\delta}+\invP{\eps})+ \frac{g(n)-f(n)}{\sqrt{n}} \right).\label{eq:difference-term}
\end{align}
Since $\lim_{n\rightarrow \infty}\frac{g(n)-f(n)}{\sqrt{n}}=0$, we can make this term arbitrarily small for sufficiently large $n$, that is, 
$$\left|\frac{g(n)-f(n)}{\sqrt{n}}\right| < -\sqrt{\tr[\rho_A(\log\rho_A)^2]-S(\rho_A)^2}\,(\invP{\delta}+\invP{\eps}),$$ and hence the difference in \eqref{eq:difference-term} is strictly positive, yielding 
$$\log M_n < nS(\rho_A) - \sqrt{n(\tr[\rho_A(\log\rho_A)^2]-S(\rho_A)^2)}\,\invP{\delta} + g(n)$$ for sufficiently large $n$. Therefore, $F\left(\psi_{AB}^{\otimes n},\cD_n\left(\Phi_{M_n}^+\right)\right) < 1-\delta,$ and, using the inequality $\sqrt{1-x}\leq 1-\frac{x}{2}$, this implies
\begin{align}
\sqrt{F\left(\psi_{AB}^{\otimes n},\cD_n\left(\Phi_{M_n}^+\right)\right)} < 1-\frac{\delta}{2}.\label{eq:fidelity-bound}
\end{align}
Using the trace distance $d(.,.)$ we now compute:
\begin{align}
\frac{\delta}{2} &< 1- \sqrt{F\left(\psi_{AB}^{\otimes n},\cD_n\left(\Phi_{M_n}^+\right)\right)}\notag\\
&\leq d\left(\psi_{AB}^{\otimes n},\cD_n\left(\Phi_{M_n}^+\right)\right)\notag\\
&\leq d\left(\psi_{AB}^{\otimes n}, (\cD_n\circ\cC_n)(\psi_{AB}^{\otimes n})\right) + d\left((\cD_n\circ\cC_n)(\psi_{AB}^{\otimes n}),\cD_n\left(\Phi_{M_n}^+\right)\right)\notag\\
&\leq d\left(\psi_{AB}^{\otimes n}, (\cD_n\circ\cC_n)(\psi_{AB}^{\otimes n})\right) + d\left(\cC_n(\psi_{AB}^{\otimes n}),\Phi_{M_n}^+\right)\label{eq:trace-bound},
\end{align}
where the first inequality follows from \eqref{eq:fidelity-bound}, the second inequality is the Fuchs-van-de-Graaf inequality \cite{FVdG99}, the third inequality is the triangle inequality for the trace distance and the fourth inequality follows from the monotonicity of the trace distance under CPTP maps. Applying the Fuchs-van-de-Graaf-inequality to the term $d\left(\cC_n(\psi_{AB}^{\otimes n}),\Phi_{M_n}^+\right)$, we obtain 
\begin{align*}
1-\sqrt{F\left(\cC_n(\psi_{AB}^{\otimes n}),\Phi_{M_n}^+\right)} \leq d\left(\cC_n(\psi_{AB}^{\otimes n}),\Phi_{M_n}^+\right)\leq \sqrt{1-F\left(\cC_n(\psi_{AB}^{\otimes n}),\Phi_{M_n}^+\right)}
\end{align*}
and consequently, 
\begin{align}
\lim_{n\rightarrow \infty}d\left(\cC_n(\psi_{AB}^{\otimes n}),\Phi_{M_n}^+\right) = 0\label{eq:trace-limit}
\end{align}
since $\lim_{n\rightarrow \infty} F(\cC_n(\psi_{AB}^{\otimes n}),\Phi_{M_n}^+)=1$ by \Cref{def:reversibility} of the APCP $\lbrace\cC_n\rbrace_{\nin}$. Therefore, for any sequence of LOCC maps $\lbrace\cD_n\rbrace_{\nin}$, we have by \eqref{eq:trace-bound} and \eqref{eq:trace-limit} that
$$\lim_{n\rightarrow \infty}d\left(\psi_{AB}^{\otimes n}, (\cD_n\circ\cC_n)(\psi_{AB}^{\otimes n})\right) > \frac{\delta}{2}.$$ Choose the sequence $\lbrace\tilde{\cD}_n\rbrace_{\nin}$ such that 
$$\max_{\cD_n}F\left(\psi_{AB}^{\otimes n}, (\cD_n\circ\cC_n)(\psi_{AB}^{\otimes n})\right) = F\left(\psi_{AB}^{\otimes n}, (\tilde{\cD}_n\circ\cC_n)(\psi_{AB}^{\otimes n})\right).$$
Then, by another application of the Fuchs-van-de-Graaf inequality, we obtain 
\begin{align*}
\frac{\delta}{2} &< \lim_{n\rightarrow \infty}d\left(\psi_{AB}^{\otimes n}, (\tilde{\cD}_n\circ\cC_n)(\psi_{AB}^{\otimes n})\right)\\
&\leq \lim_{n\rightarrow\infty}\sqrt{1-F\left(\psi_{AB}^{\otimes n}, (\tilde{\cD}_n\circ\cC_n)(\psi_{AB}^{\otimes n})\right)}\\
&= \sqrt{1-\lim_{n\rightarrow\infty}F\left(\psi_{AB}^{\otimes n}, (\tilde{\cD}_n\circ\cC_n)(\psi_{AB}^{\otimes n})\right)}\\
&= \sqrt{1-\lim_{n\rightarrow\infty}\max_{\cD_n}F\left(\psi_{AB}^{\otimes n}, (\cD_n\circ\cC_n)(\psi_{AB}^{\otimes n})\right)},
\end{align*}
which yields
\begin{align*}
\lim_{n\rightarrow\infty}\max_{\cD_n}F\left(\psi_{AB}^{\otimes n}, (\cD_n\circ\cC_n)(\psi_{AB}^{\otimes n})\right) < 1-\frac{\delta^2}{4} < 1,
\end{align*}
since $\delta$ is strictly positive. This proves the claim.
\end{proof}

\subsection{Classical-quantum channels}
The transmission of information through classical-quantum (c-q) channels has been studied by various authors. Wang and Renner~\cite{WR12} obtained bounds on the one-shot capacity in terms of the hypothesis testing relative entropy, whereas in \cite{DMHB13} the bounds were expressed in terms of the smooth max-relative entropy. Both these sets of bounds converged to the Holevo capacity~\cite{Hol98,SW97} in the asymptotic i.i.d.~limit. However, \cite{DMHB13} had the advantage of also yielding the strong converse property of the Holevo capacity. The latter ensures that for transmission rates above the Holevo capacity, information transmission fails with certainty. This was originally proved by Ogawa and Nagaoka \cite{ON99}, and independently by Winter \cite{Win99}.

More recently, in~\cite{TT13}, Tomamichel and Tan derived the second order asymptotics of the capacity of a c-q channel, starting from one-shot bounds expressed in terms of the hypothesis testing relative entropy. Here we obtain
one-shot bounds in terms of an information spectrum relative entropy, and then use its second order asymptotic expansion to recover the result of Tomamichel and Tan~\cite{TT13}. Like \cite{DMHB13} and \cite{TH13}, our result too has the advantage of yielding the strong converse property of the Holevo capacity, in addition to the direct and weak converse parts of the HSW theorem~\cite{Hol98,SW97}. 

Consider a classical-quantum channel $W\colon\cX\to\cD(\cH_B)$, where $\cH_B$ is a finite-dimensional Hilbert space, and $\cX$ denotes the (finite) input alphabet. Suppose that Alice (the sender) wants to communicate with Bob (the receiver) using the channel $W$. To do this, they agree on a finite set of possible messages, labelled by natural numbers from $1$ to $M$. To send the message labelled by $m\in\{1,\ldots,M\}$, Alice has to encode her message into an input signal of the channel, $\varphi(m)\in\cX$, and send it through the channel $W$, resulting in the quantum state $W(\varphi(m))$ at Bob's side. Bob then performs a POVM (positive operator-valued measure) $\Pi\coloneqq \{\Pi_i\}_{i=1}^{M}$, and if the outcome corresponding to 
$\Pi_k$ happens, he concludes that the message with label $k$ was sent. The probability of this event is $\tr\left[W(\varphi(m))\Pi_k\right]$. 
A triple $\cC= (M,\varphi,\Pi)$ defines a code, where:
\begin{itemize}
\item $M\in\mathbb{N}$ is the number of possible messages.
\item $\varphi\colon\,\{1,2,\cdots,M\}\to\cX$ is Alice's encoding of possible messages into input signals of the channel.
\item $\Pi\coloneqq  \{\Pi_m\}_{m=1}^M$ (with $\Pi_m \ge 0$ $\forall\, m =1,2, \ldots M$, and $\sum_{m=1}^M \Pi_m=I$) is a POVM on $\cH_B$, performed by Bob to identify the message (decoding).
\end{itemize}

The \emph{average error probability} $p_e(\cC,W)$ of a code $\cC=(M,\varphi,\Pi)$ is defined as
\begin{align}\label{perr}
p_e(\cC,W)\coloneqq \frac{1}{M}\sum_{i=1}^M \left(1- \tr \left[W(\varphi(i))\Pi_i\right]\right).
\end{align}

\begin{definition}\label{def:cq-one-shot-capacity}
For a given $\eps > 0$, the \emph{one-shot $\eps$-error capacity} $C_\eps^{(1)}(W)$ of a channel
$W$ is defined as
\begin{align*}
C_\eps^{(1)}(W)\coloneqq \sup \{\log M\mid\exists\text{ a code }\cC= (M,\varphi,\Pi)\text{ such that }p_e(\cC, W)\leq \eps\}.
\end{align*}
\end{definition}
Note that it quantifies the maximum number of bits that can be transmitted through a single use of the channel with an average error probability of at most $\eps$.

\subsubsection{One-shot bounds for c-q channels}
Our aim is to give bounds on $C_\eps^{(1)}(W)$ in terms of the information spectrum mutual information. For the input alphabet $\cX$, let $\cP(\cX)$ denote the set of probability distributions on $\cX$. Further, let $\cH_{\cX}$ be a Hilbert space with 
$\dim\cH_{\cX}=|\cX|$ and an orthonormal basis $\{|x\rangle\}_{x\in\cX}$. For any probability distribution $p = \{p(x)\}_{x \in {\cal X}} \in \cP(\cX)$, consider the classical-quantum (c-q) state
\begin{align}\label{rhoxb}
\rho_{XB}(p)\coloneqq \sum_{x\in\cX}p(x)|x\rangle \langle x|\otimes W(x)\quad\text{with }p\in\cP(\cX).
\end{align}
Its reduced states are given by $\rho_B=\sum_{x\in\cX}p(x)W(x)$ and $$\rho_{X}(p)\coloneqq \tr_B\rho_{\cX B}(p)=\sum_{x\in\cX}p(x)|x\rangle \langle x|.$$

\begin{theorem}\label{thm:cq-one-shot}
	Fix $\eps >0$, $\delta, c>0$ and let $\eta > \frac{c}{c+1} \eps$. Then the one-shot $\eps$-error capacity of a c-q channel $W\colon\cX\to\cD(\cH_B)$ satisfies the following bounds:
\begin{align}
 C_\eps^{(1)}(W) &\geq \max_{p \in \cP(\cX)} \uD_s^{\eps - \eta}(\rho_{XB}\|\rho_X\otimes\rho_B) + \log\frac{c}{1+c} + \log \left( \eta - \frac{c}{1+c} \eps\right)\label{eq:cq-one-shot-lower}\\
 C_\eps^{(1)}(W)& \leq \max_{p \in \cP(\cX)}  \min_{\sigma_B}\uD_s^{\eps + \eta} (\rho_{XB}\|\rho_X\otimes\sigma_B)  + \delta - \log \eta,\label{eq:cq-one-shot-upper}
\end{align}
where $\rho_{XB} = \rho_{XB}(p)$ with $p \in \cP(\cX)$ as defined in \eqref{rhoxb}.
\end{theorem}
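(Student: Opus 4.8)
The plan is to prove the two bounds separately, following the same overall strategy as in the earlier one-shot theorems of this section. For the achievability bound \eqref{eq:cq-one-shot-lower}, I would invoke the Hayashi--Nagaoka lemma (\thref{lemHN}) applied to the c-q channel $W$. Fix an input distribution $p\in\cP(\cX)$ so that $\rho_{XB}=\rho_{XB}(p)$ and $\oW = \rho_B = \sum_x p(x)W(x)$; the key observation is that the ensemble average $\oW$ coincides with the $B$-marginal of $\rho_{XB}$, so that $\sum_x p(x)W_x\{W_x < 2^\gamma\oW\} $ can be rewritten, using the block-diagonal structure of $\rho_{XB}$, as $\tr[(\rho_{XB} - 2^\gamma(\rho_X\otimes\rho_B))\{\rho_{XB}\le 2^\gamma(\rho_X\otimes\rho_B)\}]$ (i.e.~as $1$ minus the trace quantity appearing in $\uds$). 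Choosing $\gamma = \uD_s^{\eps-\eta}(\rho_{XB}\|\rho_X\otimes\rho_B)$ makes this sum at most $\eps-\eta$ by \thref{def:information-spectrum-rel-ent}, and plugging into \eqref{eqHN} gives $p_e(\cC) \le (1+c)(\eps-\eta) + (2+c+c^{-1})2^{-\gamma}M$. Setting this equal to $\eps$ and solving for $\log M$ yields $\log M = \gamma + \log\frac{c}{1+c} + \log(\eta - \frac{c}{1+c}\eps)$, exactly as in the dense-coding proof; the condition $\eta > \frac{c}{1+c}\eps$ is what makes the last logarithm well-defined. Maximizing over $p$ gives \eqref{eq:cq-one-shot-lower}.

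For the converse bound \eqref{eq:cq-one-shot-upper}, I would mimic the converse argument of \thref{thm:dense-coding-one-shot}. Suppose $\cC = (M,\varphi,\Pi)$ is any code with $p_e(\cC,W)\le\eps$, and let $p$ be the induced (not necessarily uniform) distribution on $\cX$ given by $p(x) = \frac{1}{M}|\varphi^{-1}(x)|$, with associated c-q state $\rho_{XB}(p)$. For an arbitrary positive operator $\sigma_B\in\cD(\cH_B)$ and arbitrary $\gamma$, I would write, for each message $i$,
\begin{align*}
1 - \tr[W(\varphi(i))\Pi_i] &= 1 - \tr[\Pi_i(W(\varphi(i)) - 2^{-\gamma}\sigma_B)] - 2^{-\gamma}\tr[\Pi_i\sigma_B]\\
&\ge 1 - \tr[\{W(\varphi(i))\ge 2^{-\gamma}\sigma_B\}(W(\varphi(i)) - 2^{-\gamma}\sigma_B)] - 2^{-\gamma}\tr[\Pi_i\sigma_B],
\end{align*}
using \thref{lem:tr-projector}. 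Averaging over $i$, using $\sum_i\Pi_i = \one$ and $\tr\sigma_B = 1$, gives $p_e(\cC,W) \ge 1 - \sum_x p(x)\tr[\{W(x)\ge 2^{-\gamma}\sigma_B\}(W(x) - 2^{-\gamma}\sigma_B)] - 2^{-\gamma}M^{-1}\cdot$(something); here the natural route is to bound the averaged projection-trace term by the corresponding quantity for the c-q state $\rho_{XB}$ against $\rho_X\otimes\sigma_B$ (they are in fact equal because of the block-diagonal structure), namely $\tr[(\rho_{XB} - 2^{-\gamma}\rho_X\otimes\sigma_B)_+]$, and then bound $2^{-\gamma}\sum_x \tr[\Pi_i \sigma_B]$ appropriately to extract a factor $M$. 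Choosing $\gamma = \min_{\sigma_B}\uD_s^{\eps+\eta}(\rho_{XB}\|\rho_X\otimes\sigma_B) - \delta$ forces the trace quantity below $1-(\eps+\eta)$, and a short manipulation (as in the dense-coding converse, comparing $2^{\log M - \gamma}$ with $2^{\log\eta}$) gives $p_e(\cC,W) > \eps$ whenever $\log M > \gamma + \delta - \log\eta$. Taking the contrapositive and the maximum over $p$ yields \eqref{eq:cq-one-shot-upper}.

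The main obstacle I anticipate is the bookkeeping in the converse around the term $2^{-\gamma}\frac{1}{M}\sum_i\tr[\Pi_i\sigma_B]$: unlike the dense-coding setting, the decoding operators $\Pi_i$ are not obviously bounded in a way that directly produces a clean factor of $M$, so I would need to be careful about how the sum over $i$ interacts with $\tr\sigma_B = 1$ and with the reindexing from messages $i\in\{1,\dots,M\}$ to symbols $x\in\cX$ (several messages may map to the same $x$). The cleanest fix is probably to lift the analysis to the c-q state level from the start: define the "pretty good" test $T = \sum_i |\varphi(i)\rangle\langle\varphi(i)|_X \otimes \Pi_i$ (when $\varphi$ is injective; otherwise a minor modification), observe $0\le T\le \one$ on $\cH_X\otimes\cH_B$, note $\tr[\rho_{XB}T] = 1 - p_e(\cC,W)\ge 1-\eps$, and $\tr[(\rho_X\otimes\sigma_B)T] = \frac{1}{M}\sum_i\tr[\Pi_i\sigma_B]\le \frac{|\cX|}{M}$ or, more usefully, relate $\tr[(\rho_X\otimes\sigma_B)T]$ directly to $2^{-D_H}$-type quantities, so that the argument reduces to the hypothesis-testing converse already packaged in \thref{thm:D_s-bounds}. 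Everything else is a routine repetition of the $\eta,\delta$-bookkeeping seen in the proofs of \thref{thm:source-coding-one-shot} and \thref{thm:dense-coding-one-shot}.
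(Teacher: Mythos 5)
Your proposal follows essentially the same route as the paper: the achievability bound is the Hayashi--Nagaoka lemma with $\gamma=\uD_s^{\eps-\eta}(\rho_{XB}\|\rho_X\otimes\rho_B)$, using the block-diagonal structure of $\rho_{XB}$ to identify $\alpha=\sum_x p(x)\tr[\{W(x)\ge 2^\gamma\oW\}(W(x)-2^\gamma\oW)]$ with the trace quantity defining $\uD_s$, and the converse is the projector-substitution argument via \thref{lem:tr-projector} applied to the code's induced c-q state, with $\gamma$ tied to $\max_p\min_{\sigma_B}\uD_s^{\eps+\eta}(\rho_{XB}\|\rho_X\otimes\sigma_B)$. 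Two small remarks: the obstacle you anticipate in the converse is a non-issue, since $\sum_i\Pi_i=\one$ and $\tr\sigma_B=1$ give $\frac{2^\gamma}{M}\sum_i\tr[\Pi_i\sigma_B]=\frac{2^\gamma}{M}$ directly (the paper does exactly this, with no need for the lifted test $T$); and your sign convention in the converse is inconsistent as written --- with the threshold $2^{-\gamma}\sigma_B$ you would need $-\gamma=\min_{\sigma_B}\uD_s^{\eps+\eta}(\rho_{XB}\|\rho_X\otimes\sigma_B)+\delta$, whereas the paper writes the threshold as $2^{+\gamma}\sigma_B$ and takes $\gamma=\max_p\min_{\sigma_B}\uD_s^{\eps+\eta}(\rho_{XB}\|\rho_X\otimes\sigma_B)+\delta$, not $\gamma=\min_{\sigma_B}\uD_s^{\eps+\eta}-\delta$.
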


\begin{proof}[Proof of the lower bound in \Cref{thm:cq-one-shot} (Achievability)]
To establish the lower bound in \eqref{eq:cq-one-shot-lower} of \Cref{thm:cq-one-shot}, we again make use of \Cref{lemHN} by Hayashi and Nagaoka. First, let $p=\{p(x)\}_{x \in \cX} \in \cP(\cX)$ denote the probability distribution for which the maximum in the lower bound of \eqref{eq:cq-one-shot-lower} holds, and let $\rho_{XB} \equiv \rho_{XB}(p)$. For any $\gamma >0$, define the difference operator
\begin{align*} 
\Pi(\gamma) &= \rho_{XB} - 2^\gamma \rho_X \otimes \rho_B \nonumber\\
&= \sum_{x \in \cX} p(x) |x\rangle \langle x| \otimes \left(W(x) - 2^\gamma 
\overline{W}\right),
\end{align*}
where $\overline{W} = \sum_{x \in \cX} p(x) W(x)$.
Then
\begin{align*}
\alpha &\coloneqq  \tr \left[\{ \Pi(\gamma) \ge 0\} \Pi(\gamma) \right]\nonumber\\
&=  \sum_{x \in \cX}p(x) \tr \left[\{ W(x) \ge 2^\gamma \overline{W}\} \left( W(x) - 2^\gamma \overline{W} \right) \right].
\end{align*}
Fix $c>0$ and $\eta>\frac{c}{1+c}\eps$. Choose 
\begin{align*}
	\gamma &= \uD_s^{\eps-\eta}(\rho_{XB}\|\rho_X\otimes\rho_B),
\end{align*}
with $\rho_{XB} \equiv \rho_{XB}(p)$ for the optimizing distribution $p \in \cP(\cX)$ chosen above. For this choice of $\gamma$ we have $\alpha \ge 1 - (\eps - \eta)$. Then \Cref{lemHN} implies the existence of a code $\cC$ of size $M$ such that for any $c>0$,
\begin{align*}
p_e(\cC) & \le (1+c)(1-\alpha) + (2+c+c^{-1}) 2^{-\gamma} M \nonumber\\
& \le  (1+c)(\eps - \eta) + (2+c+c^{-1}) 2^{-\gamma} M.
\end{align*} 
Thus $p_e({\cal C}) \le \eps$ for the choice of $M$ such that
$$(1 +c) (\eps- \eta)  + \frac{(1+c)^2}{c} 2^{-\gamma} M =  \eps,$$
that is, for 
\begin{align}\log M &= 
\gamma + \log\frac{c}{1+c} + \log \left( \eta - \frac{c}{1+c} \eps\right)\nonumber\\
&= \uD_s^{\eps-\eta}(\rho_{XB}\|\rho_X\otimes\rho_B) + \log\frac{c}{1+c} + \log \left( \eta - \frac{c}{1+c} \eps\right).\label{eq:achieve}
\end{align}
Note that the argument of the right-most logarithm yields the condition $\eta>\frac{c}{c+1}\eps$. \Cref{def:cq-one-shot-capacity} of the one-shot $\eps$-error capacity along with \eqref{eq:achieve} implies the lower (achievability) bound \eqref{eq:cq-one-shot-lower}.
\end{proof}

\begin{proof}[Proof of the upper bound in \Cref{thm:cq-one-shot} (Converse)]
To establish the upper bound in \eqref{eq:cq-one-shot-upper} in \Cref{thm:cq-one-shot}, it suffices to prove that for any code ${\cal C}$ of $M$ codewords with 
\begin{align} 
\log M &> \max_{p\in\cP(\cX)}\min_{\sigma_B} \uD_s^{\eps+\eta}(\rho_{XB}\|\rho_X\otimes\sigma_B) + \delta - \log\eta\label{eq:cq-assumption}
\end{align}
we must have $p_e(\cC)>\eps$.

We make the following definitions: Let $\cC=(M, \varphi, \Pi)$ be a code such that \eqref{eq:cq-assumption} holds and $x_m=\varphi(m),\,m=1,\ldots,M$ be the codewords. Let $\rho_m=W(x_m)$ be the output states of the channel and let $p\in\cP(\cX)$ be the uniform distribution on the codewords, i.e., $p(x)=1/M$ if $x=x_m$ for some $m=1,\ldots,M$, and $p(x)=0$ otherwise. Further, let
$\overline{\rho} \coloneqq  \frac{1}{M}\sum_{m=1}^M \rho_m$.
For the chosen $p$, we define 
\begin{align}\label{uni}
{\trho}_{XB}\coloneqq \rho_{XB}(p)=\frac{1}{M}\sum_{m=1}^M |x_m\rangle \langle x_m| \otimes \rho_m.
\end{align}
Let $\sigma_B$ be the state for which 
$$\uD_s^{\eps+\eta}(\trho_{XB}\|\trho_X\otimes\sigma_B) = \min_{\sigma'_B}\uD_s^{\eps+\eta}(\trho_{XB}\|\trho_X\otimes\sigma'_B);$$
then we obtain
\begin{align}
p_e(\cC) &\coloneqq  1 - \frac{1}{M} \sum_{m=1}^M \tr \Pi_m \rho_m\nonumber\\
&=  1 - \frac{1}{M} \sum_{m=1}^M \tr \left[\Pi_m \left(  \rho_m - 2^\gamma 
\sigma_B\right)\right]- \frac{2^\gamma}{M} \sum_{m=1}^M \tr \Pi_m \sigma_B\nonumber\\
&\ge  1 -  \frac{1}{M} \sum_{m=1}^M \tr \left[\{\rho_m \ge 2^\gamma \sigma_B\} \left(\rho_m - 2^\gamma \sigma_B \right)\right]- \frac{2^\gamma}{M},\label{seq}
\end{align}
where we used \Cref{lem:tr-projector} in the last inequality. Choose 
\begin{align}
\gamma &= \max_{p \in \cP(\cX)} \min_{\sigma_B}\uD_s^{\eps + \eta} (\rho_{XB}\|\rho_X\otimes\sigma_B) + \delta\notag\\
&= \max_{p \in \cP(\cX)} \uI_s^{\eps + \eta} (X:B)_\rho + \delta\label{eq:cq-gamma}\\
& \ge \uI_s^{\eps + \eta} (X:B)_{\trho} + \delta\label{eq:cq-gamma-bound}
\end{align}
for some arbitrary $\delta >0$. Here, $\trho$ is the c-q state corresponding to the uniform distribution, defined through \eqref{uni}. 
Hence, 
\begin{align} 
\uI_s^{\eps + \eta} (X:B)_{\trho} = \sup \left\{ \gamma \colon\frac{1}{M} \sum\nolimits_{m=1}^M \tr \left(\rho_m - 2^\gamma \sigma_B \right)_+ \ge 1 - (\eps + \eta)\right\}.
\label{last3}
\end{align} 
Using \eqref{eq:cq-gamma-bound} in \eqref{seq} yields
\begin{align*}
	p_e(\cC) &> \eps+\eta - 2^{\gamma-\log M}\\
	&> \eps + \eta - 2^{\log\eta}\\
	&= \eps,
\end{align*}
where we used \eqref{eq:cq-assumption} and \eqref{eq:cq-gamma} in the second inequality.
\end{proof}

\subsubsection{Second order asymptotics of c-q channels}
In this section we show that our one-shot bounds for the $\eps$-error capacity of a cq-channel from \Cref{thm:cq-one-shot} reproduce its second order asymptotics proved in \cite[Theorem 10]{TT13}. Before we state this result precisely, we make the following definitions:

The capacity $C(W)$ of a c-q channel is defined as $$C(W) = \max_{p\in\cP(\cX)}I(X:B)_\rho$$ with $\rho_{XB}$ as in \eqref{rhoxb}. Furthermore, let $\Pi$ be the set of probability distributions achieving the maximum in $C(W)$, and define 
\begin{align*}
V_\text{\normalfont min} \coloneqq  \min_{p\in\Pi}V(\rho_{XB}\|\rho_X\otimes\rho_B) \qquad\text{and}\qquad V_\text{\normalfont max} \coloneqq  \max_{p\in\Pi}V(\rho_{XB}\|\rho_X\otimes\rho_B).
\end{align*}
Then $V_\eps$ is defined by $$V_\eps = \begin{cases}V_\text{\normalfont min} & \text{if }\eps\in (0,\frac{1}{2})\\ V_\text{\normalfont max} & \text{if }\eps\in [\frac{1}{2},1).\end{cases}$$
\begin{proposition}\label{cq-second-order-asymptotics}
	Let $\eps\in(0,1)$ and assume $V_\text{\normalfont min}>0$. Then the second order expansion of the $\eps$-error capacity $C(n,\eps)$ of a cq-channel $W$ is given by
	$$C(n,\eps) = nC(W) + \sqrt{nV_\eps}\Phi^{-1}(\eps) + o(\sqrt{n}).$$
\end{proposition}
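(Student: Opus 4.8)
The plan is to combine the one-shot bounds of \thref{thm:cq-one-shot}, applied to the product channel $W^{\otimes n}$ so that $C(n,\eps)=C_\eps^{(1)}(W^{\otimes n})$ by \thref{def:blocklength-quantity}, with the second order expansion of the information spectrum relative entropy in \thref{thm:D_s-asymptotics}. Throughout I would set $\eta=\eta'=n^{-1/2}$ and $c=n^{-1}$, and keep $\delta>0$ fixed, so that every additive correction term occurring in \thref{thm:cq-one-shot} -- namely $\log\frac{c}{1+c}$, $\log\!\big(\eta-\frac{c}{1+c}\eps\big)$, $\delta$, and $-\log\eta$ -- is of order $\cO(\log n)\subseteq o(\sqrt n)$; the constraint $\eta>\frac{c}{1+c}\eps$ needed for the achievability bound then holds for all $n$ large enough. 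The hypothesis $V_\text{\normalfont min}>0$ will be used to guarantee that the information variances entering below are strictly positive, which is what makes the Berry--Esseen argument underlying \thref{thm:D_H-asymptotics}, and hence \thref{thm:D_s-asymptotics}, applicable with the stated remainder.

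For the achievability (lower) bound I would restrict the maximisation over $\cP(\cX^n)$ in \eqref{eq:cq-one-shot-lower} to i.i.d.\ distributions $p^{\otimes n}$ with $p$ a first-order capacity-achieving distribution, $p\in\Pi$, and among those pick the $p$ attaining $V_\text{\normalfont min}$ if $\eps<\tfrac12$ and $V_\text{\normalfont max}$ if $\eps\ge\tfrac12$. Since $\invP{\eps}<0$ in the former case and $\invP{\eps}\ge 0$ in the latter, either choice realises the coefficient $V_\eps$. For such a product input the states in \eqref{eq:cq-one-shot-lower} become tensor powers, $\rho_{X^nB^n}=\rho_{XB}^{\otimes n}$, $\rho_{X^n}=\rho_X^{\otimes n}$, $\rho_{B^n}=\rho_B^{\otimes n}$, and then \thref{thm:D_s-asymptotics}(i), together with the identities $D(\rho_{XB}\|\rho_X\otimes\rho_B)=I(X:B)_\rho$ (which equals $C(W)$ because $p\in\Pi$) and $\sk(\rho_{XB}\|\rho_X\otimes\rho_B)^2=V(\rho_{XB}\|\rho_X\otimes\rho_B)$, and \thref{lem:phi-trick} applied to $\invP{\eps-n^{-1/2}}$, yields
\begin{align*}
\uD_s^{\eps-\eta}\big(\rho_{XB}^{\otimes n}\,\big\|\,\rho_X^{\otimes n}\otimes\rho_B^{\otimes n}\big) = nC(W) + \sqrt{n\,V_\eps}\,\invP{\eps} + \cO(\log n).
\end{align*}
Feeding this into \eqref{eq:cq-one-shot-lower} gives $C(n,\eps)\ge nC(W)+\sqrt{n\,V_\eps}\,\invP{\eps}+\cO(\log n)$.

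The converse (upper) bound is where I expect the main difficulty to lie: in \eqref{eq:cq-one-shot-upper} applied to $W^{\otimes n}$ the maximum runs over \emph{all} $p_n\in\cP(\cX^n)$, and the capacity-achieving input at finite block length need not be of product form, so \thref{thm:D_s-asymptotics} cannot be invoked term by term. My plan is to first use \thref{thm:D_s-bounds}(i) to pass from the information spectrum relative entropy to the hypothesis testing relative entropy, $\uD_s^{\eps+\eta}(\cdot\|\cdot)\le D_H^{\eps+\eta}(\cdot\|\cdot)$, obtaining
\begin{align*}
C(n,\eps) \le \max_{p_n\in\cP(\cX^n)}\ \min_{\sigma_{B^n}}\ D_H^{\eps+\eta}\big(\rho_{X^nB^n}\,\big\|\,\rho_{X^n}\otimes\sigma_{B^n}\big) + \cO(\log n),
\end{align*}
and then to invoke the single-letterisation of this multi-letter quantity carried out by Tomamichel and Tan \cite{TT13}. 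Concretely, one takes $\sigma_{B^n}$ to be a suitable product state built from the output of a capacity-achieving input distribution; conditioned on a codeword $x^n$ the channel output $W^{\otimes n}(x^n)$ is a product state, so the relevant log-likelihood ratio decomposes into a sum of \emph{independent} (but not identically distributed) terms, to which the Berry--Esseen theorem applies with total variance controlled by $V_\text{\normalfont min}$ (resp.\ $V_\text{\normalfont max}$), and optimising over the per-letter composition of $p_n$ reproduces $C(W)$ to first order and $V_\eps$ to second order. This shows the right-hand side above equals $nC(W)+\sqrt{n\,V_{\eps+\eta}}\,\invP{\eps+\eta}+\cO(\log n)$; since for fixed $\eps\ne\tfrac12$ the shifted error $\eps+n^{-1/2}$ lies on the same side of $\tfrac12$ as $\eps$ once $n$ is large (the borderline $\eps=\tfrac12$ being harmless since $\invP{\tfrac12}=0$ and the whole second-order term is then $\cO(1)$), a final application of \thref{lem:phi-trick} gives the matching upper bound $C(n,\eps)\le nC(W)+\sqrt{n\,V_\eps}\,\invP{\eps}+\cO(\log n)$. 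Combining the two bounds and using $\cO(\log n)\subseteq o(\sqrt n)$ completes the proof.
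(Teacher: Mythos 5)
Your proposal is correct and follows essentially the same route as the paper: for achievability, restrict \eqref{eq:cq-one-shot-lower} to i.i.d.\ inputs with $c=1/n$, $\eta=1/\sqrt n$ and apply \thref{thm:D_s-asymptotics} together with the choice of distribution from Proposition~11 of \cite{TT13}; for the converse, pass to $D_H^{\eps+\eta}$ via \thref{thm:D_s-bounds}(i) and invoke the single-letterisation of Proposition~13 of \cite{TT13}. The only step you gloss over is the explicit reduction of the multi-letter quantity $\max_{p_n}\min_{\sigma_{B^n}}$ to the per-codeword form $\min_{\sigma_{B^n}}\max_{\mathbf{x}}D_H^{\eps+\eta}(\rho_B^{\mathbf{x}}\|\sigma_{B^n})$ (the paper's inequality \eqref{eq:max-p-max-x} followed by a max-min/min-max swap), which is the form to which the cited result of \cite{TT13} actually applies; your remark about conditioning on a codeword contains the right idea, but this reduction should be stated.
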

\begin{proof}
Note that the condition $V_\text{\normalfont min}>0$ implies $V_\eps > 0$. We first prove the bound
\begin{align}\label{eq:cq-asymptotics-lower-bound}
	C(n, \eps) \geq nC(W) + \sqrt{nV_\eps}\Phi^{-1}(\eps) + o(\sqrt{n}).
\end{align}
To this end, we write $\rho_{X^nB^n}=\sum_{\mathbf{x}\in\cX^n}p(\mathbf{x})|\mathbf{x}\X \mathbf{x}	|\otimes W^{(n)}(\mathbf{x})$ where $W^{(n)}\colon\cX^n\rightarrow\cH_{B^n}$ and $\mathbf{x}\coloneqq x_1\dots x_n$ with $x_i\in\cX$, and employ the one-shot lower bound from \Cref{thm:cq-one-shot} after $n$ rounds, which reads:
\begin{align*}
	C(n, \eps) &\geq \max_{p\in P(\cX^n)}\uD_s^{\eps-\eta}(\rho_{X^nB^n}\|\rho_{X^n}\otimes\rho_{B^n}) + \log\frac{c}{1+c}+\log\left(\eta-\frac{c}{1+c}\eps\right)\\
	&\geq \max_{p\in P(\cX)}\uD_s^{\eps-\eta}(\rho_{XB}^{\otimes n}\|\rho_X^{\otimes n}\otimes \rho_B^{\otimes n}) + \log\frac{c}{1+c}+\log\left(\eta-\frac{c}{1+c}\eps\right)
\end{align*}
Setting $c=\frac{1}{n}$ and choosing $\eta=\frac{1}{\sqrt{n}}$, we apply \Cref{thm:D_s-asymptotics} to obtain
\begin{align*}
	C(n, \eps) &\geq n\max_{p\in P(\cX)}D(\rho_{XB}\|\rho_X\otimes\rho_B) + \sqrt{n}\,\max_{p\in P(\cX)}s(\rho_{XB}\|\rho_X\otimes\rho_B)\Phi^{-1}(\eps) + \cO(\log n)\\
	&= n C(W) + \sqrt{n\max_{p\in P(\cX)}V(\rho_{XB}\|\rho_X\otimes\rho_B)}\Phi^{-1}(\eps) + \cO(\log n).
\end{align*}
According to the proof of Proposition 11 in \cite{TT13}, we can choose a maximizing distribution $\bar{p}\in P(\cX)$ such that $V(\rho_{XB}\|\rho_X\otimes\rho_B)=V_\eps$. Since $\cO(\log n)\subseteq o(\sqrt{n})$, this yields the lower bound \eqref{eq:cq-asymptotics-lower-bound}.

In order to prove 
\begin{align}\label{eq:cq-asymptotics-upper-bound}
	C(n, \eps) \leq nC(W) + \sqrt{nV_\eps}\Phi^{-1}(\eps) + o(\sqrt{n}),
\end{align}
we first rewrite the one-shot upper bound
	\begin{align}\label{eq:cq-one-shot-original}
		C_\eps^{(1)}(W) \leq \max_{p \in \cP(\cX)}  \uI_s^{\eps + \eta} (X:B)_\rho  + \delta - \log \eta
	\end{align}
	from \Cref{thm:cq-one-shot}. To this end, let $\brho^{\bar{x}}_B$ maximize $\max_{x\in\cX}\uD_s^{\eps+\eta}(\rho_B^x\|\sigma_B)$. It follows that
	\begin{align*}
		1-(\eps+\eta) &\leq \sum_{x\in\cX}p(x)\tr(\rho_B^x-2^\gamma\sigma_B)_+\\
		&\leq \tr(\brho_B^{\bar{x}}-2^\gamma\sigma_B)_+\sum_{x\in\cX}p(x)\\
		&= \tr(\brho_B^{\bar{x}}-2^\gamma\sigma_B)_+,
	\end{align*}
and hence, 
\begin{align}\label{eq:max-p-max-x}
	\max_{x\in\cX}\uD_s^{\eps+\eta}(\rho_B^x\|\sigma_B) \geq \uD_s^{\eps+\eta}(\rho_{XB}\|\rho_X\otimes\sigma_B).
\end{align}
We therefore have
\begin{align}
	\max_{p \in \cP(\cX)}  \uI_s^{\eps + \eta} (X:B)_\rho &= \max_{p \in \cP(\cX)} \min_{\sigma_B} \uD_s^{\eps+\eta}(\rho_{XB}\|\rho_X\otimes\sigma_B)\notag\\
	&\leq \min_{\sigma_B} \max_{p\in \cP(\cX)} \uD_s^{\eps+\eta}(\rho_{XB}\|\rho_X\otimes\sigma_B)\notag\\
	&\leq \min_{\sigma_B} \max_{x\in\cX}\uD_s^{\eps+\eta}(\rho_B^x\|\sigma_B),\label{eq:cq-one-shot-rewritten}
	\end{align}
	where we used \eqref{eq:max-p-max-x} in the last inequality. Inserting \eqref{eq:cq-one-shot-rewritten} into \eqref{eq:cq-one-shot-original} yields
	\begin{align*}
		C(n, \eps) &\leq \min_{\sigma_{B^n}} \max_{\mathbf{x}\in\cX^n}\uD_s^{\eps+\eta}(\rho_B^{\mathbf{x}}\|\sigma_{B^n}) + \delta - \log\eta\\
		 &\leq \min_{\sigma_{B^n}} \max_{\mathbf{x}\in\cX^n}D_H^{\eps+\eta}(\rho_B^{\mathbf{x}}\|\sigma_{B^n}) + \delta - \log\eta.
	\end{align*}
	with the second inequality following from the bound $\uds(\rho\|\sigma)\leq D_H^\eps(\rho\|\sigma)$ in \Cref{thm:D_s-bounds}(i). We finally apply the reasoning in the proof of Proposition 13 in \cite{TT13} to the term $$\min_{\sigma_{B^n}} \max_{\mathbf{x}\in\cX^n}D_H^{\eps+\eta}(\rho_B^{\mathbf{x}}\|\sigma_{B^n})$$ and set $\eta=\frac{1}{\sqrt{n}}$ to arrive at  \eqref{eq:cq-asymptotics-upper-bound}.
\end{proof}

\section{Optimal rates for the case of arbitrary resources}
The information spectrum approach (ISA) allows us to obtain expressions for
optimal rates of information-processing tasks involving arbitrary sources,
channels and entanglement resources.. That is, the assumption for the resources employed in the protocols being memoryless (or i.i.d.) is not imposed. As mentioned earlier, the central quantities in the ISA are the spectral divergence rates $\Du(\hrho\|\homega)$ and $\oD(\hrho\|\homega)$ defined in eqs.\reff{infdv} and \reff{supdv} respectively. Here, $\hrho=\lbrace\rho_n\rbrace_{n\in\mathbb{N}}$ denotes an arbitrary sequence of states and $\homega=\lbrace\omega_n\rbrace_{n\in\mathbb{N}}$ denotes an arbitrary sequence of positive semi-definite
operators, with $\rho_n,\omega_n\in\cB(\cH^{\otimes n})$. Further, the following entropic quantities are derived from them:
\begin{enumerate}[(i)]
\item{The quantum inf- and sup-spectral entropy rates
\begin{align*}
\uS(\hrho) \coloneqq - \oD(\hrho||\hI)\quad\text{and}\quad \oS(\hrho) \coloneqq - \Du(\hrho||\hI),
\end{align*}
where $\hI= \{I_n\}_{n\in\mathbb{N}}$ denotes a sequence of identity operators, with $I_n \in \cB(\cH^{\otimes n})$.}
\item{The inf- and sup-spectral conditional entropy rates for an arbitrary 
sequence of bipartite states $\hrho_{AB} \coloneqq \{\rho^n_{AB}\}_{n\in\mathbb{N}}$:
\begin{align*}
\uS(A|B)_{\hrho} \coloneqq - \oD(\hrho_{AB}||\hat{\one}_A \otimes \hrho_B)\quad\text{and}\quad \oS(A|B)_{\hrho} \coloneqq - \Du(\hrho_{AB}||\hat{\one}_A \otimes \hrho_B),
\end{align*}
where $ \hrho_B= \{\rho_B^n\}_{n\in\mathbb{N}}$, with $\rho_B^n$ denoting the reduced state of the state $\rho_{AB}^n$.}
\item{The inf- and sup-spectral mutual information rates for sequences 
of bipartite states $\hrho_{AB} \coloneqq \{\rho^n_{AB}\}_{n\in\mathbb{N}}$:
\begin{align*}
\uI(A:B)_{\hrho} \coloneqq \Du(\hrho_{AB}||\hrho_A \otimes \hrho_B)\quad\text{and}\quad \oI(A:B)_{\hrho} \coloneqq \oD(\hrho_{AB}||\hrho_A \otimes \hrho_B).
\end{align*}}
\end{enumerate}

Starting from our one-shot bounds for the various tasks studied in this paper, we can directly recover the known expressions for corresponding optimal rates for the case of arbitrary resources, as obtained in the ISA. This is done by employing the relations between the information spectrum relative entropies and the spectral divergence rates proved in \Cref{prop:limit-inf-spectrum}. For the tasks considered in this paper, the rates in the ISA are defined as follows:
\begin{definition}~\label{def:ISA-rates}
\begin{enumerate}[{\normalfont (i)}]
\item For a general quantum information source characterized by an arbitrary sequence of states $\hrho=\lbrace\rho_n\rbrace_{n\in\mathbb{N}}$ acting on a corresponding sequence of Hilbert
spaces $\hcH = \lbrace\cH_n\rbrace_{n\in\mathbb{N}}$, the optimal rate of fixed-length source-coding is given by:
$$R_{sc}(\hrho)\coloneqq \lim_{\eps\rightarrow 0}\liminf_{n\rightarrow\infty}\frac{1}{n}m^{(1),\eps}(\rho_n).$$
\item The noisy dense coding capacity for an arbitrary sequence of
bipartite states $\hrho_{AB} \coloneqq \{\rho^n_{AB}\}_{n\in\mathbb{N}}$ is given by:
$$C_{dc}(\hrho_{AB})\coloneqq \lim_{\eps\rightarrow 0}\limsup_{n\rightarrow\infty}\frac{1}{n}C_{dc}^{(1),\eps}(\rho_{AB}^n).$$
\item The distillable entanglement for an arbitrary sequence of bipartite pure states $\hpsi_{AB} \coloneqq \{\psi^n_{AB}\}_{n\in\mathbb{N}}$ is given by:
$$E_D(\hpsi_{AB})\coloneqq\lim_{\eps\rightarrow 0}\limsup_{n\rightarrow\infty}\frac{1}{n}E_D^{(1),\eps}(\psi_{AB}^n).$$
\item The entanglement cost of an arbitrary sequence of bipartite pure states $\hpsi_{AB} \coloneqq \{\psi^n_{AB}\}_{n\in\mathbb{N}}$ is given by:
$$E_C(\hpsi_{AB})\coloneqq\lim_{\eps\rightarrow 0}\liminf_{n\rightarrow\infty}\frac{1}{n}E_C^{(1),\eps}(\psi_{AB}^n).$$
\end{enumerate}
\end{definition}
Theorems~\ref{thm:source-coding-one-shot}, \ref{thm:dense-coding-one-shot},
\ref{thm_ec}, \ref{thm_entdil} and \ref{thm:cq-one-shot}, lead to the following results, which were originally derived in \cite{NH07,HN03,BD06a,BD08,BD11}):
\begin{proposition}~\label{prop:ISA-recovery}
For the sequences $\hrho$, $\hrho_{AB}$ and $\hpsi_{AB}$ given as in \Cref{def:ISA-rates}, the following results hold:
\begin{enumerate}[{\normalfont (i)}]
\item $R_{sc}(\hrho)=\oS(\hrho)$
\item Let $\hLambda= \{\Lambda_n\}_{n\in\mathbb{N}}$ denote a sequence of CPTP maps $\Lambda_n: \cD(\cH_A^{\otimes n})\to  \cD(\cH_{A}^{\otimes n})$, $ \hsigma = \{ \sigma^n_{AB}\}_{n\in\mathbb{N}}$ a sequence of states with $\sigma^n_{AB}\coloneqq \left(\Lambda_n \otimes \id_{B^n} \right)\rho^n_{AB}$, and $d=\dim\cH_A$. Then,
$$C_{dc}(\hrho_{AB}) = \log d - \min_{\hLambda}  \oS(A|B)_{\hsigma}.$$
\item Let $\hrho_A= \{\rho^n_{A}\}_{n\in\mathbb{N}}$ be a sequence of states with $\rho_A^n = \tr_{B^n} \psi_{AB}^n$, the partial trace being over the Hilbert spaces $\cH_B^{\otimes n}$. Then,
$$E_D(\hpsi_{AB}) = \uS(\hrho_A)\quad\text{and}\quad E_C(\hpsi_{AB}) = \oS(\hrho_A).$$
\end{enumerate}
\end{proposition}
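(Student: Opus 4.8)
The plan is to combine the one-shot bounds established in Theorems~\ref{thm:source-coding-one-shot}, \ref{thm:dense-coding-one-shot}, \ref{thm_ec}, \ref{thm_entdil} (and \ref{thm:cq-one-shot}, if one also wants the channel coding statement) with the limiting relations of \thref{prop:limit-inf-spectrum}, taking the appropriate $\liminf$ or $\limsup$ over $n$ followed by $\eps\to 0$. Since all the children entropies in \thref{def:children-entropies} are defined as $\pm\uD_s^\eps$ or $\pm\oD_s^\eps$ of suitable operator pairs, \thref{prop:limit-inf-spectrum} will translate each one-shot entropic bound directly into a statement about the spectral divergence rates $\Du$, $\oD$, and hence into the spectral entropy rates $\uS,\oS$ from \thref{def:ISA-rates}. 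The overall structure is therefore: for each task, sandwich the rate between two expressions, each of which has the same limit.

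**Key steps.** First, for source coding, \thref{thm:source-coding-one-shot}(i) gives
$\oH_s^{\eps+\eta}(\rho_n)+\log\eta \le m^{(1),\eps}(\rho_n) \le \oH_s^\eps(\rho_n)$; dividing by $n$, choosing $\eta$ fixed (so $\tfrac{1}{n}\log\eta\to 0$), taking $\liminf_{n\to\infty}$ and then $\eps\to 0$, and invoking \thref{prop:limit-inf-spectrum}(i) applied to $-\oD_s^\eps(\rho_n\|\one_n)$ — together with the fact that the $\eps$-perturbation $\eta$ washes out in the $\eps\to 0$ limit — yields $R_{sc}(\hrho) = -\Du(\hrho\|\hI) = \oS(\hrho)$. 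For dense coding one uses \thref{thm:dense-coding-one-shot}: both the upper and lower one-shot bounds are of the form $\log d - \min_{\Lambda}\oH_s^{\eps\pm\eta}(A|B)_\sigma + (\text{terms that are } o(n) \text{ or vanish as }\eps\to 0)$; applying \thref{prop:limit-inf-spectrum}(i) to $-\oH_s^\eps = \uD_s^\eps(\sigma_{AB}\|\one_A\otimes\rho_B)$ after the partial trace, and checking that the minimization over $\Lambda$ commutes appropriately with the limits (this needs a short argument: the minimizing $\Lambda_n$ may depend on $n$, matching the $\min_{\hLambda}$ in the statement), gives $C_{dc}(\hrho_{AB}) = \log d - \min_{\hLambda}\oS(A|B)_{\hsigma}$. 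For entanglement concentration, \thref{thm_ec} bounds $E_D^{(1),\eps}(\psi_{AB}^n)$ below by $\uH_s^{\eps-\eta}(\rho_A^n)+\cO(1)$ and above by $-\oH_s^{\eps+\eta}(A|B)_{\psi^n}+\cO(1)$; in the i.i.d.\ case the two bounds coincide to second order, but for arbitrary resources one must show both sandwich bounds have the same $\eps\to 0$ limit. The lower bound gives $\liminf \tfrac{1}{n}\uH_s^{\eps-\eta}(\rho_A^n) = -\oD_s(\cdots) \to \uS(\hrho_A)$ via \thref{prop:limit-inf-spectrum}(i); the upper bound, which involves $-\oH_s^{\eps+\eta}(A|B)_\psi = \uD_s^{\eps+\eta}(\psi_{AB}\|\one_A\otimes\psi_B)$, must be shown to also converge to $\uS(\hrho_A)$, using that for pure states the eigenvalues of $\rho_A^n$ and $\rho_B^n$ coincide. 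Finally, for entanglement dilution, \thref{thm_entdil} gives $\oH_s^{\eps+\eta}(\rho_A^n) - \delta + \log\eta \le E_C^{(1),\eps}(\psi_{AB}^n) \le \oH_s^\eps(\rho_A^n)$; dividing by $n$, taking $\liminf_{n\to\infty}$ then $\eps\to 0$, and applying \thref{prop:limit-inf-spectrum}(i) to $-\oD_s^\eps(\rho_A^n\|\one_{A^n}) = \oH_s^\eps(\rho_A^n)$ — wait, rather \thref{prop:limit-inf-spectrum}(ii) is the one about $\ods$; here one needs the analogue of (i) for the $\liminf$ of $-\oD_s = \uH_s$ — yields $E_C(\hpsi_{AB}) = \oS(\hrho_A)$.

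**Main obstacle.** The delicate point is the interchange of the minimization/maximization over encoding maps (or input distributions, in the cq case) with the double limit $\lim_{\eps\to 0}\liminf_{n}$. For dense coding the statement quantifies over a \emph{sequence} $\hLambda = \{\Lambda_n\}$ of CPTP maps, which is more permissive than a single $\Lambda$ used i.i.d.; one must verify that the one-shot achievability bound, although it only directly produces product encodings, still gives the right rate because the converse (which applies to arbitrary $\Lambda_n$) matches it in the limit — this is exactly the argument structure of \cite{HH01}, and the key is that the spectral sup-conditional entropy rate $\oS(A|B)_{\hsigma}$ is defined for arbitrary sequences. A second, more technical nuisance is bookkeeping the auxiliary parameters $\eta,\eta',\delta,c$ appearing in the one-shot bounds: one must check that each can be sent to its limiting value (e.g.\ $c\to 0$ or held fixed, $\eta\to 0$ after $n\to\infty$) in an order that makes the error terms vanish without destroying the bound — but since all these terms are $O(1)$ in $n$ after the logarithms are taken, dividing by $n$ kills them, so this is routine. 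I would therefore expect the write-up to be short, with the only real content being the $\min$/$\limsup$ interchange in part (ii) and the pure-state eigenvalue identity making the two sandwich bounds in part (iii) agree.
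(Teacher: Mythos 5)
Your overall strategy matches the paper's: \thref{prop:ISA-recovery} is stated without a written proof, the text merely asserting that the one-shot theorems together with \thref{prop:limit-inf-spectrum} ``lead to'' the identities, so ``sandwich each rate between the one-shot bounds, normalize by $n$, take the limits, and invoke \thref{prop:limit-inf-spectrum}'' is indeed the intended route; you also correctly flag the interchange of optimizations with limits and the pure-state eigenvalue identity as points needing care.

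There is, however, a genuine gap that your sketch glosses over --- and it is exactly where you trip in your own parenthetical (``wait, rather \thref{prop:limit-inf-spectrum}(ii) is the one about $\ods$\dots''). \thref{prop:limit-inf-spectrum} controls only the two pairings $\lim_{\eps\to 0}\liminf_n \tfrac1n\uds$ and $\lim_{\eps\to 0}\limsup_n \tfrac1n\ods$. After the sign flip converting the children entropies back into relative entropies, the combinations that actually arise are the \emph{crossed} ones. For source coding, $m^{(1),\eps}(\rho_n)\approx \oH_s^\eps(\rho_n)=-\uds(\rho_n\|I_n)$ (note: $\oH_s^\eps$, not $-\ods(\rho_n\|\one_n)$ as you wrote), so $\liminf_n\tfrac1n m^{(1),\eps}(\rho_n)\approx -\limsup_n\tfrac1n\uds(\rho_n\|I_n)$, and \thref{prop:limit-inf-spectrum}(i) says nothing about this $\limsup$. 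Since $-\limsup\le-\liminf$, the inequality $R_{sc}(\hrho)\le\oS(\hrho)$ survives, but the reverse inequality would require $\lim_{\eps\to0}\limsup_n\tfrac1n\uds(\rho_n\|I_n)\le\Du(\hrho\|\hI)$, which fails for general sequences: for a source with $\rho_n=\alpha^{\otimes n}$ for even $n$ and $\rho_n=\beta^{\otimes n}$ for odd $n$, where $S(\alpha)<S(\beta)$, one finds $\lim_{\eps\to0}\liminf_n\tfrac1n m^{(1),\eps}(\rho_n)=S(\alpha)$ while $\oS(\hrho)=S(\beta)$. The same crossed pairing blocks the converse halves of parts (ii) and (iii). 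So this is not routine bookkeeping: either the $\liminf_n$/$\limsup_n$ in \thref{def:ISA-rates} must be matched to the divergence rate being recovered ($\limsup_n$ for $R_{sc}$ and $E_C$, $\liminf_n$ for $C_{dc}$ and $E_D$, which is also the operationally natural convention), after which your argument does go through essentially as you describe, or the statement as written cannot be proved. Your write-up should make this repair explicit rather than asserting that \thref{prop:limit-inf-spectrum} ``yields'' each identity.
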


\section*{Conclusion}
\addcontentsline{toc}{section}{Conclusion}
We obtain second order asymptotic expansions for optimal rates of information-processing tasks in the i.i.d.~setting, including fixed-length quantum source coding, noisy dense coding, and pure-state entanglement conversions. To do this, we first obtain one-shot bounds for these protocols in terms of quantities derived from the {\em{information spectrum relative entropies}}. These are two variants of the quantity (of the same name) defined in~\cite{TH13}, which have the particular advantage of satisfying the data-processing inequality. We obtain second order asymptotic expansions for these quantities via the bounds relating them to the 
hypothesis testing relative entropy, and the second order asymptotic expansion of the latter derived in~\cite{TH13}. 

We recover the known second order asymptotics of {\em{classical}} fixed-length source coding (obtained by Hayashi \cite{Hay08}) from our corresponding quantum results. Our results for the entanglement conversions provide a refinement of the inefficiency of these protocols. We also prove that the difference in the second order asymptotics of distillable entanglement and entanglement cost results in irreversibility of entanglement concentration proved in \cite{KH13}.

Furthermore, we recover the known results for the optimal rates for these protocols in the more general setting of the Information Spectrum Approach (ISA) from our one-shot results. This is facilitated by the fact that the spectral divergence rates (which are the central quantities of the ISA) can be readily obtained from our information spectrum relative entropies in the asymptotic limit, when the parameter $0<\eps<1$ (on which they depend) is taken to zero. 

The information spectrum relative entropies can be used for obtaining 
one-shot bounds and second order asymptotics for various other tasks, but we leave the analysis of these for future research. 

\section*{Acknowledgements}
\addcontentsline{toc}{section}{Acknowledgements}
We are grateful to Matthias Christandl, Masahito Hayashi and Richard Jozsa for useful
suggestions. We would also like to thank Ben Derrett, Christoph Harrach, Gowtham Raghunath Kurri and Mark Wilde for helpful discussions and comments.

\emergencystretch=3em

\printbibliography[heading=bibintoc]
\end{document}